\newif\iflong\longtrue
\newif\ifproofs\proofstrue
\newcommand{\shorten}[2]{#1}
\newcommand{\shorten}[2]{#2}
\newcommand\figurespace[1]{\vspace*{#1}}
\newcommand{\ds}{\displaystyle}
\newcommand{\ap}{\mathbb{AP}}
\newcommand{\ps}[1]{2^{#1}}
\newcommand{\mc}[1]{\mathsf {MC}_{#1}}
\newcommand{\pzero}{{Player~0}\xspace}
\newcommand{\pone}{{Player~1}\xspace}
\newcommand{\zval}{\mathsf{val}}
\renewcommand{\oval}{\mathsf{v}}
\newcommand{\dist}[1]{{\mathcal D}({#1})}
\newcommand{\preoval}{\tilde{\mathsf{v}}}
\newcommand{\F}{\mathsf{ff}}
\newcommand{\T}{\mathsf{tt}}
\newcommand{\ldbr}{[\![}
\newcommand{\rdbr}{]\!]}
\newcommand{\dbr}[1]{\ldbr{#1}\rdbr}
\newcommand{\bstate}[2]{\dbr{#1}_{\bowtie {#2}}}
\newcommand{\br}[1]{[{#1}]}
\newcommand{\cl}[1]{\mathsf{cl}({#1})}
\newcommand{\pair}[1]{{\langle {#1} \rangle}}
\newcommand{\atom}[1]{\mathsf{#1}}
\newcommand{\U}{\,\mathsf{U}\,}
\newcommand{\eq}[1]{{(\!( {#1} )\!)}}
\newcommand{\lang}[1]{{\mathcal L}({#1})}
\newcommand{\val}[2]{{\mathsf{val}({#1},{#2})}}
\newcommand{\prob}[2]{\mathsf{Prob}_{{#1}}({#2})}
\newcommand{\measure}[4]{\mathsf{Msr}_{{#1}}^{{#2}}({#4},{#3})}
\renewcommand{\succ}{\mathsf{succ}}
\newcommand{\dual}[1]{\mathtt{dual}({#1})}
\newcommand{\turn}[1]{\mathtt{turn}({#1})}
\newcommand{\lorc}[2]{{({#1},{#2}]}}
\newcommand{\init}{{\textnormal{in}}}
\newcommand{\set}[1]{\{{#1}\}}
\newcommand{\pog}{{POG}\xspace}
\newcommand{\stam}[1]{}
\newcommand{\cC}{{\cal C}}
\newcommand{\cG}{{\cal G}}
\newcommand{\cN}{{\cal N}}
\newcommand{\cO}{{\cal O}}
\newtheorem{definition}{Definition}
\newtheorem{theorem}{Theorem}
\newtheorem{corollary}{Corollary}
\newtheorem{example}{Example}
\newtheorem{lemma}{Lemma}
\def\squarebox#1{\hbox to #1{\hfill\vbox to #1{\vfill}}}
\newcommand{\qed}{\hspace*{\fill}
	    \vbox{\hrule\hbox{\vrule\squarebox{.667em}\vrule}\hrule}\smallskip}
\newenvironment{proof}{\begin{trivlist}
\item[\hspace{\labelsep}{\bf\noindent Proof: }]
}{\qed\end{trivlist}}
\newcommand{\mysection}{\section}
\newcommand{\mysubsection}{\subsection}
\begin{document}
\title{Obligation Blackwell Games and p-Automata}

\author{Krishnendu Chatterjee\\
Institute of Science and Technology Austria
\and
Nir Piterman\\
University of Leicester
}

% conference papers do not typically use \thanks and this command
% is locked out in conference mode. If really needed, such as for
% the acknowledgment of grants, issue a \IEEEoverridecommandlockouts
% after \documentclass

%\IEEEspecialpapernotice{(Invited Paper)}

% make the title area
\maketitle

\begin{abstract}
We recently introduced p-automata, automata that read discrete-time
Markov chains and showed they provide an automata-theoretic framework
for reasoning about pCTL model checking and abstraction of discrete
time Markov chains.
We used turn-based stochastic parity games to
define acceptance of Markov chains by a special subclass of p-automata.
Definition of acceptance required a reduction to a
series of turn-based stochastic parity games.
The reduction was cumbersome and complicated and could not
support acceptance by general p-automata, which was left undefined as
there was no notion of games that supported it.

Here we generalize two-player games by adding a structural
acceptance condition called \emph{obligations}.
Obligations are orthogonal to the linear winning conditions that
define whether a play is winning.
Obligations are a declaration that player 0 can achieve a certain
value from a configuration.
If the obligation is met, the value of that configuration for player 0
is~1.
% and is independent of probabilistic choices made in the rest of the game.

One cannot define value in obligation games by the standard mechanism
of considering the measure of winning paths on a Markov chain and
taking the supremum of the infimum of all strategies.
Mainly because obligations need definition even for Markov chains and
the nature of obligations has the flavor of an infinite nesting of supremum 
and infimum operators.
We define value via a reduction to turn-based games
similar to Martin's proof of determinacy of Blackwell games with Borel
objectives.
Based on this value definition we show that obligation games are
determined.
We show that for Markov chains with Borel objectives and obligations,
and finite turn-based stochastic parity games with obligations there
exists an alternative and simpler characterization of the value
function without going through a Martin-like reduction. 
Based on this simpler definition we 
give
an exponential time algorithm to analyze finite turn-based stochastic
parity games with obligations and show that the decision problem of 
winning parity games with obligations is in NP$\cap$co-NP.
Finally, we show that obligation games provide the necessary framework
for reasoning about p-automata and that they generalize the
previous definition.
\end{abstract}

\mysection{Introduction}
\label{section:introduction}
Markov chains are a very important modeling formalism in many areas of
science. 
In computer science, Markov chains form the basis of central techniques such
as performance modeling, and the design and correctness of randomized
algorithms used in security and communication protocols.
Recognizing this prominent role of Markov chains, the formal-methods
community has devoted significant attention to these models, e.g., in
developing model checking for {\em qualitative}
\cite{HSP82,CY95,Var85b} and {\em quantitative} \cite{ASBBS95}
properties, logics for reasoning about Markov chains \cite{HS86,LJ91},
and probabilistic simulation and bisimulation \cite{LS91,LJ91}. 
Model-checking tools such as PRISM \cite{HKNP06} and LiQuor \cite{CB06}
support such reasoning about Markov chains and have users in many
fields of computer science and beyond.

The automata-theoretic approach to verification has proven to be very
powerful for reasoning about systems modeled as Kripke structures.
For example,
it supports algorithms for satisfiability of temporal logics~\cite{EL85b}, 
model checking~\cite{KVW00}, and abstraction~\cite{HKR02}.

We recently introduced p-automata, which are devices that read
Markov chains as input \cite{HPW10}.
We showed that p-automata provide an automata-theoretic framework
for reasoning about pCTL model checking, and abstraction of discrete
time Markov chains.
The definition of p-automata is motivated by pCTL \cite{HJ94}, the de-facto
standard for model checking Markov chains, and alternating tree
automata:
they combine the rich combinatorial structure of alternating
automata with pCTL's ability to quantify the probabilities
of regular sets of paths.
Acceptance of Kripke
structures by alternating tree automata is decided by solving
turn-based games (cf.~\cite{GTW02}).
Similarly, acceptance of Markov chains 
by p-automata is decided by solving turn-based \emph{stochastic}
games. 
However, acceptance of p-automata was defined through a complicated
and cumbersome reduction to solving a \emph{series} of turn-based
stochastic parity games.
Furthermore, this reduction supported only a subclass of p-automata,
which we called \emph{uniform}, and could not be generalized to
unrestricted p-automata.
Intuitively, uniform automata separate measuring probability
of regular path sets and setting thresholds on these probabilities.
For uniform p-automata, we showed how acceptance can  be decided by a
series of turn-based stochastic parity games.
Acceptance for general p-automata could not be defined as there was no
game framework that supported the unbounded interaction between
measuring probability and setting thresholds.

Here, we propose a game notion that supports such interaction.
In order to do that we augment winning conditions in
games by adding a structural acceptance condition called
\emph{obligations}. 
A winning condition is a combination of a classical set of winning
paths and obligations on some of the game configurations.
An obligation is a declaration by \pzero that she can win with a
certain value from the configuration.
Then, in order to be able to derive a non-zero value from a
configuration with obligation, \pzero has to ensure that the measure
of paths that satisfy the winning condition from that configuration
meets her promised value.
If she can do that, the configuration will have the value 1 for
her. 
That is, the value of the configuration does not depend on the measure
of winning paths obtained in the game following the visit to that
configuration, only on whether the obligation is met or not.
In other words, in order to meet an obligation the measure of the union 
of the following sets of paths must satisfy the value constraint of the 
obligation: paths that (i)~reach other obligations that can be met, or 
(ii)~paths that never reach other obligations and satisfy the winning 
condition. 
In addition, paths that visit infinitely many obligations must satisfy the
winning condition as well.

\begin{figure}[bt]
\begin{center}
\input{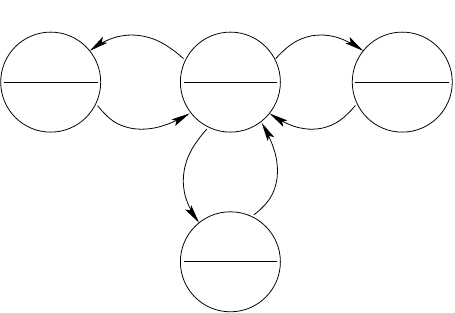_t} % export 50%
\end{center}
\figurespace{-3mm}
\caption{\label{figure:intuition}
  Intuition regarding obligations. 
  Circles denote states of Markov chains. 
  The name of the state is written in the top half and the obligation
  (if exists) is written in the bottom half.
  Probabilities are written next to edges except in case there is a
  single outgoing edge whose probability is $1$.
%  The lower half of the state includes the obligation.
}
\figurespace{-3mm}
\end{figure}
Consider the example in 
Figure~\ref{figure:intuition}. 
%Figure~\ref{figure:intuition and measure}\subref{subfig:intuition}. 
Suppose that every path that visits $s_2$ infinitely often is
rejecting and every accepting path that visits $s_4$ infinitely often
must visit $s_3$ infinitely often.
In addition, $s_2$ has an obligation of more than
$\frac{2}{3}$.
That is, in order for $s_2$ to have a positive value,  the measure of the 
set of paths starting in $s_2$ that satisfies the
acceptance condition must be more than $\frac{2}{3}$.
If the obligation is met, the value of $s_2$ is $1$. Otherwise, the
value of $s_2$ is $0$.
Similarly, $s_3$ has an obligation of at least $\frac{1}{2}$.
The obligation at $s_2$ cannot be met.
Every set of paths starting at $s_2$ with measure more than $\frac{2}{3}$ 
must contain a path that reaches $s_2$ again.
Thus, in order to fulfill the obligation of $s_2$, it has to be
visited again, recursively, where the recursion is unfounded.
As the path that visits $s_2$ infinitely often is rejecting, $s_2$'s
obligation cannot be met.
Thus, $s_2$'s value is $0$.
On the other hand, the obligation of $s_3$ can be met. 
Indeed, the measure of paths that start in $s_3$ and reach $s_3$ again
without passing through $s_2$ (i.e., $\neg s_2$ until $s_3$) is 
exactly $\frac{1}{2}$.
Whenever a path reaches $s_3$ the same obligation needs to be met again.
So the same set of paths is used again.
Let us consider the set of infinite paths obtained by concatenating 
infinitely many finite segments (from obligations to obligations,  
i.e., concatenating paths from $s_3$ to itself without visiting  
$s_2$) 
considered above.
These paths visit $s_3$ infinitely often and  never visit $s_2$, 
and thus satisfy the winning condition.
%Alternatively, the measure of paths that start in $s_3$ and either
%reach another obligation that can be met ($s_3$) or are winning is
%$\frac{1}{2}$, matching exactly the obligation of $s_3$.

%\begin{figure}[bt]
%\begin{center}
%\input{figures/example.pdf_t} % export 50%
%\end{center}
%\figurespace{-3mm}
%\caption{\label{figure:intuition}
%  Intuition regarding obligations. 
%  Circles denote states of Markov chains. 
%  The name of the state is written in the top half and the obligation
%  (if exists) is written in the bottom half.
%  Probabilities are written next to edges except in case there is a
%  single outgoing edge whose probability is $1$.
%%  The lower half of the state includes the obligation.
%}
%\figurespace{-3mm}
%\end{figure}
%
%\begin{figure}[bt]
%\begin{center}
%\input{figures/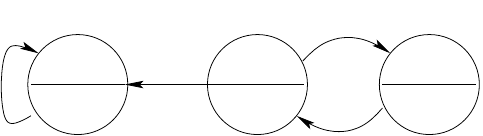_t} % export 50%
%\end{center}
%\figurespace{-3mm}
%\caption{\label{figure:measure zero}A set of measure 0 matters.}
%\figurespace{-5mm}
%\end{figure}

%\begin{figure}[bt]
%\begin{center}
%\subfloat[Intuition regarding obligations.]{\label{subfig:intuition}\input{figures/example.pdf_t}}
%\subfloat[A set of measure 0 matters.]{\label{subfig:measure}\input{figures/measurezero.pdf_t}}
%\figurespace{-2mm}
%\caption{
%  Circles denote states of Markov chains. 
%  The name of the state is written in the top half and the obligation
%  (if exists) is written in the bottom half.
%  Probabilities are written next to edges except in case there is a
%  single outgoing edge whose probability is $1$.
%}
%\label{figure:intuition and measure}
%\figurespace{-5mm}
%\end{center}
%\end{figure}

\begin{figure}[bt]
\begin{center}
\input{} % export 50%
\end{center}
\figurespace{-3mm}
\caption{\label{figure:measure zero}A set of measure 0 matters.}
\figurespace{-5mm}
\end{figure}
When obligations are involved, the value depends on sets of measure
$0$.
%This shows that they cannot be easily converted to ``normal'' Borel
%objectives. 
Consider, for example, the Markov chain in 
Figure~\ref{figure:measure zero}. 
Suppose that the set of winning paths includes all paths.
Then, the obligation of configuration $s_2$ can be met.
Indeed, the probability to reach another obligation that can be met
($s_2$) or never reach obligations and win is $1$.
It follows that the value of configuration $s_1$ is $1$.
However, by removing from the set of winning paths the single path
$(s_1\cdot s_2)^\omega$, whose measure is 0, this changes.
The obligation at configuration $s_2$ can no longer be met. 
%and the
%value of $s_1$ becomes $1/2$ (probability to reach $s_3$ without
%reaching $s_2$).
Every set of paths starting from $s_2$ of measure more than half 
must contain a path that reaches $s_2$ again.
%Indeed, in order to meet the obligation of $s_2$, more than half the
%paths need to be won.
%This implies that in order to use $s_2$ at all, a new obligation on
%$s_2$ must be made.
Thus, in order to meet the obligation of $s_2$, the path $(s_1\cdot
s_2)^\omega$ must be included.
This is a losing path that makes infinitely many obligations.
Hence, the value of $s_2$ is $0$ and the value of
$s_1$ is $\frac{1}{2}$.

These two examples consider the most simple types of interaction,
where no player makes choices and everything is determined by chance.
Blackwell games with Borel objectives are a very general form of graph games.
Blackwell games are two-player games where both players choose
simultaneously and independently their actions (also known as
concurrent games).
From every configuration a choice of actions determines a distribution
over successor configurations.
Borel objectives are Borel sets of infinite paths of configurations.
We now discuss the main challenges involved in giving a formal
definition of the value in general obligation Blackwell games.
%The formal definition of the value in a general obligation Blackwell
%game, however, is quite complicated.
The typical way of defining a value in a two-player game is done by a
sequence of steps:
First, fixing strategies for the two players we get a Markov chain for
which the measure of winning paths is well defined.
Then, fixing a strategy for \pzero one can get a value of this
strategy by considering the infimum over all strategies for \pone.
Finally, the value of \pzero in the game is the supremum of the
values of all her strategies.
All three steps fail for obligations.
First, it is not even clear how to define the value of a Markov chain, 
and the value depends on sets of measure 0 (see
Figure~\ref{figure:measure zero}).
Then, it is not clear that a strategy for \pzero has a well defined
value.
Such a well defined value would depend on independence
between the choice of ``met obligations'' and the choice of
strategy for \pone. %%(i.e., determinacy of the order of choices).
Finally, the nature of obligations means that taking the supermum over
strategies of \pzero does not lead to the real value of the game.
This follows because in some types of Borel games (without
obligations) the game value cannot be attained.
There is no single strategy that can achieve the value of the game
but an infinite sequence of strategies can get arbitrarily close to
the value.
A single obligation game requires a player to commit to achieving
values possibly infinitely many times.
Each obligation could be the result of an infinite sequence of
strategies.
Hence, a single external supremum does not capture this and we need an
infinite nesting of supremum (and infimum) operators.

In order to define the value of Blackwell games with
Borel objectives and obligations we use a reduction to turn-based
games with Borel objectives similar to Martin's 
proof that Blackwell games with Borel objectives are determined
\cite{Mar98}.
Intuitively, we explicitly add the value to the game and make \pzero
prove that a value can be won by showing how it propagates under
probabilistic choices made by both players.
We add a new crucial component that captures
obligations to Martin's proof.
This ingredient, which we call \emph{concession}, captures the notion
that the value for \pzero may not be achieveable but may be
approximated arbitrarily close.
To capture approximation \pone chooses a small concession to grant
\pzero after which \pzero should not have a problem to show that she
can win the required amount.
This reduction defines a value for Blackwell Borel obligation games.

A major issue is then that of well definedness of obligation games.
Well definedness, or determinacy, means that whatever \pzero cannot
avoid losing \pone ensures to win and vice versa. 
Formally, it says that the sum of values in a game for \pzero and
\pone is always $1$.
This is a fundamental property that needs to be established for
games.
For almost all types of games used in verification determinacy has
never been an issue.
This relies on Martin's foundational result that Blackwell games with
Borel objectives are determined \cite{Mar98}.
%Blackwell games are two-player games where both players choose
%simltaneously and independently their actions (also known as
%concurrent games).
%From every configuration a choice of actions determines a distribution
%over successor configurations.
Blackwell games are general enough so that a simple reduction
to them suffices to show determinacy for almost all types of
two-player games.
However, obligations take our games out of scope of Martin's result.
This is apparent as inclusion/removal of a $0$-measure set can change
the value of a game.
%It follows, that an additional construct is required to capture
%obligations. 
In order to show determinacy of Blackwell games with 
Borel objectives and obligations we analyze our reduction to
turn-based games with Borel objectives.
We show that indeed our games are determined. 
%%and that whatever one player loses the other gains.

Our reduction gives the general definition of values and its
analysis gives us the determinacy result.
However, the reduction is not amenable to computational analysis 
as it %%the reduction
constructs an uncountable game. 
For computational analysis, we consider %the cases of 
Markov chains with Borel objectives and obligations and finite turn-based 
stochastic parity games with obligations.
We show that in these cases, we can embed the notion of winning into
the structure of the game by using \emph{choice sets}.
Intuitively, these are the obligations that have value $1$, i.e.,
where the obligation of \pzero is actually met.
This gives rise to a simpler definition where we contrast a strategy
of one player with the strategy of the other player as customary in
definition of games.
We show that this simpler definition, which does not work for the general case, 
coincides with the definition arising from the Martin-like reduction 
for Markov chains and finite turn-based stochastic parity games with obligations. 
%%which is the only definition that can work for the general case.
Based on this direct characterization, we give algorithms
that analyze 
finite turn-based stochastic parity games with obligations.
We show how to decide whether the value in such a game is at least (or
more) than a given value $r\in [0,1]$ in
$\mbox{NP}{\cap}\mbox{co{-}NP}$ and to compute the value in 
exponential time.
The algorithm identifies a general choice set and calls a solver for
finite turn-based stochastic parity games to check the sanity of
the choice set.
Our $\mbox{NP}{\cap}\mbox{co{-}NP}$ bound matches the bounds for the 
special cases of turn-based stochastic reachability games (without
obligations).

We also show that if games with obligations have a finite number of
exchanges between obligations and no-obligations, then the analysis of
the game can be reduced to the analysis of a series of Blackwell games
(with no obligations).

Finally, we return to p-automata and using turn-based stochastic
parity games with obligations we define
acceptance of general p-automata.
We show that the new definition using the obligation games generalizes
acceptance of uniform p-automata as defined in \cite{HPW10}.

\noindent{\em Related works.}
While we considered an automata theoretic approach to capture pCTL an alternative
approach is to consider probabilistic $\mu$-calculus. %%~\cite{HK97b,CIN05}
The problem of considering a probabilistic $\mu$-calculus framework 
to capture pCTL was considered in~\cite{Mio12a,Mio12b}. 
They add an independent choice to turn-based 
stochastic games and show their determinacy and that such games
give a semantics to a probabilistic $\mu$-calculus.
In \cite{MS13}, they show decidability of a fragment of their
probabilistic $\mu$-calculus in 3EXPTIME. 
In contrast, our determinacy result for obligation games is for the 
general class of Blackwell (concurrent) games with Borel objectives,
and our decidability result for turn-based stochastic obligation 
parity games establishes a  $\mbox{NP}{\cap}\mbox{co{-}NP}$ bound.
Comparison of the two types of games does not seem simple.
Our framework for turn-based stochastic obligation parity games 
could also provide better algorithmic analysis for fragments of 
probabilistic $\mu$-calculus of~\cite{Mio12a,Mio12b}.

\mysection{Background}
\label{section:prelim}

%For a set $S$, let ${\mathcal D}(S)= \set{ d: S \to [0,1] \mid
%  \forall s \in S. \exists T_s \subseteq S, \text{such that (i)~$T_s$
%    is finite, (ii)~if $d(s)(t)>0$, then $t \in T_s$, and
%    (iii)~$\sum_{t \in T_s} d(s)(t)=1$} }$ denote the set of discrete
%probability distributions (with finite support) over $S$." 

For a countable set $S$ let $\dist{S}=\{d:S \rightarrow [0,1]~|~
\exists T\subseteq S \mbox{ such that } |T|\in \mathbb{N}, \forall
s \notin T ~.~ d(s)=0 \mbox{ and }\Sigma_{s\in T} d(s)=1\}$ be the set of
discrete probability distributions with finite support over $S$. 
A distribution $d$ is {\em pure} if there is some $s\in S$ such that
$d(s)=1$.

A \emph{countable labeled Markov chain} $M$ over set of atomic
propositions $\ap$ is a tuple $(S,P,L,s^\init)$, where $S$ is a
countable set of \emph{locations}, $P\colon S \rightarrow {\cal
  D}(S)$ is a probabilistic transition, $s^\init \in S$ the 
\emph{initial} location, 
and $L\colon S\rightarrow \ps \ap$ a \emph{labeling function} with $L(s)$
the set of propositions true in location $s$. 
We sometimes also treat $P$ as a function $P:S\times S \rightarrow
[0,1]$, where $P(s,s')$ is $P(s)(s')$. 
Let $\succ(s)$ be the set $\set{s'\in S \mid P(s,s')>0}$ of 
\emph{successors} of $s$.
By definition all Markov chains we consider are \emph{finitely branching},
i.e.\ $\succ(s)$ is finite for all $s\in S$.
%We can also think about $P$ as a function associating a distribution
%in $\dist{S}$ with every state in $S$, that is $P:S \rightarrow
%\dist{S}$. 
We write $\mc\ap$ for the set of all (finitely branching) Markov
chains over $\ap$. 
A \emph{path} $\pi$ from location $s$ in $M$ is an infinite sequence
of locations $s_0s_1\dots$ with $s_0 = s$ and $P(s_i,s_{i+1}) > 0$ for
all $i\geq 0$. 
%For $Y\subseteq S$, let $P(s,Y)$ abbreviate
%$\sum _{s'\in Y}P(s,s')$.
%For a location $s\in S$ we write $M_s=(S,P,L,s)$ for the Markov chain
%obtained from $M$ by using $s$ as initial location.

%For Markov chain $M=(S,P,L,s^\init)$, a  {\em bisimulation}
%\cite{LS91} is 
%an equivalence relation $H\subseteq S \times S$
%such that $(s,s') \in H$ implies
%(i) $L(s)=L(s')$
%and (ii) $P(s,C)=P(s',C)$ for all equivalence classes $C\in S/H$.
%The union of all bisimulations for $M$ is the greatest bisimulation $\sim$;
%locations $s$ and $s'$ are called \emph{bisimilar} iff $s\sim s'$.
%This definition extends to 
%Markov chains $M_1$ and $M_2$ by
%considering bisimilarity of their initial locations in the disjoint 
%union of $M_1$ and $M_2$.

Given a Markov chain $M$ with set of states $S$, an open set in
$S^\omega$ is a set $\set{w}\cdot S^\omega$ for some $w\in S^*$.
A set is \emph{Borel} if it is in the $\sigma$-algebra defined by
these open sets.
The measure of every Borel set $\alpha$ is defined as usual in this
$\sigma$-algebra \cite{Bil08,RP10}.
We denote the measure of a Borel set $\alpha$ as $\prob M\alpha$.

\excludecomment{pctldefinition}
\begin{pctldefinition}
Without loss of generality \cite{FHPW09}, one may define the probabilistic 
temporal logic pCTL \cite{HJ94} in 
``Greater Than Negation Normal Form'':
only propositions can be negated and probabilistic bounds
are either $\geq$ or $>$~--~see Fig.~\ref{fig:pctl}. 
\begin{figure}[bt]
\begin{framed}
\figurespace{-6mm}
{\small
\[
\begin{BNFarray}
   \BNFtop{\phi,\psi}        {pCTL formulas}
   \BNFrow{\atom{a},\neg\atom{a}}         {Atom}
   \BNFrow{\phi\land \psi}   {Conjunction}
   \BNFrow{\phi\lor \psi}   {Disjunction}
   \BNFrow{\br{\alpha}_{\bowtie p}}   {Path Probability}
\end{BNFarray}
\begin{BNFarray}
   \BNFtop{\alpha}      {{}\!\!\!\!\! Path formulas}
   \BNFrow{\X \phi}      {\!\!\! Next}
   \BNFrow{\phi \U \psi} {\!\!\! Until}
   \BNFrow{\phi \W \psi} {\!\!\! Weak Until}
\end{BNFarray}
\]
} % end of \small
\figurespace{-6mm}
\end{framed}
\figurespace{-5mm}
\caption{Syntax of pCTL, where $\atom{a}\in\ap$,
  $p\in [0,1]$, and $\bowtie{}\in \set{>,\geq}$ \label{fig:pctl}}
\figurespace{-9mm}
\end{figure}

Our semantics of pCTL is as in \cite{HJ94}: path formulas $\alpha$ are
interpreted  
as predicates over paths in $M$, and wrap pCTL formulas into ``LTL''
operators for Next, 
(strong) Until, and Weak Until.
The semantics $\denote {}\phi\subseteq S$ of pCTL formula $\phi$ lifts
path formulas to state formulas: $s\in \denote {}{[\alpha]_{\bowtie
    p}}$ iff 
$\prob Ms\alpha$, the probability of the measurable
set \cite{kemeny76} $\paths s{}{\alpha}$ of paths $ss_1s_2\dots$ in $M$
with $ss_1s_2\dots \models \alpha$, satisfies $\bowtie p$.
$M$ satisfies  $\phi$, denoted $M\models \phi$, if $s^\init \in
\denote {}\phi$.
\end{pctldefinition}

\paragraph{Blackwell Games}

A {\em Blackwell game} is $G=(V,A_0,A_1,R,\alpha)$, where $V$ is a
countable set of configurations, $A_0$ and $A_1$ are finite sets of
actions, $\alpha$ is a Borel set defining the winning set of \pzero,
and $R:V \times A_0 \times A_1 \rightarrow \dist{V}$ is a
transition function associating with a configuration $v$ and a pair of
actions for both players a distribution over next configurations with
\emph{finite support}.
A play is an infinite sequence $p=v_0v_1\cdots$ such that for every
$i\geq 0$ there are $a^0_i\in A_0$ and $a^1_i\in A_1$ such that
$R(v_i,a_i^0,a^1_i)(v_{i+1})>0$. 

A {\em strategy} for \pzero is 
$\sigma:V^+\rightarrow \dist{A_0}$. 
A strategy for \pone is similar.
A strategy is \emph{memoryless} if for every $w,w'\in V^*$ and $v\in
V$ we have $\sigma(wv)=\sigma(w'v)$ and
it is \emph{pure} if for every $w\in V^+$ we have $\sigma(w)$
is pure.
Let $\Sigma$ (resp.\ $\Pi$) be the set of all strategies for \pzero
(resp.\ \pone).

Each $(\sigma,\pi) \in \Sigma\times\Pi$ from game $G$ and
configuration $v$ determine a Markov chain with locations $V^+$.
Formally, $v(\sigma,\pi)=(V^+,P,L,v)$, where the labeling function $L$ is 
irrelevant, and for every $w\in V^*$ and $v'\in V$ we set
$P(vwv')=\sum_{a_0 \in A_0} 
\sum_{a_1\in A_1}
\sigma(vwv')(a_{0}) \cdot \pi(vwv')(a_1) \cdot R(v',a_0,a_1)
$.

Sometimes, we may want to start a game from an initial sequence of
configurations, which we call \emph{play prefix} or just
\emph{prefix}.
Let $w=v_0\cdots v_n \in V^+$ be a prefix.
Then $w(\sigma,\pi)$ is the Markov chain $(\set{w}\cdot V^*,P,L,w)$,
where 
$P(wuv)=\sum_{a_0\in A_o}\sum_{a_1\in A_1}
\sigma(wuv)(a_0)\cdot \pi (wuv)(a_1) \cdot R(v,a_0,a_1)$,
for $u\in V^*$ and $v\in V$.
All definitions, generalize to this setting.

The value of $(\sigma,\pi)$ for \pzero from prefix $w\in
\set{v}\cdot V^*$, is
$\prob{w(\sigma,\pi)}{(\set{w}\cdot V^\omega) \cap
    \alpha}$,
denoted $\zval_0(v(\sigma,\pi),w)$.
The value of $w$ for \pzero in $G$ is 
$\ds \sup_{\sigma\in \Sigma}
\ds \inf_{\pi\in \Pi}
\zval_0(w(\sigma,\pi),w)$, denoted $\zval_0(G,w)$.
Dually, the value of $w$ for \pone in $G$, denoted 
$\zval_1(G,w)$, is $\ds \sup_{\pi\in \Pi}
\ds \inf_{\sigma\in \Sigma}
(1-\zval_0(w(\sigma,\pi),w))$.

\begin{theorem}
Let $G$ be a game and $\alpha$ a Borel set.
Then for every $w\in V^+$ we have $\zval_0(G,w)+\zval_1(G,w)=1$ 
{\rm \cite{Mar98}}.
\label{theorem:determinacy blackwell games}
\end{theorem}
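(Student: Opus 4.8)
The plan is to follow Martin's reduction \cite{Mar98}: associate to $G$ a family of turn-based (perfect-information) games with Borel objectives, apply Martin's Borel determinacy theorem to them, and transfer strategies back and forth. For each $r\in[0,1]$ I would build an auxiliary turn-based game $G_r$ whose configurations are pairs $(p,x)$, where $p\in V^+$ is a play prefix of $G$ ending in some $v$ and $x\in[0,1]$ is a ``stake'' that \pzero currently claims she can secure from $p$; the initial configuration is $(w,r)$. From $(p,x)$, abbreviating $\rho_{\mu,a_1}\in\dist{V}$ for the distribution $v'\mapsto\sum_{a_0}\mu(a_0)\,R(v,a_0,a_1)(v')$, \pzero first announces a mixed action $\mu\in\dist{A_0}$ together with a value-splitting function $g$ that assigns to each $v'\in\succ(v)$ a new stake $g(v')\in[0,1]$, subject to the robustness constraint $\sum_{v'}\rho_{\mu,a_1}(v')\,g(v')\ge x$ for every $a_1\in A_1$; such a $g$ always exists (take $g\equiv 1$). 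Then \pone selects a successor $v'$ reachable under $\mu$, and play proceeds to $(p\cdot v',g(v'))$. An infinite play determines a path $\pi$ of $G$ together with a sequence of stakes $x_0=r,x_1,x_2,\dots$, and \pzero is declared the winner iff $\pi\in\alpha$ or $\liminf_n x_n=0$; the point of the second disjunct is that a stake shrinking to $0$ is a claim of value $0$, which \pzero trivially keeps, and this is the device that copes with Borel games whose value is approached but not attained.

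The winning condition of $G_r$ is Borel: the map sending a play of $G_r$ to its path $\pi$ is continuous, so $\{\pi\in\alpha\}$ is the preimage of a Borel set, and $\{\liminf_n x_n=0\}$ is a countable Boolean combination of the clopen sets $\{x_n<1/k\}$. Hence Martin's Borel determinacy theorem applies and $G_r$ is determined. One technical point is that \pzero's move sets in $G_r$ have the cardinality of the continuum ($\dist{A_0}$ and the value-splitting functions); this is dispatched either by using Borel determinacy in its form for games over an arbitrary set of moves, or, as in Martin's treatment, by having \pzero announce only rational data and passing to a limit, which does not change who wins $G_r$.

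Next come the two transfer lemmas. \textbf{Lemma A:} if \pzero wins $G_r$, then $\zval_0(G,w)\ge r$. A $G_r$-history reachable under a winning strategy $\tau$ is determined by its underlying sequence of configurations, so $\tau$ induces a \pzero-strategy $\sigma$ in $G$ (play the announced $\mu$) together with a function recording the stakes. Fix any $\pi\in\Pi$; in the Markov chain $w(\sigma,\pi)$ the stake process $X_n$ is bounded in $[0,1]$, and the robustness constraint makes it a submartingale: $\mathbb{E}[X_{n+1}\mid p]=\sum_{a_1}\pi(p)(a_1)\sum_{v'}\rho_{\mu_p,a_1}(v')\,g_p(v')\ge X_n$. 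Thus $X_\infty=\lim_n X_n$ exists a.s.\ with $\mathbb{E}[X_\infty]\ge r$, and since every $\tau$-play is won, a.s.\ either $\pi\in\alpha$ or $\liminf_n X_n=X_\infty=0$; therefore $\zval_0(w(\sigma,\pi),w)\ge\mathbb{P}[X_\infty>0]\ge\mathbb{E}[X_\infty]\ge r$ (the middle inequality because $X_\infty\le 1$). As $\pi$ was arbitrary, $\zval_0(G,w)\ge r$. \textbf{Lemma B:} if \pone wins $G_r$, then $\zval_1(G,w)\ge 1-r$. For this I would introduce the dual auxiliary game $\bar G_s$ in which \pone is the prover of value $s$ for the complementary (again Borel) objective, with the symmetric rules; the argument of Lemma A applied to $\bar G_s$ shows that \pone winning $\bar G_s$ implies $\zval_1(G,w)=\sup_\pi\inf_\sigma\bigl(1-\zval_0(w(\sigma,\pi),w)\bigr)\ge s$. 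It then remains to convert a winning \pone-strategy in $G_r$ into a winning \pone-strategy in $\bar G_{1-r}$: \pone's adversarial successor choices in $G_r$ keep the play on a successor where \pzero's robust certificate is exhausted, and the complementary bookkeeping is precisely a robust certificate of value $1-r$ for the complementary objective, the infinite-play conditions transferring because the two stake processes vanish together along \pone's play. This matching of the two auxiliary games is the heart of the argument and is exactly the step that Martin's construction is designed to carry out.

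Finally, set $r^\star=\sup\{\,r\in[0,1]\mid\pzero\text{ wins }G_r\,\}$ (with $\sup\emptyset=0$). By Lemma A, $\zval_0(G,w)\ge r$ for every $r<r^\star$, hence $\zval_0(G,w)\ge r^\star$. For every $r>r^\star$, \pzero does not win $G_r$, so by determinacy of $G_r$ \pone wins it, and Lemma B gives $\zval_1(G,w)\ge 1-r$; letting $r\downarrow r^\star$ yields $\zval_1(G,w)\ge 1-r^\star$. Adding, $\zval_0(G,w)+\zval_1(G,w)\ge 1$, while the reverse inequality $\zval_0(G,w)+\zval_1(G,w)\le 1$ is immediate from $\sup_\sigma\inf_\pi(\cdot)\le\inf_\pi\sup_\sigma(\cdot)$, so equality holds. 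The main obstacle is Lemma B, i.e.\ arranging the reduction so that \emph{both} players' winning strategies transfer back to $G$; the uncountable branching of the auxiliary game is a secondary technicality.
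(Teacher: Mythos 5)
The paper does not actually prove this statement: Theorem~\ref{theorem:determinacy blackwell games} is Martin's determinacy theorem for Blackwell games, imported as background with the citation \cite{Mar98} and used as a black box. So the only meaningful comparison is with Martin's own argument and with the Martin-style reduction $\turn{G}$ that the paper builds in Section~\ref{section:obligation blackwell games} and analyses in the proof of Theorem~\ref{theorem:well defined values}. Measured against that, your skeleton is faithful: the stake game $G_r$, the robustness constraint on the value-splitting $g$, the winning condition ``$\pi\in\alpha$ or the stakes vanish,'' the bounded-submartingale argument for your Lemma~A, the $\sup\inf\le\inf\sup$ inequality for the upper bound, and the assembly via $r^\star$ are all as in Martin (your $\liminf$ in place of Martin's $\limsup$ is harmless: it only enlarges \pzero's winning set, Lemma~A still goes through because the stake process converges almost surely, and a \pone win in your game implies one in Martin's).

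The genuine gap is Lemma~B, which you state but do not prove: you assert that \pone's ``complementary bookkeeping is precisely a robust certificate of value $1-r$'' and defer to ``Martin's construction.'' That is the hard half of the theorem (Martin's Lemma~1.4), and it does not follow from symmetry, because a winning strategy for \pone in $G_r$ \emph{reacts} to \pzero's announcement $(\mu,g)$, whereas in the dual game $\bar G_{1-r}$ the same player must \emph{produce} an announcement before seeing anything. The missing mechanism --- visible in the second half of the paper's proof of Theorem~\ref{theorem:well defined values} --- is to set, for each successor $v'$, $u(v')=\inf\set{g(v') \mid (\mu,g)\mbox{ is a legal announcement to which \pone's winning strategy answers }v'}$, to show via the minimax theorem that the profile $u$ has one-shot value at most the current stake (otherwise \pzero could announce $u$ minus a small slack and defeat \pone's supposedly winning strategy), and then to use $v'\mapsto 1-u(v')-\epsilon$ as the legal announcement in $\bar G_{1-r}$, managing the slacks so they stay within the initial gap. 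Without this step your argument establishes only $\zval_0(G,w)\ge r^\star$ and $\zval_0(G,w)+\zval_1(G,w)\le 1$, not the equality.
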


The value $\zval_1(G,w)$ can be also
obtained by considering the game
$\dual{G}=(V,A_1,A_0,\dual{R},V^\omega{\setminus}\alpha)$, where 
$\dual{R}(v',a_1,a_0)=R(v',a_0,a_1)$.
Formally, $\zval_1(G,w)=\zval_0(\dual{G},w)$. 
By definition, for every Markov chain $M$ and every measurable set $
\alpha\subseteq V^\omega$ we have $\prob M\alpha = 1-\prob
M{V^\omega{\setminus}\alpha}$. 

%------------------------------------------------------------------------- 
\paragraph{Turn-Based Stochastic Games}

A turn-based stochastic game $G$ is
$((V,E),(V_0,V_1,V_p),\kappa,\alpha)$ with the following 
components. 
\begin{compactitem}
\item
$V$ is a countable set of configurations.
\item
$E \subseteq V^2$ is a set of edges such that for every $v\in V$ we
have $|\{v' ~|~ (v,v')\in E\}|$ is finite.
\item
The triplet $(V_0,V_1,V_p)$ partitions $V$ so that $V_0$ is the set of
\pzero configurations, $V_1$ is the set of \pone configurations, and
$V_p$ is the set of probabilistic configurations.
\item
$\kappa: V_p\rightarrow \dist{V}$ is such that $\kappa(v)(v')>0$ if
and only if $(v,v')\in E$.
\item
$\alpha$ is a Borel set as before.
\end{compactitem}
A play is an infinite sequence $v_0v_1\cdots$ such that for all $i\in
\mathbb{N}$ we have $(v_i,v_{i+1})\in E$.
A strategy for 
\pzero is a function $\sigma \colon V^*\cdot V_0 \rightarrow
\dist{V}$ such that for all $w\in V^*$ and $v\in V_0$ we have
$\sigma(wv)(v')>0$ implies $(v,v')\in E$.
%Play $v_0v_1\dots$ is consistent with
%strategy $\sigma$ if $v_{i+1}=\sigma(v_0\dots v_{i})$ whenever $v_i
%\in V_0$.
Strategies for \pone are defined analogously. 
The type of strategy is determined by the type of game and no
confusion will arise.
As before $(\sigma,\pi)\in \Sigma\times\Pi$ determine a Markov chain
$w(\sigma,\pi)$. 
Then, the value of \pzero from prefix $w$ is 
$
\zval_0(G,w)=
\sup_{\sigma\in\Sigma}\inf_{\pi\in\Pi}
\prob{w(\sigma,\pi)}{\alpha}$
%Dually, the value of \pone from prefix $w$ is 
and the value of \pone from prefix $w$ is
$
\zval_1(G,w)=
\sup_{\pi\in\Pi}\inf_{\sigma\in\Sigma}(1-\prob{w(\sigma,\pi)}{\alpha})$.
%As before we denote these values as $\zval_0(G,w)$ and
%$\zval_1(G,w)$. 

A turn-based stochastic game can be seen as a Blackwell game whose
configurations are of the following types:
\begin{compactitem}
\item
$v$ is a \pzero configuration if for every $a_0\in A_0$ and
$a_1,a'_1\in A_1$ we have $R(v,a_0,a_1)=R(v,a_0,a'_1)$ and
$R(v,a_0,a_1)$ is pure.
\item
$v$ is a \pone configuration if for every $a_0,a'_0\in A_0$ and
$a_1\in A_1$ we have $R(v,a_0,a_1)=R(v,a'_0,a_1)$ and $R(v,a_0,a_1)$
is pure.
\item
$v$ is a probabilistic configuration if for every $a_0,a'_0\in A_0$
and $a_1,a'_1\in A_1$ we have $R(v,a_0,a_1)=R(v,a'_0,a'_1)$.
\end{compactitem}

\begin{corollary}
Let $G$ be a turn-based stochastic game, $\alpha$ a Borel set, and
$w\in V^+$.
Then, $\zval_0(G,w)=1-\zval_1(G,w)$.
\label{corollary:determinacy stochastic games}
\end{corollary}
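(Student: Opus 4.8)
The plan is to exhibit $G$ as a Blackwell game $\hat G$ to which Theorem~\ref{theorem:determinacy blackwell games} applies, and then to check that $G$ and $\hat G$ have the same value for each player from every prefix. This uses exactly the correspondence between the three configuration types spelled out just before the statement. Assume first that the out-degree of $G$ is bounded by some $k\in\mathbb N$ (the unbounded case is reduced to this below). Put $A_0=A_1=\{1,\dots,k\}$ and define $\hat R\colon V\times A_0\times A_1\to\dist{V}$ by: if $v\in V_0$ then $\hat R(v,a_0,a_1)$ is the pure distribution on the $a_0$-th $E$-successor of $v$ (say, on the first successor when $a_0$ exceeds the out-degree of $v$), independently of $a_1$; if $v\in V_1$ then symmetrically $\hat R(v,a_0,a_1)$ is pure on the $a_1$-th successor, independently of $a_0$; and if $v\in V_p$ then $\hat R(v,a_0,a_1)=\kappa(v)$ for all $a_0,a_1$. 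Then $\hat G=(V,A_0,A_1,\hat R,\alpha)$ is a genuine Blackwell game whose set of plays coincides with that of $G$.

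Next I would prove $\zval_0(\hat G,w)=\zval_0(G,w)$ and $\zval_1(\hat G,w)=\zval_1(G,w)$. From a turn-based strategy $\sigma\colon V^*\cdot V_0\to\dist{V}$ for \pzero, define a Blackwell strategy $\hat\sigma\colon V^+\to\dist{A_0}$ that at a history $uv$ with $v\in V_0$ plays the pushforward of $\sigma(uv)$ along the successor-indexing map, and plays (say) action $1$ otherwise; conversely a Blackwell strategy restricts to a turn-based one by reading it only at histories ending in $V_0$ and pushing $\dist{A_0}$ forward to $\dist{V}$, forgetting its behaviour elsewhere. The key observation is that in $\hat G$ the induced Markov chain $w(\hat\sigma,\hat\pi)$ depends on $\hat\sigma$ only through its values at histories ending in $V_0$ and on $\hat\pi$ only through its values at histories ending in $V_1$ — because $\hat R$ ignores $a_1$ at $V_0$-configurations, ignores $a_0$ at $V_1$-configurations, and ignores both at $V_p$-configurations. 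Hence $w(\hat\sigma,\hat\pi)$ and $w(\sigma,\pi)$ are the same Markov chain on $\set{w}\cdot V^*$ whenever $\sigma,\pi$ are the turn-based restrictions of $\hat\sigma,\hat\pi$, so $\prob{w(\hat\sigma,\hat\pi)}{(\set{w}\cdot V^\omega)\cap\alpha}=\prob{w(\sigma,\pi)}{(\set{w}\cdot V^\omega)\cap\alpha}$. Since the restriction maps are onto the turn-based strategy sets and the forgotten coordinates are value-irrelevant, $\sup_{\hat\sigma}\inf_{\hat\pi}$ of the left side equals $\sup_\sigma\inf_\pi$ of the right side, giving $\zval_0(\hat G,w)=\zval_0(G,w)$; the identical computation with the roles of the players swapped and $1-(\cdot)$ inserted gives $\zval_1(\hat G,w)=\zval_1(G,w)$.

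Finally I would apply Theorem~\ref{theorem:determinacy blackwell games} to $\hat G$: for every $w\in V^+$, $\zval_0(\hat G,w)+\zval_1(\hat G,w)=1$; substituting the two equalities just established yields $\zval_0(G,w)+\zval_1(G,w)=1$, i.e.\ $\zval_0(G,w)=1-\zval_1(G,w)$. For the general (finitely branching but unbounded out-degree) case, I would first replace each configuration of out-degree $m$ by a binary tree of at most $m$ fresh configurations of the same owner — probabilistic configurations becoming short chains of binary probabilistic configurations with renormalised probabilities — to obtain an equivalent game of out-degree $2$, pulling $\alpha$ back along the continuous contraction that deletes the inserted internal paths (which keeps the objective Borel and preserves the relevant measures), and a routine strategy translation shows the values of the subdivided game and of $G$ agree on corresponding prefixes; the bounded-branching argument then applies. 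The only real work in the whole proof is this bookkeeping — verifying that the induced Markov chains genuinely coincide through the encoding (and, in the unbounded case, through the subdivision) and that the measure of $\alpha$ transfers; everything substantive is already contained in Martin's theorem, which is why this is stated as a corollary.
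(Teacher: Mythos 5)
Your proof is correct and follows exactly the route the paper intends: the text immediately preceding the corollary exhibits a turn-based stochastic game as a Blackwell game with the three special configuration types, and the corollary is then just Theorem~\ref{theorem:determinacy blackwell games} together with the (routine but worth writing out, as you do) check that the strategy correspondence preserves the induced Markov chains and hence the values. Your extra care about unbounded out-degree (the paper only guarantees finite, not uniformly bounded, branching, while Blackwell games have finite action sets) addresses a detail the paper silently elides, and your binary-subdivision fix for it is sound.
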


When $V_p=\emptyset$ the game is \emph{simple} or just 
\emph{turn based}.
For turn-based games the set $V$ does not have to be countable.
In this case it is enough to consider pure strategies, which implies
that a pair $(\sigma,\pi)\in \Sigma\times\Pi$ induces a unique play
$w(\sigma,\pi)$.
Then, the value of \pzero from prefix $w$ is either $1$ or $0$.
Equivalently, \pzero wins from $w$ if there is a strategy $\sigma$
such that for every strategy $\pi$ we have $w(\sigma,\pi)\in \alpha$.
Otherwise, \pone wins from $w$.
In this case, we write $W_0=\set{w ~|~ \mbox{\pzero wins from $w$}}$ and
$W_1= \set{w ~|~ \mbox{\pone wins from $w$}}$. 

\begin{theorem}
Let $G$ be a turn-based game and $\alpha$ a Borel set.
Then $W_0\cap W_1=\emptyset$ and $W_0 \cup W_1=V^+$.
{\rm \cite{Mar75}}
\label{theorem:determinacy turn based games}
\end{theorem}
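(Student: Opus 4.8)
The plan is to prove the two inclusions separately, the first being routine and the second being the real content.

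For $W_0\cap W_1=\emptyset$: since $V_p=\emptyset$, a pair of (pure) strategies $(\sigma,\pi)$ induces a \emph{unique} play $w(\sigma,\pi)$. Suppose for contradiction $w\in W_0\cap W_1$, and choose a winning strategy $\sigma$ for \pzero and a winning strategy $\pi$ for \pone, both from $w$. Then the single play $w(\sigma,\pi)$ must lie in $\alpha$ (because $\sigma$ is winning for \pzero) and in $V^\omega\setminus\alpha$ (because $\pi$ is winning for \pone), which is absurd. This step uses nothing about $\alpha$ beyond its being a set.

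For $W_0\cup W_1=V^+$ --- the substantive direction --- I would fix a prefix $w=v_0\cdots v_n$ and recast the game from $w$ as a Gale--Stewart game on a tree. Let $T$ be the tree of all play prefixes extending $w$ that respect $E$; its infinite branches are exactly the plays from $w$, and each node is owned by \pzero or \pone according to whether its last configuration lies in $V_0$ or $V_1$ (owners need not alternate strictly, but this is the standard setting for Gale--Stewart games on trees, and padding with dummy moves makes ownership literally alternate if one insists). Under the canonical map sending a branch to the corresponding $\omega$-sequence of configurations, $\alpha$ pulls back to a set $\hat\alpha$ of branches of $T$; since the pullback of a basic open set $\{u\}\cdot V^\omega$ is open in the branch space, and pullback commutes with complementation and countable unions, $\hat\alpha$ is Borel in the branch space exactly when $\alpha$ is Borel. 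Martin's Borel determinacy theorem \cite{Mar75} then yields a winning strategy for one of the two players in the tree game, which transfers directly to a strategy witnessing $w\in W_0$ or $w\in W_1$. The remark that $V$ need not be countable causes no difficulty: for a fixed root $w$ the reachable tree $T$ is countable because $E$ is finitely branching, and in any case Martin's theorem imposes no cardinality bound on the move sets.

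The main obstacle is Borel determinacy itself, which I would simply invoke from \cite{Mar75}. A self-contained argument would require: (i) determinacy of open --- hence, by duality, closed --- games, via the Gale--Stewart argument, where the set of nodes from which the safety player can remain in the closed set forever is obtained as a greatest fixed point and its complement is the opponent's attractor; and (ii) a transfinite induction up the Borel hierarchy using Martin's \emph{unraveling}: given a game whose winning set is $\Sigma^0_{\xi+1}$, one builds an auxiliary ``covering'' game with a closed (or strictly lower-level) winning condition together with a covering map that transfers determinacy back down, taking inverse limits of these covering games at limit stages. Establishing that the covering games are well founded and that the covering maps compose coherently through limit ordinals is the genuinely hard ingredient; the tree reduction, the bookkeeping that $\hat\alpha$ stays Borel, and the disjointness direction are all routine. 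Since the surrounding development uses the theorem only as a black box, I would cite \cite{Mar75} for (ii) rather than reproduce the unraveling construction.
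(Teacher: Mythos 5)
Your proposal is correct and matches the paper's treatment: the paper states this result purely as background, with no proof beyond the citation of Martin's Borel determinacy \cite{Mar75}, which is exactly the black box your argument reduces to. The additional material you supply --- the disjointness argument via the unique play induced by a pair of pure strategies, and the recasting as a Gale--Stewart game on the (finitely branching, hence countable from a fixed root) tree of play prefixes --- is the standard routine packaging and is consistent with how the paper uses the theorem.
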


When $w\in W_0$ we also write $\zval_0(G,w)=1$ and $\zval_1(G,w)=0$.
Dually, when $w\in W_1$ we write $\zval_0(G,w)=0$ and $\zval_1(G,w)=1$.

We say that $\alpha$ is derived from a parity condition
$c:V\rightarrow [0..k]$ if for every play $p=v_0v_1v_2\cdots \in
V^\omega$ we have $p\in \alpha$ iff $\ds\liminf_{n\to \infty} c(v_n)$ is even.

\begin{theorem}
Consider a finite game $G$, where $\alpha$ is derived from a parity
condition, ${\bowtie}\in\{>,\geq\}$, and $r$ is a rational.
\begin{compactitem}
\item
If $G$ is a Blackwell game, whether $\zval_0(G,w)\bowtie r$ and
$\zval_1(G,w) \bowtie r$ can be decided in PSPACE {\rm \cite{Cha07}}.
\item
If $G$ is a turn-based stochastic game, the values $\zval_0(G,w)$ and
$\zval_1(G,w)$ can be computed in exponential time and whether
$\zval_i(G,w) \bowtie r$ can be decided in
$\mbox{NP}{\cap}\mbox{co{-}NP}$ {\rm \cite{CJH04}}.
\item
If $G$ is a turn-based stochastic game, there is a
memoryless strategy 
$\sigma$ achieving $G$'s value {\rm \cite{CJH04}}.
That is:
\begin{center}
$\inf_{\pi\in\Pi}\prob{w(\sigma,\pi)}{\alpha}=\zval_0(G,w)$
\end{center}
\item
If $G$ is a turn-based game, whether $w\in W_0$ can be decided in 
UP$\cap$co-UP {\rm \cite{Jur98}}.
\item
If $G$ is a countable turn-based game
% and attractors to every set $P_i=\{ v :
%c(v)=i\}$ converge in a countable number of steps 
then 
there are pure-memoryless strategies $\sigma$ and $\pi$ such that
$\sigma$ is winning from every configuration $v\in W_0$ and $\pi$ is
winning from every configuration $v\in W_1$ {\rm \cite{Tho97}}.
\end{compactitem}
\label{theorem:complexity parity games}
\label{theorem:memoryless determinacy of parity games}
\end{theorem}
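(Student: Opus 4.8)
The statement is a compendium of known results, one item per cited source, so the plan is not to reprove anything but to attribute each item and recall the underlying argument where it clarifies how the result is used later. The first item is invoked from \cite{Cha07}: the value vector of a finite concurrent (Blackwell) parity game is the unique solution of a bounded‑size system expressible in the first‑order theory of the reals — one encodes the nested‑fixpoint ($\mu$‑calculus) characterization of the value function, with the one‑step operator written as a polynomial‑size formula quantifying over mixed actions in $A_0,A_1$ — so that a query $\zval_0(G,w)\bowtie r$ becomes a closed formula over the reals and is decidable in PSPACE. The dual query for $\zval_1(G,w)$ follows by passing to $\dual{G}$, exactly as noted before Corollary~\ref{corollary:determinacy stochastic games}.

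For the second and third items the plan is to cite \cite{CJH04}, where finite turn‑based stochastic parity games are shown to admit \emph{pure memoryless} optimal strategies for both players (via reduction to $2\frac{1}{2}$‑player games with simpler objectives together with a martingale/strategy‑improvement argument). This immediately yields the NP$\cap$co‑NP bound for the threshold question: guess a pure memoryless strategy for the relevant player, fix it to obtain a finite Markov decision process, and evaluate the induced one‑player parity objective in polynomial time by linear programming; iterating over the finitely many pure memoryless strategies (a strategy‑improvement or binary‑search loop) computes the exact rational value in exponential time. The fourth item is cited from \cite{Jur98}: pure memoryless determinacy of (non‑stochastic) parity games puts ``$w\in W_0$'' in NP — guess \pzero's memoryless strategy and check in polynomial time that the induced one‑player game is won everywhere on the relevant region — and the small‑progress‑measures construction supplies a canonical, efficiently checkable witness, placing the problem in UP, with the symmetric argument giving co‑UP. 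The fifth item is cited from \cite{Tho97}: pure memoryless determinacy of parity games on arbitrary, possibly infinite, game graphs, proved by transfinite induction on the number of priorities using attractor strategies, with the observation that each of the two resulting strategies is uniform over all of $W_0$ (resp.\ $W_1$).

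The only point worth flagging is that none of these results transfer to games \emph{with} obligations — as the introduction stresses, inclusion or removal of a measure‑zero set already changes values. They are invoked here purely as black‑box subroutines for the finite, obligation‑free games that the later algorithmic sections call (e.g.\ when checking the sanity of a choice set). Consequently there is no real obstacle in this theorem: the ``proof'' is correct attribution, and the substantive work of the paper lies in reducing obligation games to these subroutines.
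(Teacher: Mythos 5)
Your proposal is correct and matches the paper's treatment: the theorem is stated purely as background, and the paper gives no proof of its own beyond the citations to \cite{Cha07}, \cite{CJH04}, \cite{Jur98}, and \cite{Tho97}, so correct attribution plus standard sketches is exactly what is called for. The only nitpick is that the UP$\cap$co-UP bound in \cite{Jur98} is obtained via reduction to mean-payoff/discounted games and uniqueness of their value vectors rather than via small progress measures (which appeared later), but this does not affect the validity of the item.
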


%A strategy $\sigma:V^+\rightarrow \dist{A_0}V$ is pure-memoryless if
%for every $w,w' \in V^*$ and every $v\in V$ we have $\sigma(w\cdot
%v)=\sigma(w'\cdot v)$ and for some $v'\in V$ we have $\sigma(w\cdot
%v)(v')=1$. 
%
%\begin{theorem}
%For every turn-based parity game $G$ there is a pure-memoryless
%strategy $\sigma$ such that $\sigma$ is winning from every
%configuration $v\in W_0$ \cite{EJ91}.
%\label{theorem:memoryless determinacy of parity games}
%\end{theorem}

\mysection{Obligation Blackwell Games}
\label{section:obligation blackwell games}
We introduce obligation Blackwell games.
These games extend Blackwell games by having a winning condition that
includes a winning set (as in normal Blackwell games) and a
set of obligations.
Intuitively, a play is winning for \pzero if it belongs to the winning
set.
However, whenever meeting an obligation, \pzero has to make sure that
the value of the game in that configuration satisfies the obligation.
If the obligation can be met, the value for \pzero at the configuration
is~1. 
%That is, it is independent of actual measure of winning paths
%enforced by the strategy following the visit to that configuration.

An obligation Blackwell game (OBG for short) is $G=(V,A_0,A_1,R,\cG)$,
where $V$, $A_0$, $A_1$, and $R$ are like in Blackwell games.
The goal $\cG=\pair{\alpha,O}$, where $\alpha \subseteq V^\omega$
is a Borel set as for Blackwell games and 
$O:V \rightarrow (\set{\geq, >} 
\times [0,1]) \cup \set{\bot}$.
The obligation function $O$ associates with some configurations the
value $\bot$ saying that there is no obligation associated
with this configuration. 
With other configurations $O$ associates an obligation ${>}r$ or
${\geq}r$ stating that \pzero can use this configuration (i.e., she
derives a non-zero value when getting to this configuration and this
non-zero value is 1) only if
she can ensure that the value she can get from this configuration
onwards meets the obligation.
%It follows that, {\pzero}'s strategy must ensure that the values attained
%in all obligation configurations satisfy the obligations. 
It follows that, recursively, \pzero has to ensure that every obligation 
configuration satisfies the obligation requirement with plays in $\alpha$.
For configuration $v$, if $O(v)\neq \bot$ we call $v$ an
\emph{obligation} configuration and if $O(v)=\bot$ we call $v$ a
\emph{non-obligation} configuration.

As mentioned, the usual approach to defining values in games by
considering the measure of winning paths on a Markov chain and taking
the supremum of infimum of strategies of the respective players does
not work.
This is mainly for two reasons.
First, the definition of value over a Markov chain needs defining in
its own right.
Second, if a value is not achievable by a single strategy (which
is the case in Blackwell games with Borel objectives \emph{without}
obligations) the supremum over the value 
of strategies is not sufficient to capture the complexity of
obligations and (infinitely many) nested supremum (and infimum)
operators are required.
Intuitively, the value of a configuration in an
obligation game is the value in the modified
game where \pzero's objective is to either reach obligations she can
fulfil or never reach obligations and fulfil the Borel winning conditions.
If during this interaction a new obligation is met then this
obligation needs to be fulfilled in the same way.
If infinitely many obligations are met along a path, this path has to
be winning according to the Borel objective.
We present the formal definition of the value for the players in an
OBG through a reduction to a turn-based game similar to
Martin's proof that Blackwell games are determined \cite{Mar98}.

We generalize the function $O$ to apply to prefixes, where
$O(wv)=O(v)$ for every $wv\in V^+$ and similarly for $R$, and $\succ$.
Consider the game $\turn{G} = ((\hat{V},E),(V_0,V_1),\hat{\alpha})$, where
the components of $\turn{G}$ are given in
Figure~\ref{figure:components of turn G}.
\begin{figure*}[bt]
{\small
$$
\begin{array}{|l|}
\hline
\begin{array}{l l}
\hat{V}= & 
(V^+ \times \lorc{0}{1}) \quad \cup \quad 
(V^+ \times \lorc{0}{1} \times \set{\epsilon} )
%\set{(v,r),(v,r,\epsilon) ~|~ v\in V \mbox{ and } r\in \lorc{0}{1}
%} 
\quad \cup \\
&
% How does a play proceed?
% From a new >= obligation, player 1 has to give up an epsilon,
% then we end up in a (v,r,\epsilon) cofiguration from which player 0
% chooses a distribution over the actions A_0 and says what are the
% values it can derive from successors
% then player 1 chooses the next state and next value.
% from a > obligation player 0 chooses the distribution over A_0 and
% says what are the values it can derive from successors
\left \{ (w,r,f) \left | 
\begin{array}{l}
w\in V^+, r\in \lorc{0}{1}, f:\succ(w) \rightarrow [0,1], \mbox{ and }
%\\
\exists d_0 \in \dist{A_0}~. \forall d_1 \in \dist{A_1}~. \\
\ds
\sum_{a_0\in A_0} \sum_{a_1\in A_1} 
\sum_{v'\in \succ(w)} d_0(a_0) \cdot d_1(a_1) \cdot R(v,a_0,a_1)(v')
\cdot f(v') > r
\end{array}
\right .
\right \} 
\end{array}
\\\hline
\begin{array}{l l}
V_0= & \set{(w,r) ~|~ \mbox{either $O(w)=\bot$ or $\exists
  r'.O(w)={>}r'$}} \quad \cup \quad %\\ 
%&
  \set{(w,r,\epsilon) % ~|~ v\in V \mbox{ and } r\in \lorc{0}{1}
}
\end{array}
\\\hline
\begin{array}{l l}
 V_1 = & \set{(w,r) ~|~ \exists r'.O(w)={\geq}r'} \quad \cup \quad
 \set{(w,r,f)}
\end{array}
\\\hline
\begin{array}{l l l}
E = &
\set{((w,r),(w,r'',\epsilon)) ~|~ O(w)={\geq} r' \mbox{ and } 0<r''<r' }
\quad \cup \quad 
\set{((w,r,f),(w\cdot v',f(v'))) ~|~ f(v')>0}
& \cup  \\
& \set{((w,r),(w,r',f)) ~|~ \mbox{either }
O(w)=\bot \mbox{ and } r'=r 
\mbox{ or }
O(w)={>}r'  
} 
\quad \cup \quad
\set{((w,r,\epsilon),(w,r,f)) } 
\end{array}
\\\hline
\begin{array}{l l}
\hat{\alpha} = & \set{ p \in \hat{V}^\omega ~|~ p{\Downarrow}_{V} \in
  \alpha}, %& 
\mbox{where $p{\Downarrow}_{V}$ is the limit of the
    projection of $p$ on $V^+$.}
%\multicolumn{2}{r}{
%\mbox{
%the $w$ element in the configurations in $V^+ \times \lorc{0}{1}$.}
%}
\end{array}
\\\hline
\end{array}
$$
\figurespace{-5mm}
\caption{\label{figure:components of turn G}Components of $\turn{G}$.}
}
\figurespace{-3mm}
\end{figure*}

There are three types of configurations.
Configurations of the form $(w,r)$, where $O(w)=\bot$,
are illustrated on the left in 
Figure~\ref{figure:structure of turn G}.
Such configurations are \pzero configurations, where she claims that
the value of prefix $w$ is \emph{more than} $r$.
From such configurations \pzero chooses a successor configuration
$(w,r,f)$, where $f$ is a function associating a value to every
successor of $w$ that proves that indeed the value at $w$ is greater
than $r$.
Configurations of the form $(w,r)$, where $O(w)= {>}r'$, are
illustrated in the middle in 
Figure~\ref{figure:structure of turn G}.
\begin{figure}[bt]
%\begin{wrapfigure}{r}{0.5\linewidth}
\begin{center}
\input{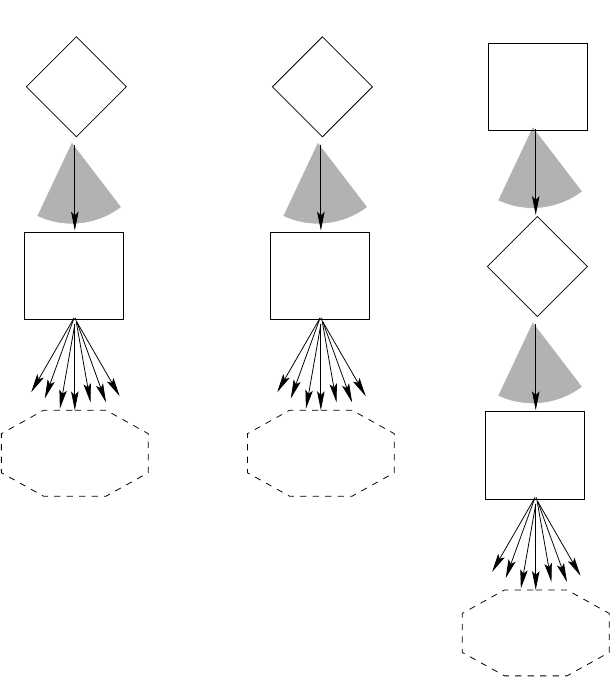_t} % export 50%
\end{center}
\figurespace{-3mm}
\caption{\label{figure:structure of turn G}The structure of
  $\turn{G}$.
Diamonds are \pzero configurations and rectangles are \pone
configurations.
Shaded areas represent a continuum of edges, where every edge is
associated with an entry from the continuous domain written next to
the edge.
Fans of discrete edges represent finite choice, where every edge
is associated with a value from the domain written next to
the edge.
A dashed
octagon is either a \pzero or \pone configuration depending
on $O(v')$.}
\figurespace{-3mm}
%\end{wrapfigure}
\end{figure}
Such configurations are \pzero
configurations, where, ignoring the value $r$, she has to prove that
the value is greater than $r'$. Thus, she proceeds as above but for
the value $r'$ instead of $r$.
Configurations of the form $(w,r)$, where $O(w)= {\geq} r'$, are 
illustrated on the right in 
Figure~\ref{figure:structure of turn G}.
Such configurations are \pone
configurations, where, acknowledging that it may be impossible for
\pzero to achieve exactly $r'$ but possible to achieve every $r''<r'$,
\pone grants \pzero a concession and moves to a configuration
$(w,r'',\epsilon)$ from which, as above, \pzero chooses a successor
configuration $(w,r'',f)$.
Notice, that $\epsilon$ is used as a syntactic symbol signifying that
a concession has been granted, it is not a value.
Then from configurations of the form $(w,r,f)$, \pone chooses
which successor $v'$ of $w$ to follow and proceeds to 
$(w\cdot v',f(v'))$. 
Finally, we note that as $\alpha$ is a Borel set, then $\hat{\alpha}$
is also a Borel set.  

By Theorem~\ref{theorem:determinacy turn based games} for every prefix
$w$ and for every value $r\in \lorc{0}{1}$ from configuration $(w,r)$ 
in the game $\turn{G}$ either \pzero wins or else \pone wins.

\begin{lemma}
For every OBG $G$ and every prefix $w$, if \pzero wins from
$(w,r)$ in $\turn{G}$, she wins from every configuration $(w,r')$ for
$r'<r$. 
If \pone wins from $(w,r)$ in $\turn{G}$, she wins from every
configuration $(w,r')$ for $r'>r$.
\label{lemma:value closed downwards}
\end{lemma}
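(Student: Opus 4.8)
The plan is to prove both halves by constructing an explicit winning strategy in $\turn{G}$ for the claim at $(w,r')$ out of a winning strategy at $(w,r)$, exploiting the monotone structure of the edge relation in the value coordinate. The key observation is that from a configuration $(u,s)$ the set of legal \pzero-moves (and, for the concession and obligation configurations, the set of legal \pone-moves) is \emph{monotone in $s$}: lowering $s$ only \emph{adds} outgoing edges, never removes them. Concretely, from $(u,s)$ with $O(u)=\bot$ the moves go to $(u,s,f)$ for \emph{every} $f$ with $\mathrm{minimax}(f)>s$, and decreasing $s$ enlarges this set; the analogous statements hold for the $O(u)={>}r'$ and $O(u)={\geq}r'$ cases (where for the last case it is \pone who picks the concession $r''$, but the constraint on $r''$ does not involve $s$ at all, only $r'$). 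Moreover, the winning condition $\hat\alpha$ and all successor configurations reachable after the first step do not depend on the initial value $r$ at all --- once \pzero has committed to an $f$, the game continues from configurations of the form $(w\cdot v', f(v'))$, which are oblivious to $r$.

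\textbf{First half (\pzero).} Suppose $\sigma$ is a winning \pzero strategy from $(w,r)$ and let $r'<r$. I define a strategy $\sigma'$ from $(w,r')$ as follows: \pzero plays $\sigma'$ so that the very first move from $(w,r')$ mimics the first move $\sigma$ would make from $(w,r)$ --- this is legal precisely because lowering the value only enlarges the set of legal first moves (in the $O(w)=\bot$ case \pzero plays to $(w,r',f)$ instead of $(w,r,f)$ for the same $f$ chosen by $\sigma$; in the $O(w)={>}r'$ and $O(w)={\geq}r'$ cases the value $r$ resp.\ $r'$ is simply ignored after the first move, so $\sigma'$ copies $\sigma$ verbatim). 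After this first move the two plays reach \emph{identical} configurations of the form $(w,r',f)/(w,r,f)$ (in the $\bot$ case, the only difference is in the second coordinate which is irrelevant to the rest of the play), or literally identical configurations in the obligation cases; from then on $\sigma'$ simply copies $\sigma$. Since any resulting play, projected down via ${\Downarrow}_V$, is the same whether we started the bookkeeping from $(w,r)$ or $(w,r')$, and since $\sigma$ was winning, $\sigma'$ is winning from $(w,r')$.

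\textbf{Second half (\pone).} This is the dual and is where the slightly more careful argument lives. Suppose $\pi$ is a winning \pone strategy from $(w,r)$ and let $r'>r$. The issue is that at $(w,r')$ \pzero may have \emph{more} legal first moves than at $(w,r)$: when $O(w)=\bot$ she can now play to $(w,r',f)$ for any $f$ with $\mathrm{minimax}(f)>r'$, a strictly smaller set than with threshold $r$ --- wait, the point is the reverse: $\{f : \mathrm{minimax}(f)>r'\}\subseteq\{f:\mathrm{minimax}(f)>r\}$, so \pzero has \emph{fewer} moves at $(w,r')$, which only helps \pone. So \pone's strategy $\pi'$ from $(w,r')$ responds to any \pzero move to $(w,r',f)$ exactly as $\pi$ responds to $(w,r,f)$ (legal, since that $f$ is also a legal move at $(w,r)$); in the obligation cases $\pi$ is copied verbatim, with the observation that in the $O(w)={\geq}r'$ case the value coordinate plays no role in \pone's concession move. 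After the first exchange the two plays are at identical configurations, and $\pi'$ copies $\pi$ thereafter; since projected plays coincide and $\pi$ wins, $\pi'$ wins.

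\textbf{The main obstacle} is purely bookkeeping: verifying that in each of the three configuration types the ``first move'' can indeed be transported between the two value coordinates as claimed, and then confirming that all subsequent configurations are genuinely independent of the initial $r$ (so that ``copy $\sigma$ / $\pi$ thereafter'' is well-defined and legal). No fixpoint or measure-theoretic subtlety arises --- Theorem~\ref{theorem:determinacy turn based games} already hands us well-definedness of ``wins,'' so the entire proof is a strategy-transfer argument driven by monotonicity of the edge relation in the value component.
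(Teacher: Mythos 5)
Your proposal is correct and takes essentially the same route as the paper, which proves the lemma in one line by observing that the winning strategy can simply be reused (\pzero shows $r$ in order to show $r'<r$, and dually for \pone); your write-up just makes explicit the monotonicity of the edge relation in the value coordinate and the independence of the continuation from the initial value, which is exactly what the paper's terse argument implicitly relies on.
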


\newcommand{\proofoflemmavaluecloseddownwards}{
\begin{proof}
This can be done by reusing the strategy in $\turn{G}$.
Essentially, in order to show $r'<r$ \pzero can show $r$.
Dually, in order to show that $r'>r$ is infeasible it is enough to
show that $r$ is infeasible.
\end{proof}
}
\shorten{
\proofoflemmavaluecloseddownwards
}{}

\shorten{
So winning values for \pzero are downward closed and winning values
for \pone are upward closed.
It follows that there is a unique value below which
\pzero wins and above which \pone wins.
}
{
From Lemma~\ref{lemma:value closed downwards} it follows that there is
a unique value below which \pzero wins and above which \pone wins.
}

\begin{corollary}
For every OBG $G$ and every prefix $w$, there is a value
$s(G,w) \in [0,1]$
such that for every $r'<s(G,w)$ \pzero wins from $(w,r')$ in $\turn{G}$ and
for every $r''>s(G,w)$ \pone wins from $(w,r')$ in $\turn{G}$.
\end{corollary}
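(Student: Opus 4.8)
The plan is to let $s(G,w)$ be the supremum of the set of values from which \pzero wins. Concretely, I would set
\[
s(G,w) \;=\; \sup\, \set{r \in \lorc{0}{1} \;\mid\; \mbox{\pzero wins from $(w,r)$ in $\turn{G}$}},
\]
with the convention that the supremum of the empty set is $0$. Since the set in question is contained in $[0,1]$, this $s(G,w)$ is a well-defined element of $[0,1]$, as demanded by the statement.

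First I would check the clause about \pzero. Given $r' < s(G,w)$ with $r' \in \lorc{0}{1}$, the definition of the supremum yields some $r \in \lorc{0}{1}$ with $r' < r \le s(G,w)$ from which \pzero wins in $\turn{G}$; by Lemma~\ref{lemma:value closed downwards} she then also wins from $(w,r')$. Next the clause about \pone. Given $r'' > s(G,w)$ (so $r'' \in \lorc{0}{1}$), \pzero cannot win from $(w,r'')$: if she did, $r''$ would belong to the set whose supremum is $s(G,w)$, forcing $r'' \le s(G,w)$, a contradiction. Now $\turn{G}$ is a turn-based (non-stochastic) game with $V_p = \emptyset$, and $\hat{\alpha}$ is Borel as noted after Figure~\ref{figure:components of turn G}; hence Theorem~\ref{theorem:determinacy turn based games} applies and from $(w,r'')$ either \pzero or \pone wins. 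Since \pzero does not, \pone does, which is what we need.

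There is essentially no real obstacle here: the content is already packed into Lemma~\ref{lemma:value closed downwards} and Theorem~\ref{theorem:determinacy turn based games}. The only points that deserve a moment's care are the degenerate cases — $s(G,w)=0$, handled by the $\sup\emptyset = 0$ convention when \pzero wins from no $(w,r)$, and $s(G,w)=1$, where the \pone clause is vacuous since there is no $r'' > 1$ in $\lorc{0}{1}$ — together with the observation that Martin's Borel determinacy for turn-based games (Theorem~\ref{theorem:determinacy turn based games}) does not require the configuration set to be countable, so it may legitimately be invoked for the uncountable game $\turn{G}$.
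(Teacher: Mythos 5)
Your proof is correct and follows essentially the same route the paper intends: the paper notes just before Lemma~\ref{lemma:value closed downwards} that determinacy of the turn-based game $\turn{G}$ (Theorem~\ref{theorem:determinacy turn based games}) gives a winner from every $(w,r)$, and then derives the corollary from the downward/upward closure of the winning values, which is exactly your supremum argument. Your attention to the degenerate cases and to the applicability of Borel determinacy to the uncountable game is a sensible bit of extra care but does not change the argument.
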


Notice that \pzero may or may not win from $s(G,w)$.
For a prefix $w$ we define the value of $w$ in $G$ as follows.
%Let 
%$$s(G,v)=\sup\set{0,r ~|~ \mbox{ \pzero wins $\turn{G}$ from $(v,r)$}}.$$
If $O(w)=\bot$ then the value of $w$ in $G$, denoted $\zval_0(G,w)$,
is $s(G,w)$. 
If $O(w)={\bowtie}r$ then $\zval_0(G,w)$ is 1 iff $s(G,w) >
r$ or $s(G,w)=r$ and ${\bowtie}={\geq}$ and it is 0 otherwise.

We now turn to the issue of determinacy.
In order to show that the value of \pzero and \pone sum to 1, we
define the dual game.
Dualization of a game consists of changing the roles of the two
players and switching the goal to the complement.
Here, the complementation of the goal is slightly more complicated
than usual.
Consider a game $G=(V,A_0,A_1,R,\cG)$, where $\cG=\pair{\varphi,O}$.
The dual game $\dual{G}=(V,A_1,A_0,\dual{R},\dual\cG)$,
where $\dual{R}(v,a_1,a_0)=R(v,a_0,a_1)$,
$\dual\cG=\pair{V^\omega \setminus \varphi,\dual{O}}$, and
$\dual{O}$ is defined below.
\begin{center}
$
\dual{O}(v) = 
\left \{
\begin{array}{l l}
\bot & \mbox{If } O(v)=\bot \\
{>}1-r & \mbox{If } O(v)={\geq}r \\
{\geq} 1-r & \mbox{If } O(v)={>}r
\end{array}
\right.
$
\end{center}
Intuitively, if in $G$ \pzero has the obligation to achieve more than
$r$ with the set $\varphi$, then the dual player (\pzero in
$\dual{G}$) has the obligation to achieve at least $1-r$ with the goal
set $V^\omega \setminus \varphi$.
Syntactically, $\dual{\dual{G}}=G$.
We use the dual game to define the value for \pone.
Formally, let $\zval_1(G,w)$ denote the value of $w$ in $\dual{G}$.
We prove that obligation games are determined by showing that the
sum of values of a prefix $w$ in $G$ and in $\dual{G}$ is 1.

\newcommand{\lemmavonneumannnessofvalues}{
\begin{lemma}
For every OBG and every prefix $w$, there are distributions
$d_0\in \dist{A_0}$ and $d_1\in \dist{A_1}$ such that
$$s(G,w) = 
\ds
\sum_{a_0\in A_0}
\ds
\sum_{a_1\in A_1} 
\ds
\sum_{v'\in\succ(v)}
d_0(a_0)\cdot d_1(a_1)\cdot R(w,a_0,a_1)(v') \cdot \zval_0(G,w\cdot v').
$$
Furthermore, for every $d'_0\in \dist{A_0}$ and $d'_1\in\dist{A_1}$
the following hold. 
$$
\begin{array}{l}
s(G,w) \leq 
\ds
\sum_{a_0\in A_0}
\ds
\sum_{a_1\in A_1} 
\ds
\sum_{v'\in\succ(v)}
d_0(a_0)\cdot d'_1(a_1)\cdot R(w,a_0,a_1)(v') \cdot \zval_0(G,w\cdot v').
\\
s(G,w) \geq 
\ds
\sum_{a_0\in A_0}
\ds
\sum_{a_1\in A_1} 
\ds
\sum_{v'\in\succ(v)}
d'_0(a_0)\cdot d_1(a_1)\cdot R(w,a_0,a_1)(v') \cdot \zval_0(G,w\cdot v').
\end{array}
$$
\label{lemma:von neumann-ness of values}
\end{lemma}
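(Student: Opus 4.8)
The plan is to reduce the whole statement to von Neumann's minimax theorem applied to a single finite matrix game, together with a one-step analysis of $\turn{G}$ that identifies the value of that game with $s(G,w)$. Set $x_{v'}:=\zval_0(G,w\cdot v')$ for $v'\in\succ(w)$ and consider the zero-sum game on the compact simplices $\dist{A_0}\times\dist{A_1}$ with bilinear payoff
\[
M(d_0,d_1)=\sum_{a_0\in A_0}\sum_{a_1\in A_1}\sum_{v'\in\succ(w)}
d_0(a_0)\cdot d_1(a_1)\cdot R(w,a_0,a_1)(v')\cdot x_{v'}.
\]
By von Neumann's theorem $\nu:=\max_{d_0}\min_{d_1}M(d_0,d_1)=\min_{d_1}\max_{d_0}M(d_0,d_1)$, and there are $d_0^{*},d_1^{*}$ with $M(d_0^{*},d_1')\ge\nu$ for all $d_1'$ and $M(d_0',d_1^{*})\le\nu$ for all $d_0'$. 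Three facts will be used. (i) For $f\colon\succ(w)\to[0,1]$ one has $(w,r,f)\in\hat{V}$ iff $\mathrm{minimax}(f):=\max_{d_0}\min_{d_1}\sum_{a_0,a_1,v'}d_0(a_0)d_1(a_1)R(w,a_0,a_1)(v')f(v')>r$ (the inner min attained by compactness), and $\mathrm{minimax}(x)=\nu$. (ii) $f\mapsto\mathrm{minimax}(f)$ is monotone and $\mathrm{minimax}(f-\delta)=\mathrm{minimax}(f)-\delta$ for constants $\delta$, since each $R(w,a_0,a_1)$ is a distribution. (iii) For an obligation prefix $u$, reading off the definitions of $\turn{G}$ and of $\zval_0$, $\zval_0(G,u)=1$ iff \pzero wins $(u,\rho)$ in $\turn{G}$ for every $\rho$, and $\zval_0(G,u)=0$ iff \pone wins $(u,\rho)$ for every $\rho$. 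Once $s(G,w)=\nu$ is established, taking $d_0=d_0^{*}$, $d_1=d_1^{*}$ turns the displayed equation into $s(G,w)=\nu=M(d_0^{*},d_1^{*})$, while the two inequalities of the ``Furthermore'' part are exactly optimality of $d_0^{*}$ for the maximiser and of $d_1^{*}$ for the minimiser. So everything reduces to $s(G,w)=\nu$, which I prove for $O(w)=\bot$; the cases $O(w)={>}r'$ and $O(w)={\geq}r'$ are analogous, running the one-step analysis below with the gadget threshold $r'$ in place of $r$.

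For $s(G,w)\ge\nu$, by the corollary following Lemma~\ref{lemma:value closed downwards} it suffices to show \pzero wins $(w,r)$ in $\turn{G}$ for every $r<\nu$. Fix $\delta\in(0,\nu-r)$ and set $f(v'):=x_{v'}$ if $w\cdot v'$ is an obligation prefix and $f(v'):=\max(0,x_{v'}-\delta)$ otherwise, so $f(v')\ge x_{v'}-\delta$ for all $v'$ and hence, by (i) and (ii), $\mathrm{minimax}(f)\ge\mathrm{minimax}(x)-\delta=\nu-\delta>r$, so $(w,r,f)\in\hat{V}$ and \pzero may move there. If \pone replies with a $v'$ (necessarily $f(v')>0$): if $w\cdot v'$ is a non-obligation prefix then $f(v')<x_{v'}=s(G,w\cdot v')$, so \pzero wins $(w\cdot v',f(v'))$ by the same corollary; if $w\cdot v'$ is an obligation prefix then $f(v')>0$ forces $x_{v'}=1$, so \pzero wins $(w\cdot v',f(v'))$ by fact (iii). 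Concatenating the move $(w,r)\to(w,r,f)$ with these winning sub-strategies is a winning strategy for \pzero from $(w,r)$.

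For $s(G,w)\le\nu$, I show \pone wins $(w,r)$ for every $r>\nu$. Whatever $(w,r,f)\in\hat{V}$ \pzero moves to, $\mathrm{minimax}(f)>r>\nu=\mathrm{minimax}(x)$, so by monotonicity there is $v'$ with $f(v')>x_{v'}\ge0$; then $f(v')>0$, so $v'$ is a legal reply, and either $w\cdot v'$ is a non-obligation prefix with $f(v')>x_{v'}=s(G,w\cdot v')$, whence \pone wins $(w\cdot v',f(v'))$, or $w\cdot v'$ is an obligation prefix with $0\le x_{v'}<f(v')\le1$, forcing $x_{v'}=0$, whence \pone wins $(w\cdot v',f(v'))$ by fact (iii). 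Answering every \pzero move this way and then playing a winning sub-strategy gives a winning strategy for \pone from $(w,r)$, so $s(G,w)\le r$; as $r>\nu$ was arbitrary, $s(G,w)\le\nu$, and with the previous paragraph $s(G,w)=\nu$.

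The main obstacle is the bound $s(G,w)\le\nu$: one must show that any $f$ with which \pzero can certify ``value $>r$'' from $(w,r)$ necessarily over-promises on a reachable successor — it leaks probability to a $v'$ with $f(v')>\zval_0(G,w\cdot v')$ — which is precisely where monotonicity of the one-step minimax operator together with $\mathrm{minimax}(x)=\nu$ enters, and where the obligation successors (value $0$ or $1$) must be checked separately. The remaining bookkeeping is routine given determinacy of turn-based Borel games (Theorem~\ref{theorem:determinacy turn based games}): concatenating a finite sequence of $\turn{G}$-moves with a winning strategy is again a strategy, and $\hat{\alpha}$ depends only on the $V^+$-limit of a play, hence is insensitive to such finite modifications of the history.
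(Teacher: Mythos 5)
Your argument is essentially the paper's: both apply von Neumann's minimax theorem to the vector of successor values $\zval_0(G,w\cdot v')$ to obtain the matrix-game value $\nu$ together with optimal distributions, and both identify $\nu$ with $s(G,w)$ by a one-step analysis of $\turn{G}$ --- \pzero wins $(w,r)$ for $r<\nu$ by proposing an $f$ that slightly under-promises on every successor, and \pone wins $(w,r)$ for $r>\nu$ because any $f$ with $\mathrm{minimax}(f)>r$ must over-promise on some successor with $f(v')>0$, handing \pone a winning continuation there. Your write-up is if anything more careful than the paper's on the points that matter: the monotonicity of the one-step minimax operator, the legality constraint $f(v')>0$, and the separate treatment of obligation successors (whose value is $0$ or $1$).

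The one genuine flaw is the sentence claiming that the cases $O(w)={>}r'$ and $O(w)={\geq}r'$ are ``analogous''. They are not: for an obligation prefix $w$ the subgame of $\turn{G}$ rooted at $(w,\rho)$ is independent of $\rho$ (the observation behind Corollary~\ref{corollary:value of obligation configurations}), so $s(G,w)\in\set{0,1}$, whereas the one-step minimax $\nu$ of the successor values can be any element of $[0,1]$. Concretely, in the Markov chain of Figure~\ref{figure:measure zero} the prefix $s_1s_2$ has $s(G,s_1s_2)=0$, yet its unique successor has value $\zval_0(G,s_1s_2s_1)=\frac{1}{2}$, so the displayed equality fails. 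What your one-step analysis actually yields for an obligation prefix is the different statement that $s(G,w)=1$ iff $\nu\bowtie r'$. To be fair, the paper's own proof silently makes the same restriction --- its contradiction argument passes from $(w,s(G,w)+\frac{\delta}{2})$ to $(w,s(G,w)+\frac{\delta}{2},f)$, a move that exists in $\turn{G}$ only when $O(w)=\bot$ --- so the lemma should be read (and is only used) for non-obligation prefixes; you should state that restriction explicitly rather than assert that the remaining cases follow analogously.
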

}

\newcommand{\proofoflemmavonneumannnessofvalues}{
\begin{proof}
Consider the values $\zval_0(G,w\cdot v')$ for $v'\in\succ(w)$.
By Von Neumann's minimax theorem \cite{Neu28} there is an $r\in [0,1]$
and optimal distributions $d_0\in\dist{A_0}$ and 
$d_1\in\dist{A_1}$ such that
$$
\ds
\sum_{a_0\in A_0}
\ds
\sum_{a_1\in A_1}
\ds
\sum_{v'\in \succ{w}}
d_0(a_0)\cdot d_1(a_1)\cdot R(w,a_0,a_1)(v')\zval_0(G,w\cdot v') = r
$$
\noindent
and for every $d'_0\in\dist{A_0}$ and every $d'_1\in\dist{A_1}$
we have 
$$
\begin{array}{l}
\ds
\sum_{a_0\in A_0}
\ds
\sum_{a_1\in A_1}
\ds
\sum_{v'\in \succ{w}}
d_0(a_0)\cdot d'_1(a_1)\cdot R(w,a_0,a_1)(v')\zval_0(G,w\cdot v') \geq r\\
\ds
\sum_{a_0\in A_0}
\ds
\sum_{a_1\in A_1}
\ds
\sum_{v'\in \succ{w}}
d'_0(a_0)\cdot d_1(a_1)\cdot R(w,a_0,a_1)(v')\zval_0(G,w\cdot v') \leq r
\end{array}
$$
\noindent
We have to show that $r=s(G,w)$.

Suppose that $r>s(G,w)$. Let $\delta=r-s(G,w)$ and consider a play
starting from $(w,s(G,w)+\frac{\delta}{2},f)$, where $f$ is the
function that associates $max(0,\zval_0(G,w\cdot v')-\frac{\delta}{4})$ to every
successor $v'$ of $w$.
Clearly, the minimax value for $f$ is at least
$r-\frac{\delta}{4}$, which is larger than
$s(G,w)+\frac{\delta}{2}$.
By definition of $\zval_0(G,w\cdot v')$, \pzero has a winning strategy from
$(w\cdot v',\zval_0(G,w\cdot v')-\frac{\delta}{4})$.
It follows that \pzero also wins from $(w,s(G,w)+\frac{\delta}{2})$
contradicting the definition of $s(G,w)$.

Suppose that $r<s(G,w)$. Let $\delta=s(G,w)-r$ and consider a play
starting from $(w,s(G,w)-\frac{\delta}{2})$.
Consider the configuration $(w,s(G,w)-\frac{\delta}{2})$.
In order to win, \pzero has to choose a successor configuration
$(w,s(G,w)-\frac{\delta}{2},f)$, where $f$ associates at least
$\zval_0(G,w\cdot v')+\frac{\delta}{2}$ with some successor $w\cdot v'$ of
$w$.
Then, by definition of $s(G,w)$, \pone wins from configuration
$(w\cdot v',f(v'))$ contradicting the definition of $s(G,w)$.
\end{proof}
}
\shorten{
\lemmavonneumannnessofvalues
\proofoflemmavonneumannnessofvalues

We are now ready to prove that value is well defined and that
obligation Blackwell games are determined.
}{}

\begin{theorem}
For All prefixes $w$ in an OBG $G$ we
have $\zval_0(G,w)+\zval_1(G,w)=1$.
\label{theorem:well defined values}
\end{theorem}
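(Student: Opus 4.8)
The plan is to show that for every prefix $w$, the value $s(G,w)$ in $\turn{G}$ and the value $s(\dual G,w)$ in $\turn{\dual G}$ satisfy $s(G,w) + s(\dual G,w) = 1$, and then translate this identity on the ``$s$''-values through the case analysis defining $\zval_0$ and $\zval_1$ from $s$. First I would observe that both $\turn{G}$ and $\turn{\dual G}$ are turn-based games with Borel objectives, so by Theorem~\ref{theorem:determinacy turn based games} each configuration is won by exactly one player; the Corollary then gives the threshold values $s(G,w)$ and $s(\dual G,w)$, and by Lemma~\ref{lemma:value closed downwards} it suffices to control these thresholds. The key technical tool is Lemma~\ref{lemma:von neumann-ness of values}, which characterizes $s(G,w)$ as a von~Neumann minimax over the one-step successor values $\zval_0(G,w\cdot v')$; the same lemma applied to $\dual G$ characterizes $s(\dual G,w)$ as a minimax over the values $\zval_0(\dual G,w\cdot v') = \zval_1(G,w\cdot v')$ with respect to $\dual R$, where the roles of the players (and hence of $\sup/\inf$) are swapped.

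The heart of the argument is an induction establishing $\zval_0(G,w) + \zval_1(G,w) = 1$ for all $w$, using the structure of $\turn{G}$. For the inductive step at a non-obligation prefix ($O(w)=\bot$), I would use Lemma~\ref{lemma:von neumann-ness of values} on both sides: writing $x_{v'} := \zval_0(G,w\cdot v')$ and assuming inductively $\zval_1(G,w\cdot v') = 1 - x_{v'}$, the minimax defining $s(G,w)$ over payoffs $x_{v'}$ and the minimax defining $s(\dual G,w)$ over payoffs $1 - x_{v'}$ with the transposed matrix $\dual R$ are related by the standard fact that $\min\max$ of $1-M^T$ equals $1 - \max\min$ of $M$; this yields $s(G,w) + s(\dual G,w) = 1$, and since $O(w)=O(\dual O)(w)=\bot$ here, $\zval_0(G,w)=s(G,w)$ and $\zval_1(G,w)=s(\dual G,w)$, closing the step. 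The obligation cases require translating an identity about thresholds into the $\{0,1\}$-valued $\zval$: if $O(w) = {>}r$ then $\dual O(w) = {\geq}(1-r)$, and I must check that ``$s(G,w) > r$'' and ``$s(\dual G,w) \geq 1-r$'' are complementary, i.e. exactly one holds. Using $s(G,w)+s(\dual G,w)=1$, $s(G,w)>r \iff 1-s(\dual G,w) > r \iff s(\dual G,w) < 1-r$, whose negation is exactly $s(\dual G,w)\geq 1-r$ — so $\zval_0(G,w)+\zval_1(G,w) = [s(G,w)>r] + [s(\dual G,w)\geq 1-r] = 1$. The symmetric case $O(w)={\geq}r$ is handled identically with the strict/non-strict inequalities exchanged.

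The delicate points, and the main obstacles, are the following. First, the induction is over prefixes but prefixes can grow without bound and the Borel condition $\hat\alpha$ depends on the infinite projection $p{\Downarrow}_V$, so a naive induction on prefix length does not terminate; the identity $s(G,w)+s(\dual G,w)=1$ must instead be extracted from a genuinely game-theoretic argument — pairing a \pzero strategy in $\turn{G}$ witnessing $s(G,w) > r$ with a \pone strategy in $\turn{\dual G}$ and deriving a contradiction with determinacy of $\turn{\dual G}$ — rather than from a well-founded recursion. Concretely, I expect one direction to come ``for free'' by a reduction à la Martin showing $s(G,w) + s(\dual G,w) \leq 1$ (a \pzero strategy in $\turn G$ proving $>r$ together with a \pzero strategy in $\turn{\dual G}$ proving $>1-r$ would, composed along the common probabilistic structure, over-claim the total mass of successors), and the reverse inequality $s(G,w)+s(\dual G,w)\geq 1$ from Lemma~\ref{lemma:von neumann-ness of values} plus the fact that for every $r < s(G,w)$ \pzero actually wins $(w,r)$. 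Second, the $\epsilon$-concession gadget means the two games are not literally syntactic duals of each other at obligation configurations — a $\geq$-obligation in $G$ becomes a $>$-obligation in $\dual G$ with no concession, and vice versa — so one must verify that the concession mechanism is exactly what makes ``can approach from below'' on one side match ``can achieve strictly'' on the other; getting this correspondence right at infinitely-nested obligations is where the real work lies, and it is precisely the role the Martin-style explicit-value reduction was designed to play.
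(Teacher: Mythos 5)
Your overall architecture is the paper's: reduce the theorem to the identity $s(G,w)+s(\dual{G},w)=1$ on the threshold values of the turn-based games $\turn{G}$ and $\turn{\dual{G}}$, prove each inequality by transferring a winning strategy from one game to the other along paired paths, and then push the identity through the case analysis on $O(w)$ (your complementarity computation for the ${>}r$ versus ${\geq}(1-r)$ cases is exactly right, and matches how the paper's definitions of $\zval_0$ and $\zval_1$ interact). Your sketch of the $\leq 1$ half --- two \pzero strategies, one in $\turn{G}$ claiming $f$ and one in $\turn{\dual{G}}$ claiming $f'$, must over-claim unless some successor $v'$ has $f(v')+f'(v')\geq 1$, which lets \pone steer both plays to a common successor and eventually win the complemented Borel objective --- is precisely the paper's first direction, including the use of the minimax theorem at obligation configurations to pick the concession $\delta$.

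The gap is in your $\geq 1$ half. You propose to get $s(G,w)+s(\dual{G},w)\geq 1$ from Lemma~\ref{lemma:von neumann-ness of values} ``plus the fact that for every $r<s(G,w)$ \pzero actually wins $(w,r)$.'' But that lemma is a purely one-step statement: it relates $s(G,w)$ to a minimax over the successor values $\zval_0(G,w\cdot v')$, and it only yields the global identity if you already know $\zval_0+\zval_1=1$ at all successors --- i.e.\ via exactly the non-well-founded induction you (correctly) disavow two sentences earlier. The paper's second direction is instead another genuine strategy transfer, and it is the harder of the two: for $r>s(G,w)$, Borel determinacy of $\turn{G}$ gives a \pone winning strategy $\pi$ from $(w,\tilde{r})$ for some $s(G,w)<\tilde{r}<r$, and from $\pi$ one builds a \emph{\pzero} winning strategy in $\turn{\dual{G}}$ from $(w,1-r)$. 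The construction maintains the invariant that the paired configurations $(w',r')$ and $(w',t')$ satisfy $r'\geq s(G,w')$ and $t'<1-r'$, defines for each successor $v'$ the value $u(v')$ as the infimum of $f(v')$ over all choices $(w',r',f)$ to which $\pi$ replies by moving to $v'$, shows via the minimax theorem that the minimax of $u$ is at most $r'$, and then has \pzero play $f'(v'')=1-u(v'')-\frac{\epsilon}{2}$ in the dual game, where $\epsilon=1-r'-t'$ is the accumulated slack; the slack is halved and quartered along the play so that the invariant survives forever and the limit play lands in $V^\omega\setminus\alpha$. None of this is recoverable from the one-step lemma, so as written your plan proves only one of the two inequalities. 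You do flag that ``the real work lies'' in the nested-obligation correspondence, which is honest, but the specific mechanism --- the infimum $u$ over \pone's replies and the explicit slack bookkeeping --- is the missing idea rather than a routine verification.
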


The proof of Theorem~\ref{theorem:well defined values}
%, which is in 
%the heart of showing the well definedness 
%(or determinacy) of obligation Blackwell games, 
is non-trivial. 
The proof requires Martin's determinacy proof style analysis of the uncountable
game $\turn{G}$, along with new subtleties 
(for example as shown in the example in 
Figure~\ref{figure:measure zero} that measure zero sets could play an
important role in values of obligation games). 

\newcommand{\proofoftheoremwelldefinedvalues}{
\begin{proof}
Note that the definition of our values is through
turn-based deterministic games, and thus relies on 
determinacy of turn-based deterministic games. 
In the present proof we do not explicitly rely on Borel
objectives, but the definition of values through turn-based
deterministic games requires determinacy for them
(and determinacy holds for turn-based deterministic games
with Borel objectives). 
More explicitly, our proof relies on determinacy for turn-based
deterministic games rather than Borel objectives.
The determinacy proof of Martin also relies on determinacy of
turn-based deterministic games.

We add a few comments for readers familiar with Martin's work.
We note that Martin considers a quantitative objectives that map plays
to payoffs in the range $[0,1]$  
while we consider whether \pzero is winning or not. This is
equivalent to restricting the payoffs to the range $\{0,1\}$. 
Furthermore, he uses the symbol for integration to represent the value
while we use the notation $\val{\cdot}{\cdot}$ and talk about winning.
The first part of the proof below corresponds to the construction of
the strategy for \pzero (p. 1570) and the proof of Lemma~1.1 in
Martin's paper.
The second part of the proof below corresponds to the construction of
the strategy for \pone (p. 1572) and the proof of Lemma~1.4.
The second half of Martin's paper considers various extensions of his
result. We do not touch upon similar subjects to his.

% What is it that we need to prove?
% 1. That if P0 wins from (v,r) in \turn{G} then P0 loses from (v,1-r)
%    in \turn{\dual{G}}
% 2. That is P0 loses from (v,r) in \turn{G} then P0 wins from (v,1-r)
%    in \turn{\dual{G}}
% 3. What about limits?
%    That is, can it be the case that 
%       r = \sup{r' | P0 wins from (v,r') in \turn{G} }
%    and that P0 wins from (v,1-r) in \turn{\dual{G}}?
%    It seems to me that yes. Why not?
%    This just means that the \sup in G is r and that the \sup in
%    \dual{G} is 1-r (which is achieved). There is no problem with
%    that as the value of the game is exactly r.
For a prefix $w$, let $S(G,w)$ denote the set of values $r$
such that \pzero wins from $(w,r)$ in $\turn{G}$.
\begin{compactitem}
\item[$\Rightarrow$]
We show that if $r\in S(G,w)$ then $1-r \notin S(\dual{G},w)$. 

Suppose that $r\in S(G,w)$.
That is, \pzero wins from $(w,r)$ in game $\turn{G}$.
We show that \pone wins from $(w,1-r)$ in $\turn{\dual{G}}$ proving
that $1-r \notin S(\dual{G},w)$.
Let $\sigma$ be the winning strategy of \pzero in $\turn{G}$. 
We now construct a winning strategy for \pone in $\turn{\dual{G}}$.
To distinguish between a prefix of a play in $G$ and prefixes in
$\turn{G}$ or $\turn{\dual{G}}$ we call the latter two \emph{paths}.
For a path $\dual{p}$ in $\turn{\dual{G}}$ we use the strategy
$\sigma$ to construct a path $p$ in $\turn{G}$ such that whenever
$\dual{p}$ ends in configuration $(w',t)$ then $p$ ends in configuration
$(w',r)$ such that $r+t \geq 1$.
Initially, we start from configuration $(w,1-r)$ in $\turn{\dual{G}}$
and from configuration $(w,r)$ in $\turn{G}$.
That is, both paths are of length one.

Suppose that the paths $p$ and $\dual{p}$ end in configurations 
$(w',r')$ and $(w',t')$, respectively, and that $t'+r'\geq 1$.
We have the following cases.
\begin{compactitem}
\item
Suppose that $O(w')=\bot$ then $(w',r')$ is a \pzero configuration in
$\turn{G}$ and $(w',t')$ is a \pzero configuration in $\turn{\dual{G}}$.
The winning strategy $\sigma$ instructs \pzero to choose some
configuration $(w',r',f)$ in $\turn{G}$.
Suppose that \pzero chooses the configuration $(w',t',f')$
in $\turn{\dual{G}}$.
By definition, there has to be a configuration $v'\in \succ(w')$ such
that $f(v')+f'(v')\geq 1$. 
% Really? Why?
% Suppose that for every v''\in\succ(v) we have f(v'')+f'(v'')<1
% By definition f and f' satisfy the following:
%
% For f: there is $d_0\in\dist{A_0}$ such that for all
% $d_1\in\dist{A_1}$ we have 
% $\sum_{a_0\in A_0}\sum_{a_1\in A_1} \sum_{v''\in\succ(v')}
% d_0(a_0).d_1(a_1).R(v',a_0,a_1)(v'').f(v'')>r'
%
% For f': there is $d'_1\in\dist{A_1}$ such that for all
% $d'_0\in\dist{A_0}$ we have 
% $\sum_{a_1\in A_1}\sum_{a_0\in A_0} \sum_{v''\in\succ(v')}
% d'_1(a_1).d'_0(a_0).R(v',a_0,a_1)(v'').f'(v'')>t'
%
% call the sum of the lhs of these two expressions *.
% By rearranging the order and noticing that all distributions sum up
% to 1
% we get that for some v'' it must be the case that f(v'')+f'(v'')>1
% otherwise, it won't be the case that *>1 as required.
We make \pone choose $(w'\cdot v',f(v'))$ in $\turn{G}$ and extend the
strategy $\pi$ of \pone in $\turn{\dual{G}}$ by choosing
$(w'\cdot v',f'(v'))$. 
\item
Suppose that $O(w')={>}r''$ in $G$. Then $O(w')={\geq}1-r''$ in
$\dual{G}$.
It follows that $(w',r')$ is a \pzero
configuration in $\turn{G}$ and $(w',t')$ is a \pone configuration in
$\turn{\dual{G}}$.
The winning strategy $\sigma$ instructs us to choose a configuration
$(w',r'',f)$ in $\turn{G}$.
From the minimax theorem \cite{Neu28}
it follows that there is a value $r'''>r''$ that is
attained for the optimal choice $d_0 \in \dist{A_0}$ such that
$$\ds\inf_{d_1\in \dist{A_1}} \hspace{-1mm} 
\left ( \hspace{-0.7mm}
\sum_{a_0\in A_0}\sum_{a_1\in A_1}\sum_{v'\in\succ(w')} 
\hspace{-4mm}d_0(a_0)\cdot d_1(a_1) \cdot R(w',a_0,a_1)(v') \cdot
f(v') \hspace{-1mm}\right ) 
$$
\noindent
is at least $r'''$. 
Let $\delta=r'''-r''$. Notice that $1-r''-\delta=1-r'''$.
Then, from configuration $(w',t')$ in $\turn{\dual{G}}$, \pone chooses
the successor configuration $(w',1-r''',\epsilon)$, in effect giving
up $\delta$ for \pzero's benefit.
Suppose that \pzero chooses the successor configuration
$(w',1-r''',f')$ in $\turn{\dual{G}}$.
As above, there has to be a successor $v'\in\succ(w')$ such that
$f'(v')+f(v')\geq 1$.
Then we make \pone choose $(w'\cdot v',f(v'))$ in $\turn{G}$ and extend
\pone's strategy in $\turn{\dual{G}}$ by the choice 
$(w\cdot v',f'(v'))$.
\item
Suppose that $O(w')={\geq} r''$ in $G$. Then $O(w')={>}1-r''$ in
$\dual{G}$.
It follows that $(w',r')$ is a \pone configuration in $\turn{G}$ and
$(w',t')$ is a \pzero configuration in $\turn{\dual{G}}$.
Suppose that \pzero chooses the successor configuration
$(w',1-r'',f')$ in $\turn{\dual{G}}$.
From the minimax theorem \cite{Neu28} it follows that there is a value
$r'''<r''$ that is attained for the optimal choice $d_1\in \dist{A_1}$
such that  
$$\ds\inf_{d_0\in \dist{A_0}} \hspace{-1mm}
\left ( \hspace{-0.7mm}
\sum_{a_0\in A_0}\sum_{a_1\in A_1}\sum_{v'\in\succ(w')}  \hspace{-1mm}
d_0(a_0)\cdot d_1(a_1) \cdot R(w',a_0,a_1)(v') \cdot f'(v') 
\hspace{-1mm}
\right )
%\geq 1-r'''
$$
\noindent
is at least $1-r''$.
Let $\delta=r''-r'''$. Notice that $r''-\delta=r'''$.
Then, from configuration $(w',r')$ in $\turn{G}$,  we make \pone choose
the successor configuration $(w',r''',\epsilon)$, in effect giving
up $\delta$ for \pzero's benefit.
Now, \pzero's winning strategy in $\turn{G}$ instructs her to choose a
configuration $(w',r''',f)$.
As above, there has to be a successor $v'\in\succ(w')$ such that
$f(v')+f'(v')\geq 1$.
Then we make \pone choose $(w'\cdot v',f(v'))$ in $\turn{G}$ and extend
\pone's strategy in $\turn{\dual{G}}$ by the choice 
$(w'\cdot v',f'(v'))$.
\end{compactitem}
Consider the two infinite plays played in $\turn{G}$ and
$\turn{\dual{G}}$. 
Clearly, when projecting the two plays on the configurations in $V^+
\times \lorc{0}{1}$ that appear in them and then on the configurations
in $V^+$ we get exactly the same play.
By assumption $\sigma$ is a winning strategy for \pzero in
$\turn{G}$.
Hence, the limit of this projection is in $\alpha$ implying that the
strategy constructed for \pone in $\turn{\dual{G}}$ is indeed
winning.
\item[$\Leftarrow$]
We show that $1-s(G,w) \leq s(\dual{G},w)$.
Notice that if $s(G,w)=1$ then clearly, $1-s(G,w)\leq s(\dual{G},w)$.
We consider the case that $s(G,w)<1$.

By Lemma~\ref{lemma:value closed downwards} for every $r>s(G,w)$ we have \pone wins in $\turn{G}$
from $(w,r)$.
If $t=1-r$ then $t<1-s(G,w)$.
We show that \pzero wins from $(w,t)$ in $\turn{\dual{G}}$.

Consider some value $r > s(G,w)$ such that \pone wins from $(w,r)$.
We show that \pzero wins from $(w,1-r)$ in
$\turn{\dual{G}}$ by proving that $1-r\in S(\dual{G},w)$.
We use the difference between $1-r+s(G,w)$ and $1$ to give a winning
strategy for \pzero in $\turn{\dual{G}}$.
We use a winning strategy $\pi$ of \pone in $\turn{G}$ to
produce a winning strategy for \pzero in $\turn{\dual{G}}$.
For a path $\dual{p}$ in $\turn{\dual{G}}$ we use the winning strategy
$\pi$ of \pone in $\turn{G}$ to construct a path $p$ in $\turn{G}$
such that whenever $\dual{p}$ ends in configuration $(w',t')$ then $p$
ends in configuration $(w',r')$ such that $r'\geq s(G,w')$, \pone is
winning from $p$ using $\pi$, and $t' < 1-r'$.

Consider a configuration $(w,r)$ such that $r>s(G,w)$.
As $r>s(G,w)$ there is some $r>\tilde{r}>s(G,w)$ such that
\pone wins from $(w,\tilde{r})$.
Let $\pi$ be the winning strategy of \pone from $(w,\tilde{r})$. 
Initially, we start from configuration $(w,1-r)$ in $\turn{\dual{G}}$
and from configuration $(w,\tilde{r})$ in $\turn{G}$.
Clearly, $(w,\tilde{r})$ is winning for \pone, $\tilde{r}\geq s(G,w)$,
and $1-r < 1-\tilde{r}$.

Suppose that the two paths $p$ and $\dual{p}$ end in a configurations 
$(w',r')$ and $(w',t')$, respectively, and that $r'\geq s(G,w)$, \pone
wins from $p$ using $\pi$, and $t' < 1-r'$.
We have the following cases.
\begin{compactitem}
\item
Suppose that $O(w')=\bot$ then $(w',r')$ is a \pzero configuration in
$\turn{G}$ and $(w',t')$ is a \pzero configuration in
$\turn{\dual{G}}$.

For every location $v'\in\succ(w')$ let $u(v')$ be the following
value: 
$$
\inf \set{ 1, f(v') ~|~ (w'\cdot v'',r',f) \in \hat{V} \mbox{ and }
\pi(p \cdot (w'\cdot v'',r',f))=v'}
$$
That is, we consider all possible choices for \pzero from $(w',r')$.
Such a choice includes a function $f:\succ(w')\rightarrow[0,1]$.
Then, whenever the winning strategy of \pone chooses to proceed to
$v'$, we record the value promised by \pzero and take the infimum of
all these values.

By the minimax theorem there are $d_0\in\dist{A_0}$ and
$d_1\in\dist{A_1}$ such that 
$$
\ds\sum_{a_0\in A_0}\ds\sum_{a_1\in A_1}\ds\sum_{v'\in\succ(w')}
d_0(a_0)\cdot d_1(a_1)\cdot R(w',a_0,a_1)(v')\cdot u(v') =
\tilde{r}
$$
and $d_0$ and $d_1$ are the optimal distribution choices for both
players.
We show that $\tilde{r} \leq r'$.
Suppose by contradiction that $\tilde{r}>r'$.
Then, let $\epsilon=\frac{\tilde{r}-r'}{2}$ and consider the function 
$f(v'')=max(0,u(v'')-\epsilon)$.
Clearly, $(w',r',f)$ is a configuration in $\turn{G}$.
However, as $\pi$ is a winning strategy from $p$ the choice
$\pi(p\cdot (w',r',f))$ contradicts the definition of $u$.
So $\tilde{r}\leq r'$.

By assumption $t'<1-r'$. Let $\epsilon = 1-r'-t'$.
Consider now the function $f':\succ(w')\rightarrow [0,1]$ such that
$f'(v'')=1-u(v'')-\frac{\epsilon}{2}$.
The minimax value of $f'$ in $\turn{\dual{G}}$ is at least
$1-\tilde{r} -\frac{\epsilon}{2} \geq 1-r'-\frac{\epsilon}{2} > t'$.
Hence, $(w',t',f')$ is a configuration in $\turn{\dual{G}}$.

We extend {\pzero}'s strategy in $\turn{\dual{G}}$ by choosing
configuration $(w',t',f')$.
Then, \pone answers by choosing a successor $(w\cdot v',f'(v'))$.
Notice that it cannot be the case that $u(v')=1$.
Indeed, in such a case $f'(v')$ would be 0.
So the path $\dual{p}$ is extended by $(w',t',f')$ and then
$(w' \cdot v',f'(v'))$. 

We now turn our attention to extension of the path $p$.
By the choice of $u$, there is a function $f$ such that
$(w',r',f)$ is a configuration in $\turn{G}$, $\pi(p\cdot (w',r',f))$ is
$(w'\cdot v',f(v'))$, and either $f(v')=u(v')$ or $f(v')<
u(v')+\frac{\epsilon}{4}$. 
So we make \pzero choose in $\turn{G}$ the successor configuration
$(w',r',f)$.
Then, {\pone}'s winning strategy $\pi$ instructs her to choose
$(w'\cdot v',f(v'))$.

It follows that $f(v')\geq s(G,w'\cdot v')$. Otherwise, \pzero has a
winning strategy from $(w'\cdot v',f(v'))$ in contradiction with
{\pone}'s strategy $\pi$ being winning.
Furthermore, $\pi$ is winning from $p\cdot (w',r',f)\cdot
(w'\cdot v',f(v'))$.

%
% 1. f(v'')<u(v'')+\frac{\epsilon}{4}
% 2. f'(v'')=1-u(v'')-\frac{\epsilon}{2}
% 3. u(v'')> f(v'')-\frac{\epsilon}{4}
% 4. u(v'') = 1-f'(v'') - \frac{\epsilon}{2}
% 5. f(v'')-\frac{\epsilon}{4} < 1-f'(v'') - \frac{\epsilon}{2}
% 6. f(v'') < 1-f'(v'') - \frac{\epsilon}{4}
% 7. f'(v'') < 1-f(v'')
%
Finally, as $f'(v')=1-u(v')-\frac{\epsilon}{2}$ and
$f(v')<u(v')+\frac{\epsilon}{4}$ we conclude that
$f'(v')<1-f(v')$.
\item
Suppose that $O(w')={>}r''$ in $G$. Then $O(w')={\geq}1-r''$ in
$\dual{G}$.
It follows that $(w',r')$ is a \pzero
configuration in $\turn{G}$ and $(w',t')$ is a \pone configuration in
$\turn{\dual{G}}$.
Suppose that \pone chooses the next configuration $(w',t'')$ in
$\turn{\dual{G}}$. 

Now, this is similar to the previous case, as we have to continue from
the configurations $(w',r'')$ in $\turn{G}$ and $(w',t'')$ in
$\turn{\dual{G}}$ that are both \pzero configurations and $t''<1-r''$.
\item
Suppose that $O(w')={\geq} r''$ in $G$. Then $O(w')={>}1-r''$ in
$\dual{G}$.
It follows that $(w',r'')$ is a \pone configuration in $\turn{G}$ and
$(w',1-r'')$ is a \pzero configuration in $\turn{\dual{G}}$.

The winning strategy $\pi$ instructs \pone to choose configuration
$(w',r''')$ such that $r'''<r''$.

As before, this is similar to the first case, as we have to continue
from the configurations $(w',r''')$ in $\turn{G}$ and $(w',1-r'')$ in
$\turn{\dual{G}}$ that are both $\pzero$ configurations and
$1-r''<1-r'''$.
\end{compactitem}
Consider the two infinite plays played in $\turn{G}$ and
$\turn{\dual{G}}$. 
Clearly, when projecting the two plays on the configurations in $V^+
\times \lorc{0}{1}$ that appear in them and then on the configurations
in $V^+$ we get exactly the same play.
By assumption $\pi$ is a winning strategy for \pzero in
$\turn{G}$.
Hence, this projection is not in $\alpha$ implying that the strategy
constructed for \pzero in $\turn{\dual{G}}$ is indeed winning.
\end{compactitem}
\end{proof}
}
\shorten{
\proofoftheoremwelldefinedvalues
}{}

\begin{corollary}
For every obligation Blackwell game $G$ and every prefix $w$
such that $O(w)\neq \bot$, $\zval_0(G,w) \in \set{0,1}$.
\label{corollary:value of obligation configurations}
\end{corollary}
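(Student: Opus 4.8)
The plan is to read the claim directly off the definition of the value at obligation configurations, so the ``proof'' amounts to unwinding that definition. Let $w$ be a prefix with $O(w) \neq \bot$, and write $O(w) = {\bowtie}\, r$ with ${\bowtie} \in \set{>,\geq}$ and $r \in [0,1]$. By the preceding corollary, the value $s(G,w) \in [0,1]$ below which \pzero wins $(w,r')$ in $\turn{G}$ and above which \pone wins is well defined. By definition, $\zval_0(G,w)$ is $1$ exactly when $s(G,w) > r$, or when $s(G,w) = r$ and ${\bowtie} = {\geq}$, and is $0$ in every other case. Trichotomy of the reals shows that exactly one of these alternatives holds, and in both the resulting value lies in $\set{0,1}$; hence $\zval_0(G,w) \in \set{0,1}$.

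I expect there to be essentially no obstacle. The only contrast worth flagging is with the non-obligation case: when $O(w) = \bot$ the value is set equal to $s(G,w)$, which may be an arbitrary real in $[0,1]$, whereas when $O(w) \neq \bot$ the definition is already a Boolean test of $s(G,w)$ against the threshold $r$. The single point that deserves an explicit word is that this test is \emph{total} — it decides the outcome for every possible value of $s(G,w)$ — and that is immediate once we know $s(G,w)$ exists, which is exactly the content of the corollary following Lemma~\ref{lemma:value closed downwards}.

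As a closing remark (not required for the statement as phrased, but natural to record), the dual value is Boolean as well: since $\dual{O}(w) = \bot$ if and only if $O(w) = \bot$, applying the same reasoning to $\dual{G}$ gives $\zval_1(G,w) \in \set{0,1}$; equivalently this is forced by Theorem~\ref{theorem:well defined values}, which yields $\zval_1(G,w) = 1 - \zval_0(G,w)$. Thus every obligation configuration acts, from the perspective of both players, as a $0/1$ outcome that is entirely decided by whether the promised threshold can be approximated from below in $\turn{G}$.
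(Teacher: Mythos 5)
Your proof is correct, but it takes a different route from the paper's. You read the claim directly off the definition: for an obligation prefix, $\zval_0(G,w)$ is \emph{defined} as the outcome of a Boolean test on $s(G,w)$ (is $s(G,w)>r$, or $s(G,w)=r$ with ${\bowtie}={\geq}$?), and since that test is total once $s(G,w)$ is known to exist, the value is $0$ or $1$ by fiat. The paper instead argues structurally about $\turn{G}$: when $O(w)\neq\bot$, the successors of a configuration $(w,r')$ are determined by the obligation value (the player moves to $(w,r'',f)$ with $r''$ fixed by $O(w)$, or \pone grants a concession below the obligation threshold), so the subgame rooted at $(w,r')$ does not depend on $r'$ at all. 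Hence either \pzero wins from $(w,r')$ for every $r'\in\lorc{0}{1}$ or \pone does, i.e.\ $s(G,w)$ itself is degenerate. Your argument is shorter and fully suffices for the statement as written; the paper's argument buys the additional (and more informative) fact that the all-or-nothing character of obligation configurations is intrinsic to the game $\turn{G}$, rather than an artifact of how the definition rounds $s(G,w)$ against the threshold $r$. Your closing remark about $\zval_1(G,w)$ via the dual game and Theorem~\ref{theorem:well defined values} is also correct.
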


\newcommand{\proofofcorollaryvalueofobligationconfigurations}{
\begin{proof}
Consider a configuration $v$ such that $O(w)={\bowtie}r$.
Then, by definition, the game $\turn{G}$ starting from configuration 
$(w,r')$ does not depend on the value $r'$.
It follows that either \pzero wins from $(w,r')$ for all $r'\in
\lorc{0}{1}$ or \pone wins from $(w,r')$ for all $r'\in\lorc{0}{1}$.
It follows that either $\zval_0(G,w)=1$ or $\zval_0(G,w)=0$.
\end{proof}
}
\shorten{
\proofofcorollaryvalueofobligationconfigurations
}{}

%\mysection{Simpler Definition for Markov Chains}
\mysection{Markov Chains with Obligations}
\label{section:simple definition for MC}

We show that for Markov chains the measure of an obligation objective
can be defined directly on the Markov chain.
This direct characterization is generalized later and is crucial for
the  algorithmic analysis for finite games with parity objectives.
We introduce the notion of a \emph{choice set}, a set of obligations
that \pzero can meet.
We then show that the definition of a value through a choice set and
the definition in Section~\ref{section:obligation blackwell games}
coincide. 

Consider a Markov chain $M=(S,P,L,s^\init)$.
Let $\cG=\pair{\alpha,O}$ be an obligation, where
$\alpha\subseteq S^\omega$ is a Borel set of infinite paths and
$O:S\rightarrow (\set{\geq , >} \times [0,1]) \cup \set{\bot}$ is the
obligation function.
We can think about such a Markov chain as an obligation Blackwell game
where $A_0$ and $A_1$ are singletons. 
Formally, $G_M=(S,\set{a},\set{a},R,\cG)$, where
$R(s,a,a)=P(s)$ for all $s\in S$.
As before, we are interested in sequences of locations, which
correspond to prefixes of plays in $G_M$.
Thus, we refer to them as prefixes also here.
Let $\widehat{S}$ denote the set of prefixes $s^{\init}\cdot
S^*$.
Let $\cO$ denote the set of locations $s \in S$ such that
$O(s)\neq \bot$ and $\widehat\cO$ prefixes $w\cdot s\in \widehat{S}$ such that
$s \in \cO$.
That is, $\cO$ is the set of locations with a non-empty
obligation and the set $\widehat\cO$ is the set of prefixes that end
in a location in $\cO$. 
We denote by $O(w)$ the obligation $O(s)$, where $w=w'\cdot s$. 
Let $\cN=S{\setminus}\cO$ denote the set of locations that have no
obligation and $\widehat\cN$ denote the set of prefixes
$\widehat{S}{\setminus}\widehat\cO$.
For a prefix $w$ a \emph{choice set} is $C_w\subseteq
\widehat\cO \cap (\set{w}\cdot S^+)$.
That is, it is a set of extensions of $w$ that have obligations.
For a prefix $w'\in \widehat{S}$ and a choice set $C_w$, an infinite
path $w'\cdot y$ is \emph{good} if either 
(a) $y=x\cdot z$, 
    $x\in \cN^* \cdot \cO$, and $w'\cdot x \in C_w$, 
or 
(b)
    $y \in \cN^\omega$ and $w'\cdot y \in \alpha$.
That is, either the first visit to $\cO$ after $w'$ is in $C_w$ or
$\cO$ is never visited and the infinite path is in $\alpha$.
Let $\beta_{C_w}^{w'}$ denote the set of good paths of $w'$ with choice set
$C_w$.
Given a choice set $C_w$ and a prefix $w'$, the measure of $\cG$
from $w'$ according to $C_w$ is:
$$\measure{M}\cG{C_w}{w'} =
\frac{\prob{M}{\beta_{C_w}^{w'}}}{\prob{M}{\set{w'}\cdot S^\omega}}.
$$
A choice set $C_w$ is \emph{good} if the following two conditions
hold:
\begin{compactitem}
\item
Every infinite path $\pi=s_0,s_1,\ldots$ in $M$ such that $\pi$ has
infinitely many prefixes in $C_w$ is in $\alpha$.
\item
For every sequence $w'\in C_w$ we have $\measure{M}\cG{C_w}{w'}
\bowtie r$, where $O(w')={\bowtie}r$.
\end{compactitem}
\noindent
Let $\cC_w$ denote the set of good choice sets for $w$.

Consider a Markov chain $M=(S,P,L,s^\init)$ and an
obligation $\cG=\pair{\alpha,O}$.
For prefix $w$ the pre-value of $w$ is
$$\preoval(M,\cG,w) = \ds\sup_{C \in \cC_w}\measure{M}\cG{C}{w}.$$
Finally, we define the value of $w$.
For a prefix $w$ such that $O(w)\neq \bot$ we define $\oval(M,\cG,w)$
to be $1$ if $\preoval(M,\cG,w) \bowtie r$, where 
$O(w)={\bowtie}r$, and $\oval(M,\cG,w)$ is $0$ otherwise.
For a prefix $w$ such that $O(w)=\bot$ we define $\oval(M,\cG,w)$ to be $\preoval(M,\cG,w)$.

We note that in a choice set $C$, if there is some prefix $w\notin C$
such that $w\in\widehat{O}$ then for every extension $w\cdot y$ of $w$
there is no point in including $w\cdot y$ in $C$.
Indeed, once a certain obligation is not included in $C$ all the
obligations that extend it are not important.
We restrict attention to choice sets that satisfy this restriction. 

We show that for every Markov chain and for every prefix the
above definition of value coincides with definition through Martin-like
reduction.

\begin{theorem}
For every Markov chain $M$, obligation
$\cG=\pair{\alpha,O}$, and prefix $w\in S^*$ we have
  $\oval(M,\cG,w)=\zval_0(G_M,w)$.
\label{theorem:simple value coincide MC}
\end{theorem}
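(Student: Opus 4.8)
Proof plan.

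The plan is to peel off, on both sides, the outer ``${\bowtie}r$'' wrapper and the inner ``pre‑value'', and to match the pre‑values by two one‑directional simulations. First I would unwind the definitions of $\zval_0(G_M,w)$ and $\oval(M,\cG,w)$. On the game side, a configuration $(w,r')$ of $\turn{G_M}$ with $O(w)\neq\bot$ immediately discards $r'$ and replaces it by the obligation threshold (after a concession, in the ${\geq}$ case), so whether \pzero wins from $(w,\cdot)$, hence $\zval_0(G_M,w)$, is governed by a quantity insensitive to $O(w)$ itself; on the Markov‑chain side a choice set for $w$ contains only proper extensions of $w$, so $\preoval(M,\cG,w)$ is likewise insensitive to $O(w)$. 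Hence the theorem follows from $s(G_M,w)=\preoval(M,\cG,w)$ when $O(w)=\bot$, together with ``\pzero wins from $(w,\cdot)$ in $\turn{G_M}$ iff $r\bowtie\preoval(M,\cG,w)$'' when $O(w)={\bowtie}r$, and both follow from two facts. \textbf{(S)} For every good choice set $C$ for $w$ and every $\rho<\measure{M}{\cG}{C}{w}$, \pzero wins from $(w,\rho)$ in $\turn{G_M}$. \textbf{(C)} If \pzero wins from $(w,\rho)$ in $\turn{G_M}$, there is a good choice set $C$ for $w$ with $\measure{M}{\cG}{C}{w}\geq\rho$. (The obligation cases are the same constructions applied at the obligation threshold, or, for ${\geq}$, at each concession strictly below it, followed by a passage to the supremum/infimum over concessions.) Taking suprema over $C$ and over $\rho$, (S) gives $s(G_M,w)\geq\preoval(M,\cG,w)$ and (C) the reverse inequality.

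For (C) I would fix a winning strategy $\sigma$ for \pzero in $\turn{G_M}$ from $(w,\rho)$ and take $C$ to be the set of obligation prefixes occurring, as the $V^{+}$-component of some configuration, along a play consistent with $\sigma$; since reachability is prefix‑closed, $C$ automatically respects the standing restriction on choice sets. Along the $\sigma$-allowed subtree the second components $x$ of configurations obey $\sum_{v\in\succ(u)}P(u,v)\,x_{uv}>x_u$ at every non‑obligation prefix $u$ and $\sum_{v\in\succ(u)}P(u,v)\,x_{uv}>r''$ at every obligation prefix $u$ with threshold $r''$, by validity of {\pzero}'s chosen $f$. Telescoping these inequalities down the clopen events ``the play stays $\sigma$-allowed up to and including its first visit to $\cO$ (or forever, if there is none)'' shows the measure of such paths is at least $\rho\cdot\prob{M}{\{w\}\cdot S^{\omega}}$; each of them is $\beta_{C}^{w}$-good (its first $\cO$-visit lies in $C$, or it stays in $\cN$ and, being a legal play won by $\sigma$, lies in $\alpha$), so $\measure{M}{\cG}{C}{w}\geq\rho$. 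The same telescoping at each $u\in C$ yields goodness condition~(2) --- for a ${\geq}$-obligation one lets {\pone}'s conceded value approach the threshold and uses monotonicity of $\beta_{C}^{u}$ in $C$. Goodness condition~(1) holds because a path with infinitely many prefixes in $C$ has all of its prefixes reachable and is therefore, by a standard compactness argument discretizing {\pone}'s continuum of concession moves, the $V^{+}$-limit of a $\sigma$-consistent play, which lies in $\alpha$ as $\sigma$ wins.

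For (S) --- the hard direction --- I would make \pzero maintain the invariant that the promised value stays strictly below $\hat g$, where $g(u):=\measure{M}{\cG}{C}{u}$ and $\hat g(uv)$ equals $g(uv)$ if $uv\in\widehat\cN$, equals $1$ if $uv\in C$, and equals $0$ if $uv\in\widehat\cO\setminus C$. The enabling identity is the tower equality $g(u)=\sum_{v\in\succ(u)}P(u,v)\,\hat g(uv)$ (the obligation at $u$ itself is irrelevant to $\beta_{C}^{u}$), which lets \pzero always pick an $f$ just below $\hat g$, deflated by a summable schedule of slacks so the invariant never erodes, and with $f\equiv0$ on obligations outside $C$ so \pone cannot be steered to them. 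Whenever the play reaches $u\in C$ with threshold $r''$, goodness condition~(2) gives $g(u)\bowtie r''$, so \pzero may legally ``reset'' and re‑prove ${>}r''$ (absorbing {\pone}'s concession in the ${\geq}$ case, which only helps her); a play visiting $\cO$ infinitely often then visits only $C$-prefixes infinitely often and lies in $\alpha$ by goodness condition~(1).

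The remaining case --- a play eventually staying in $\cN$ forever --- is the crux and the main obstacle. The naive rule ``play $f=\hat g-\text{slack}$'' does \emph{not} by itself force such a tail into $\alpha$: \pone can still be permitted to steer onto a measure‑zero set of losing $\cN$-paths, which is exactly the phenomenon that makes measure‑zero sets decisive (cf.\ Figure~\ref{figure:measure zero}). So in the $\cN$-region \pzero must instead play a strategy that genuinely \emph{proves} the measure of the Borel set $\beta_{C}^{u}$ --- equivalently of ``reach a $C$-obligation, or stay in $\cN$ and satisfy $\alpha$'' --- rather than merely tracking it; such a strategy exists by the argument behind Martin's proof that Blackwell games with Borel objectives are determined \cite{Mar98}, namely the transfinite construction, from Borel determinacy, of a ``proof'' strategy for the required value. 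The genuinely new content is to verify that these Borel‑level proof strategies splice correctly across the obligation resets --- this is exactly where the two goodness conditions are used --- so that the resulting global object is a winning strategy in $\turn{G_M}$; this splicing, together with the slack bookkeeping, is where essentially all the work goes.
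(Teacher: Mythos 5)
Your overall decomposition into (S) and (C) mirrors the paper's two directions, and your (S) direction is essentially the paper's argument: there too, \pzero runs Martin's unraveled game $\widehat{G}_m$ for the Borel part inside the non-obligation region and splices a fresh instance of it at each obligation of $C$ (goodness condition (2) licenses the restart, condition (1) handles plays that hit $C$ infinitely often), with a halving schedule of slacks playing the role of your summable schedule. You also correctly identify the crux there, namely that naive value-tracking in the $\cN$-region fails on measure-zero losing tails and a Martin-style proof strategy is required.

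The gap is in (C). Taking $C$ to be the set of \emph{all} obligation prefixes reachable under $\sigma$ does not in general yield a good choice set: goodness condition (1) can fail, and the ``standard compactness argument'' you invoke does not repair it. The culprit is the ${\geq}$-obligations: after \pone concedes $r''<r'$ at such a prefix, $\sigma$ may answer with an $f$ whose support grows as $r''\to r'$, so different reachable obligation prefixes may be witnessed only under \emph{incompatible} concession choices. Concretely, take $O(s_0)={\geq}\tfrac{1}{2}$, transitions $s_0\to s_0$ and $s_0\to t$ each with probability $\tfrac{1}{2}$, $t$ absorbing with all $t$-tails in $\alpha$ and $s_0^\omega\notin\alpha$. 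The strategy that, after a concession $r''$ at $s_0$, keeps $f(s_0)=1$ for a number of rounds growing as $r''\to\tfrac{1}{2}$ and then sets $f(s_0)=0$ is winning in $\turn{G_M}$, yet every prefix $s_0^k$ is reachable under it, so your $C$ contains all of them and the losing path $s_0^\omega$ violates condition (1). No single $\sigma$-consistent play has $s_0^\omega$ as its limit; discretizing the concessions does not help because the needed concession still varies with $k$, and K\"onig's lemma fails for the resulting infinitely branching tree. This is exactly the difficulty to which the paper devotes most of its second direction: it restricts \pone to concessions of the form $r'-\tfrac{1}{n}$, assigns each reachable obligation the lexicographically minimal tuple of concessions under which it is reached (its \emph{rank}), and keeps only those obligations whose rank coherently extends the ranks of all ${\geq}$-obligations on the way to them. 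The filtered $C$ is generated by one coherent family of plays, which is what makes condition (1) provable. Your telescoping argument for the measure bound and for condition (2) is sound (it is essentially Martin's Lemma~1.1), but it must be run on this filtered set rather than on the full reachable set.
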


The proof of Theorem~\ref{theorem:simple value coincide MC} requires
a refined analysis of a winning strategy in  
the uncountable game $\turn{M}$ obtained from a Markov chain $M$. 
Using this analysis we extract a witness choice set in $M$  
from a winning strategy in the uncountable game.

\newcommand{\proofoftheoremsimplevaluecoincideMC}{
\begin{proof}
We show that $\zval_0(G_M,w)\geq \oval(G,w)$.
\begin{compactitem}
\item
Fix $\epsilon>0$. 
We have to show that if there is a choice set $C$ that shows the value
$\oval(G,w)-\epsilon$ then \pzero can get the value
$\oval(G,w)-\epsilon$ in $\turn{G_M}$. 
This proof uses heavily Martin's proof of determinacy of Blackwell
games \cite{Mar98}. Fix a Markov chain $M=(S,P,L,s^{\init})$
for the rest of this proof.

Given a  Borel winning set $\beta\subseteq S^\omega$, Martin defines a
turn-based game $\widehat{G}_m$ that is slightly different to ours.
Formally,
$\widehat{G}_m= \pair{(S^+\times [0,1]) \cup (S^+\times
  F),E,\widehat{\beta}}$,
where $F$ is the set of functions $\set{f:S \rightarrow [0,1]}$ and 
$$
\begin{array}{r @{~=~} l}
E & \set{ ((w,v),(w,f)) ~|~ \sum_{s' \in S} P(s,s')\cdot f(s') \geq v}
\\
\multicolumn{2}{r}{
\rule{150pt}{0pt}
\cup \quad 
\set{((w,f),(w\cdot s,v)) ~|~ f(s)\geq v} }
.
\end{array}
$$
\noindent
For a set $P\subseteq S^+$, let $P\uparrow^w$ denote $P \cap
\set{w}\cdot S^+$, i.e., exactly all suffixes of $w$ in $P$.
Then, based on determinacy of $\widehat{G}_m$ and measurability of 
$\beta$ (since $\beta$ is Borel), Martin's proof shows that 
$\frac{\prob M{\beta\uparrow^w}}{\prob
  M{S^+\uparrow^w}}\geq v$ iff \pzero wins $\widehat{G}_m$
from every configuration $(w,v')$ for $v'<v$.
That is, \pzero announces the values she can derive from successors of
$w$ and \pone chooses a successor from which to show the value.
Finally, $\widehat{\beta}$ is the set of plays whose projection on $S^+$ has
limit in $\beta$.
The strategy of \pzero forces all infinite plays to be in
$\widehat{\beta}$.%
\footnote{
The main difference between the two games (except for no obligations
in Martin's version) is as follows.
In our game the value promised by \pzero is always slightly below the
real value.
Accordingly, we require that the weighted sum of values of the
successors be strictly larger than the promised value.
In Martin's version the weighted sum of values of successors may be
equivalent to the promised values (or larger).
}

We use Martin's result to show that whenever
$\measure{M}\cG{C}{w} \geq v$ then \pzero wins in $\turn{G_M}$
from $(w,v')$ for every $v'<v$.
Let $\delta=v-v'$.
As $\measure{M}\cG{C}{w} \geq v$, then according to
Martin's proof \pzero wins in $\widehat{G}_m$ from
$(s,v'+\frac{\delta}{2})$. 
We use \pzero's strategy in $\widehat{G}_m$ to win in $\turn{G_M}$.
As we play, we maintain the requirement in $\turn{G_m}$ always below the
requirement in $\widehat{G}_m$ by repeatedly dividing the gap between the
values in the two games by $2$.
It follows that in the $i$th round of playing the two games, the gap
between the values is $\frac{\delta}{2^i}$. 
Furthermore, as {\pzero}'s strategy in $\widehat{G}_m$ is winning it cannot
be the case that the play created passes through an obligation prefix
that is not in $C$ (indeed, all continuations from this point are
losing in $\widehat{G}_m$).
If on the other hand, a play passes through an obligation point that
is in $C$, then the correspondence between the game $\turn{G_M}$ and a
new instance of $\widehat{G}_m$ from the new obligation point is created.
Consider an obligation prefix $w'$ such that $O(w')={\geq} v'$.
By goodness of $C$, $\measure{M}\cG{C}{w'} \geq v'$.
In the game $\turn{G_M}$ \pone moves to a configuration
$(w',v'',\epsilon)$, where $v''<v'$.
Thus, we can use the same argument and use Martin's game $\widehat{G}_m$
to continue the strategy in $\turn{G_M}$.
Consider an obligation prefix $w'$ such that $O(w')={>}v'$.
By goodness of $C$, $\measure{M}\cG{C}{w'} > v'$.
Hence, there is some $v''$ such that $v'<v''<\measure{M}\cG{C}{w'}$ that can be used in Martin's game.
Finally, consider an infinite play in $\turn{G_M}$.
If the play visits $C$ infinitely often, then by $C$'s goodness, it is
winning for \pzero.
If the play visits $C$ finitely often, according to Martin's result,
the corresponding play is winning in $\widehat{G}_m$ implying that the
play is in $\alpha$.
\end{compactitem}

\noindent
In the other direction we show that $\oval(G,w)\geq \zval_0(G_M,w)$. 
\begin{compactitem}
\item
In the other direction, a winning strategy for \pzero in the game
$\turn{G_M}$ from $(w,v)$ induces a good choice set $C$.
We start by fixing the winning strategy $\sigma$ of \pzero. 
For the sake of this proof we assume that \pzero always plays by this
strategy $\sigma$.
Furthermore, from a prefix $w\in \widehat\cO$ such that $O(w)={\geq}
v$, we know that \pone can choose every successor $(w,v',\epsilon)$
for $v'<v$.
We restrict {\pone}s choices to those $v'$ such that $v'\geq 0$ and
$v'=v-\frac{1}{n}$ for some $n\in \mathbb{N}$.
Thus, when we say configuration is \emph{reachable} we 
mean under these choices of \pzero according to $\sigma$ and for
choices of \pone restricted in $\geq$-obligation configurations as
explained.

The definition of the choice set $C$ is quite involved as it 
has to take into account the infinitely many different strategies that
are involved in showing an obligation of the form $\geq r$
(corresponding to each of the choices $v'=v-\frac{1}{n}$).
We assume that the initial prefix $w$ is an obligation such
that $O(w)={\geq} v$ for the value that interests us $v$.
Indeed, this forces \pzero to be able to win the game
$(w,v',\epsilon)$ for every $v'=v-\frac{1}{n}$ for every
$n\in\mathbb{N}$.
The choice set we construct has to factor in these infinitely many
different strategies.
However, the same occurs whenever another obligation $w'$ that extends
$w$ is reached for which $O(w')={\geq}v''$ for some $v''$.
It follows, that the choice set we construct must include the same
construction for every obligation of the form ${\geq}v''$ that is
encountered in the game.
More formally, we have the following.

Assume that we start from prefix $(w,r')$ such that
$O(w)={\geq} v$.
Let $T$ denote the set of prefixes reachable
from $w$ excluding $w$ itself.
For every $w'=w\cdot s_1\cdots s_n \in T$, let $level(w')$ denote the
number of locations $s_i$ such that $s_i\in \cO$ and
$O(s_i)={\geq}r'$ for $1\leq i\leq n$ and $r'\in \lorc{0}{1}$.
That is, $level(w')$ is the number of ${\geq}$-obligations on the way
from $w$ to $w'$ excluding $w$ itself but including $w'$ (if appropriate).
For every $i\geq 1$, let $T_i\subseteq T$ denote the set $T_i=\set{w'
  ~|~ level(w')=i}$.
It follows that $T=\bigcup_{i\geq 1} T_i$ and for every $i$ and $j$ we
have $T_i\cap T_j=\emptyset$.
We now restrict attention (by induction) to a subset of the obligation
configurations that appear in $T$.

Consider an obligation $w'$ in $T_1$.
For every such obligations there is a minimal $n\in\mathbb{N}$ such
that $w'$ is reachable from $(w,v-\frac{1}{n},\epsilon)$. 
We call this the \emph{rank} of $w$, denoted $rank(w)$.
Consider an obligation $w'$ in $T_i$.
Let $s_1,\ldots, s_i$ be the $\geq$-obligations on the way from $w$ to
$w'$ and let $v_1,\ldots, v_i$ be the values of these obligations. 
Then, there is a minimal according to the lexicographic order
$(n_1,\ldots, n_i)$ such that $w'$ is reachable from $w$ by \pone
taking the choice $(s_j,v_j-\frac{1}{n_j},\epsilon)$ from $(s_j,v'_j)$
for the appropriate $v'_j$.
As before, we call this the \emph{rank} of $w'$, denoted $rank(w')$. 
We say that $w'$ is \emph{good} if for every $\geq$-obligation $w''$
on the path from $w$ to $w'$ we have that $rank(w'')$ is a prefix of
$rank(w')$. 
That is, whenever $w'$ is reachable through multiple choices of
\pone, we consider only the strategy \pone used from $w'$ for the
minimal choice of concession given on \emph{all} $\geq$-obligations on
the way to $w'$.
We say that $>$-obligation prefix $w''$ is \emph{good} if it
appears in $T$ and all $\geq$-obligation prefixes appearing on
the path from $w$ to $w''$ are good.
That is, if it appears as part of one of the same ``minimal''
strategies. 
Let $C=\set{w\in T\cap \widehat\cO ~|~ w \mbox{ is good}}$. 
We note that the definition of $C$ does not depend on $G_M$ being
derived from a Markov chain.
Indeed the same definition is used in the proof of
Theorem~\ref{theorem:simple value coincide SG}.

In order to show that $C$ is a good choice set we have to prove two
things.
First, that every path that visits infinitely many obligations in $C$
is in $\alpha$.
Second, that all obligations in $C$ are met.

For the first claim we note that 
an infinite sequence of prefixes in $C$ appears also in
$\turn{G_M}$ and from $\sigma$ being a winning strategy must be in
$\alpha$.
We have to show that the obligation of every $w'\in C$ is met.
However, for this we can use again Martin's reduction.
For every prefix $w\in C$, the strategy of \pzero in $\turn{G_M}$ can
be used to show a win in $\widehat{G}_m$ from $(w,v')$, where
$O(w)={\bowtie}r$ and either ${\bowtie}={\geq}$ and $v' \geq r$ or
${\bowtie}={>}$ and $v'>r$.
Essentially, {\pzero}'s strategy in $\turn{G_M}$ promises values for
each prefix visited in a play. These values are larger than the
values needed in $\widehat{G}_m$ and can be used to construct a strategy
in $\widehat{G}_m$.
If some prefix in $O$ is reached, then clearly the play must be
included in $C$ as there is a strategy of \pone that makes it
reachable.
Furthermore, every infinite play that includes infinitely many
prefixes in $C$ can be forced by \pone in $\turn{G_M}$ showing that it
is in $\alpha$.
It follows that in the Markov chain we have $\measure{M}\cG{C}{w}
\bowtie r$.
\end{compactitem}
\end{proof}
}
\shorten{
\proofoftheoremsimplevaluecoincideMC
}{}

The definition above uses the supremum over all good choice sets.
We show that there is a choice set that attains the supremum.

\begin{theorem}
For every Markov chain $M$, obligation
$\cG=\pair{\alpha,O}$, and prefix $w\in \widehat{S}$ there is a
choice set $C\in \cC$ such 
that 
$\measure{M}{\cG}{C}{w}=s(M,w)$.
\label{theorem:attained choice set MC}
\end{theorem}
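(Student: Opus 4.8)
The plan is to show that the supremum defining $\preoval(M,\cG,w)$ is attained by exhibiting a single good choice set whose measure equals $s(M,w)$. First I would recall that by Theorem~\ref{theorem:simple value coincide MC} we have $\preoval(M,\cG,w)=s(M,w)$ when $O(w)=\bot$ (and the analogous relationship in the obligation case), so it suffices to find $C\in\cC_w$ with $\measure{M}{\cG}{C}{w}=\preoval(M,\cG,w)$. The natural candidate is the \emph{union} of all good choice sets: set $C^{\ast}=\bigcup_{C\in\cC_w}C$. The two things to check are that $C^{\ast}$ is itself a good choice set (the conditions in the definition of goodness are preserved under this union) and that $\measure{M}{\cG}{C^{\ast}}{w}$ equals the supremum.

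For the monotonicity step, I would first observe that $\measure{M}{\cG}{C}{w}$ is monotone in $C$: if $C\subseteq C'$ then $\beta^{w}_{C}\subseteq\beta^{w}_{C'}$, since enlarging the choice set can only turn more ``first visit to $\cO$'' paths into good ones without removing any (here I use the convention, stated just before the theorem, that once an obligation prefix is excluded none of its extensions are included, so the sets of good paths nest cleanly). Hence $\measure{M}{\cG}{C^{\ast}}{w}\ge\measure{M}{\cG}{C}{w}$ for every $C\in\cC_w$, giving $\measure{M}{\cG}{C^{\ast}}{w}\ge\preoval(M,\cG,w)$; the reverse inequality is immediate once we know $C^{\ast}\in\cC_w$. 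So the crux is goodness of $C^{\ast}$. For the first goodness condition, a path $\pi$ with infinitely many prefixes in $C^{\ast}$ need not have infinitely many prefixes in any single $C$; this is the place where some care is needed. I would argue that any prefix of $\pi$ lying in $C^{\ast}$ lies in some good $C$, and that the obligations encountered along $\pi$ form a nested/consistent family, so one can either (i) argue directly that $\pi\in\alpha$ by a compactness/limit argument on the good choice sets witnessing successive prefixes, or (ii) replace $C^{\ast}$ by the union restricted to a single ``maximal'' coherent family of obligations, which still dominates every $C$ in measure. For the second goodness condition, fix $w'\in C^{\ast}$ with $O(w')={\bowtie}r$; then $w'\in C$ for some good $C$, so $\measure{M}{\cG}{C}{w'}\bowtie r$, and by monotonicity in the choice set $\measure{M}{\cG}{C^{\ast}}{w'}\ge\measure{M}{\cG}{C}{w'}\bowtie r$ (for ${\bowtie}={\geq}$ this is immediate; for ${\bowtie}={>}$ monotonicity still gives the strict inequality since we only increase the measure).

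The main obstacle I expect is precisely the first goodness condition for $C^{\ast}$: a play can be ``fed'' infinitely many $C^{\ast}$-prefixes drawn from infinitely many \emph{different} good choice sets, and goodness of each individual $C$ only controls plays that hit that particular $C$ infinitely often. To handle this I would exploit the tree structure of prefixes: the relevant obligations along a fixed infinite path are linearly ordered by the prefix relation, so if a path $\pi$ has infinitely many prefixes in $C^{\ast}$, I can build a single nested family and, using a diagonal/compactness argument together with the fact that each $C$ used is good and the definition of $\beta^{w}_{C}$ depends only on finitely many symbols at each obligation, produce a good choice set containing all these prefixes of $\pi$ (or show directly $\pi\in\alpha$). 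Once this is settled, the theorem follows by combining goodness of $C^{\ast}$ with the monotonicity inequality and Theorem~\ref{theorem:simple value coincide MC}. A small additional point is the case $O(w)\neq\bot$: there $s(M,w)$ is $0$ or $1$, and the statement still asks for a choice set realizing $\measure{M}{\cG}{C}{w}=s(M,w)$; when $s(M,w)=0$ the empty choice set works, and when $s(M,w)=1$ the argument above applied to $\preoval$ gives the required $C$.
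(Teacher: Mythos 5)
There is a genuine gap, and it sits exactly where you suspected. Your candidate $C^{\ast}=\bigcup_{C\in\cC_w}C$ is in general \emph{not} a good choice set, and no amount of compactness will repair the plain union: the example of Figure~\ref{figure:no global choice set} already breaks it for a fixed $w$. Take $w=s_1$ there. For every $k\geq 1$ the finite set $\set{s_1^2,\dots,s_1^{k+1}}$ is a good choice set (each obligation is met with measure at least $\frac{2}{3}>\frac{1}{3}$), so the union over all good choice sets contains every prefix $s_1^k$, and the path $s_1^\omega$ then visits infinitely many prefixes of $C^{\ast}$ while lying outside $\alpha$. Your monotonicity observation is correct and does salvage the second goodness condition for $C^{\ast}$, but the first condition genuinely fails, and the ``nested family along a single path'' you invoke is exactly the family $s_1^2\prec s_1^3\prec\cdots$ that witnesses the failure: each member is covered by \emph{some} good choice set, but no single good choice set covers cofinitely many of them, and no diagonal argument can conclude $\pi\in\alpha$ because $\pi\notin\alpha$. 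The option of restricting to a ``maximal coherent family'' is the right instinct but is not a construction; as stated the proof does not go through.

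The paper avoids the full union entirely. It picks a countable sequence of good choice sets $C_i$ with $\measure{M}{\cG}{C_i}{w}\geq s(G,w)-2^{-i}$ and merges them with \emph{priority to earlier indices}: $C'_0=C_0$ and $C'_{i+1}$ is $C_{i+1}$ with every prefix that extends a member of some $C'_j$, $j\leq i$, deleted; then $C=\bigcup_i C'_i$. The deletion is the key device you are missing. Along any infinite path the obligation prefixes in $C$ are linearly ordered by extension, and the deletion forces the index of the set $C'_{i}$ responsible for successive prefixes to be non-increasing, hence eventually constant; so a path hitting $C$ infinitely often eventually lives inside a single good $C_i$ and is in $\alpha$. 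The same minimality-of-index argument shows that for $w'\in C'_i$ the relevant extensions of $w'$ from $C_i$ all survive into $C$, so the obligation of $w'$ is still met; and $\measure{M}{\cG}{C}{w}=s(G,w)$ follows since any deficit would be contradicted by some $C_i$. If you replace the union over all of $\cC_w$ by this prioritized union of an approximating sequence, your outline becomes the paper's proof.
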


\newcommand{\proofoftheoremattainedchoicesetMC}{
\begin{proof}
Fix a prefix $w$.
There are choice sets 
$\set{C_i}_{i \in   \mathbb{N}}$ such that 
$\measure{M}{\cG}{C_i}{w} \geq s(G,w)-\frac{1}{2^i}$.

%Assume some arbitrary order on $S$. 
%We use this order to impose an order on the successors of a location
%(recall that every location has a finite number of successors).
%This creates a total order on prefixes as follows.
%Consider two prefixes $w'$ and $w''$ such that $w\neq w'$.
%If $w$ is a prefix of $w'$ then $w<w'$ and if $w'$ is a prefix of $w$
%then $w'<w$.
%Otherwise, let $u$ be the longest common suffix of $w$ and $w'$ and
%let $u\cdot s$ be the prefix of $w$ and $u\cdot s'$ be the prefix of
%$w'$, where $s\in S$.

Let $C'_0=C_0$.
Consider a set $C_{i+1}$.
Let 
$$C_{i+1}'=C_{i+1}{\setminus}\set{w ' \in C_{i+1} ~|~ \exists w'' \in
  C'_j \mbox{ for $j\leq i$ s.t. $w'=w''\cdot y$ for some $y$}}.$$ 
We set 
$C=\bigcup_{i=1}^\infty C'_i$.
We show that $C$ is a good choice set.

Consider an infinite path that visits infinitely many prefixes in $C$.
Clearly, all of the prefixes in $C$ belong to the same set $C_i$ for
some $i\in \mathbb{N}$.
Hence, by $C_i$ being a good choice set, the path is in $\alpha$.

Consider a point $w'\in C$.
Let $i$ be the minimal such that $w' \in C'_i$.
Then, for every extension $w''=w'\cdot y$ such that $w''\cdot y\in
C_i$ we have $w''\in C'_i$.
Indeed, as $w'$ does not have a prefix in $C'_j$ for all $j<i$, so is
the case for $w''$.
Then, as $\measure{M}\cG{C_i}{w'}$ satisfies the obligation of
$w'$ then so does $\measure{M}\cG{C}{w'}$.

We can show that $\measure{M}\cG{C}{w}=s(G,w)$.
Indeed, if it were smaller than $s(G,w)$ then there is an $i$ such
that $\measure{M}\cG{C_i}{w} > \measure{M}\cG{C}{w}$.
It must be the case that $C_i$ includes extensions of $w$ that are not
in $C$, contradicting the definition of $C$.
\end{proof}
}
\shorten{
\proofoftheoremattainedchoicesetMC
}{}

We can show that in some cases there is no one good choice set
that covers all possible prefixes.
Consider for example the Markov chain in Figure~\ref{figure:no global
  choice set}.
Suppose that the path in which $s_1$ appears infinitely often is not 
in $\alpha$.
Clearly, for every prefix $p=s_1\cdots s_1$ the pre-value of this
configuration is $1$.
Indeed, the choice set that includes exactly $p\cdot s_1$ proves that.
However, this choice set, establishes the value of $p\cdot s_1$ as
$\frac{2}{3}$, which is, as required, more than $\frac{1}{3}$.
A choice set that shows the values of \emph{all} prefixes
simultaneously, has to include all prefixes $s_1\cdots s_1$.
Thus, the infinite path $s_1\cdot s_1\cdots$ is visited infinitely
often by 
this choice set and it cannot be good.

\begin{figure}[bt]
\begin{center}
\input{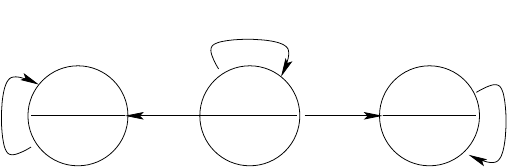_t} % export 50%
\end{center}
\figurespace{-5mm}
\caption{\label{figure:no global choice set}An obligation Markov
  chain with no global good choice set.}
\figurespace{-4mm}
\end{figure}

\mysection{Finite Turn-Based Stochastic Parity Games with Obligations}
\label{section:simple definition for TB games}

We extend the results from obligation Markov chains to finite
turn-based stochastic parity games with obligations, and show that the
value function in such games has an alternate direct characterization
using choice sets. 
The direct characterization is crucial to present
algorithms to solve finite turn-based stochastic parity games with
obligations.
%
%Based on the simple value definition for Markov chains with
%obligations, we can define a simple value in finite turn-based
%stochastic parity games with obligations.
The simpler definition does not generalize to infinite games,
Blackwell games, or more general winning conditions.
In these more complicated games optimal strategies do not always exist.
%\footnote{
%Optimal strategies exist also for finite turn-based stochastic games with
%tail-objectives. 
%Similar techniques can be developed for tail objectives that require
%finite memory.
%}
For such games, we need a more elaborate construction that captures
the winning with $\epsilon$-optimal strategies (and non-existence of
optimal strategies), as done %%This is done by the definition 
in Section~\ref{section:obligation blackwell games}.

We reuse the notation
$G=((V,E)$, $(V_0,V_1,V_p)$, $\kappa$, $\cG)$
for {\em turn-based stochastic parity games with obligations}.
Here, $\cG=\pair{\alpha,O}$ is a goal and $\alpha$ is derived
from a parity condition $c:V\rightarrow [0..k]$. 
%A strategy for \pzero is a function $\sigma:V^*\cdot V_0 \rightarrow
%\dist{V_1}$ such that for every $w\cdot v\in V^*\cdot V_0$ and
%$v'\in V$ we have $\sigma(w\cdot v)(v')>0$ implies $(v,v')\in E$.
%A strategy $\pi$ for \pone is defined similarly.
%Let $\Sigma$ and $\Pi$ denote the sets of all \pzero and \pone
%strategies, respectively.
Strategies are defined as before.
Given a prefix $w\in V^+$ and two strategies $\sigma$ and $\pi$, we
denote by $w(\sigma,\pi)$ the Markov chain obtained from $G$ by
using the strategies $\sigma$ and $\pi$ starting from prefix $w$.
Then, we define the value of \pzero in the prefix $w$ in the
game to be
$
\oval(G,w)=
\ds\sup_{\sigma\in\Sigma}
\ds\inf_{\pi\in\Pi}
\oval(w(\sigma,\pi),\cG,w).
$

We note that if we extend the definition of a measure of a choice set
so that bad choice sets give measure 0 for all configurations then the
following holds:
$$
\begin{array}{r}
\oval(G,w) =
\ds\sup_{\sigma\in\Sigma}\ds\inf_{\pi\in\Pi}\ds\sup_{C_w}
\measure{w(\sigma,\pi)}\cG{C_w}{w} \geq \quad \\
\multicolumn{1}{r}{
\geq \ds\sup_{\sigma\in\Sigma}\ds\sup_{C_w}\ds\inf_{\pi\in\Pi}
\measure{w(\sigma,\pi)}\cG{C_w}{w}}
\end{array}
$$
This follows from properties of supremum and infimum. 
In the proof of Theorem~\ref{theorem:simple value coincide SG}
we actually show that $\oval(G,w)\leq \zval_0(G,w) \leq \sup
\sup \inf (\cdots)$. Hence, the two are actually equivalent.
Formally, 
we show that for every finite turn-based stochastic parity game with
obligations and for every 
prefix the two values $\oval(G,w)$ and $\zval_0(G,w)$ coincide.

\begin{theorem}
For all finite turn-based stochastic obligation parity games
$G$ and prefix $w\in V^+$, $\oval(G,w)=\zval_0(G,w)$.
\label{theorem:simple value coincide SG}
\end{theorem}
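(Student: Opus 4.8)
The plan is to mimic the structure of the proof of Theorem~\ref{theorem:simple value coincide MC}, but now carrying along the strategies of the two players rather than working directly on a single Markov chain. I would prove the two inequalities $\oval(G,w)\le \zval_0(G,w)$ and $\oval(G,w)\ge \zval_0(G,w)$ separately, and in fact (as anticipated in the discussion preceding the statement) establish the chain $\oval(G,w)\le\zval_0(G,w)\le\sup_\sigma\sup_{C_w}\inf_\pi\measure{w(\sigma,\pi)}\cG{C_w}{w}\le\oval(G,w)$, so that all three coincide. The technical engine throughout is Martin's turn-based game $\widehat{G}_m$ (recalled in the proof of Theorem~\ref{theorem:simple value coincide MC}) together with determinacy of turn-based games with Borel objectives (Theorem~\ref{theorem:determinacy turn based games}); the novelty over the Markov-chain case is that $\turn{G}$ has extra choices made by \pzero and \pone at the $V_0$ and $V_1$ configurations of $G$, which must be threaded through the correspondence.

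For the direction $\oval(G,w)\le\zval_0(G,w)$: fix $\epsilon>0$ and a \pzero strategy $\sigma$ in $G$ that is $\epsilon$-optimal, i.e.\ $\inf_\pi\oval(w(\sigma,\pi),\cG,w)\ge\oval(G,w)-\epsilon$. Since $w(\sigma,\pi)$ is a Markov chain for each $\pi$, Theorem~\ref{theorem:simple value coincide MC} gives us, for each $\pi$, a good choice set; the point is that $\sigma$ being an honest commitment lets us produce a single ``schematic'' choice set living in the game tree of $G$, and then play the Martin-style propagation game $\turn{G}$ by: at $V_0$-configurations use $\sigma$'s move; at $V_1$-configurations let the opponent move and follow along; at probabilistic configurations run Martin's value-announcement argument exactly as in the Markov-chain proof, maintaining the requirement in $\turn{G}$ strictly below the Martin-game requirement by halving the gap at each step, and restarting a fresh copy of $\widehat{G}_m$ whenever an obligation in the choice set is hit (using \pone's concession $v'=v-\tfrac1n$ at $\ge$-obligations). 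Goodness of the choice set guarantees that plays hitting obligations infinitely often are in $\alpha$, and Martin's argument guarantees the remaining plays project into $\alpha$. Hence \pzero wins from $(w,\oval(G,w)-\epsilon)$ in $\turn{G}$, so $\zval_0(G,w)\ge\oval(G,w)-\epsilon$; let $\epsilon\to0$.

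For the direction $\zval_0(G,w)\le\oval(G,w)$ — which I expect to be the main obstacle — I would fix a winning strategy $\sigma$ for \pzero in $\turn{G}$ from $(w,v)$ with $v<\zval_0(G,w)$ and extract from it a \pzero strategy $\hat\sigma$ in $G$ together with a choice set, reusing verbatim the involved ``rank''/``good prefix'' construction from the proof of Theorem~\ref{theorem:simple value coincide MC} (the excerpt explicitly notes that that construction ``does not depend on $G_M$ being derived from a Markov chain'' and is reused here). Concretely: $\turn{G}$'s strategy $\sigma$, when it reaches a $V_0$-configuration of $G$, must announce a successor of $G$, and that choice \emph{is} $\hat\sigma$'s move; the $\ge$-obligation configurations force \pzero to win for every concession $v-\tfrac1n$, and the lexicographic-minimal-rank selection picks one coherent sub-strategy across all nested $\ge$-obligations, yielding the choice set $C$. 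One then verifies $C$ is good: paths hitting $C$ infinitely often appear in $\turn{G}$ and hence lie in $\hat\alpha$, so their $V^+$-projection is in $\alpha$ (this is where finiteness and the parity shape of $\alpha$ are convenient but the argument is really the same as in the Markov-chain case); and each obligation in $C$ is met because $\sigma$'s announced values dominate those needed in Martin's game $\widehat{G}_m$ on the Markov chain $w(\hat\sigma,\pi)$ for every \pone response $\pi$, giving $\measure{w(\hat\sigma,\pi)}\cG{C}{w'}\bowtie r$ uniformly in $\pi$. Therefore $\inf_\pi\oval(w(\hat\sigma,\pi),\cG,w)\ge v$ (using Theorem~\ref{theorem:simple value coincide MC} pointwise), whence $\oval(G,w)\ge v$; letting $v\uparrow\zval_0(G,w)$ closes the loop. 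The delicate point is the uniformity in $\pi$: we must be careful that the \emph{same} choice set $C$ works against every \pone strategy, which is exactly what the $\turn{G}$-strategy $\sigma$ guarantees since \pone's moves in $\turn{G}$ subsume all of \pone's moves in $G$ together with the choice of which probabilistic successor to descend into — so the extraction must be organized to be independent of the specific \pone responses, and I expect verifying this independence, rather than any single computation, to be the crux.
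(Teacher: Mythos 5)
Your second direction (extracting a \pzero strategy and a choice set in $G$ from a winning strategy in $\turn{G}$, reusing the rank construction from Theorem~\ref{theorem:simple value coincide MC}) is essentially the paper's argument for $\zval_0(G,w)\leq \oval(G,w)$, and your observation that uniformity in $\pi$ comes for free there---because \pone's moves in $\turn{G}$ subsume all of her moves in $G$ together with the successor choices---is the right one. (The paper additionally reshapes the extracted set into a ``regular'' choice set built from priority-annotated reachability objectives, so that meeting each obligation reduces to a parity objective in the finite game whose value is \emph{attained} by an optimal strategy; this is where finiteness and the parity form of $\alpha$ are genuinely used, not merely ``convenient''.)

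The gap is in your first direction, $\oval(G,w)\leq\zval_0(G,w)$. From an $\epsilon$-optimal $\sigma$ you only know that for \emph{each} \pone strategy $\pi$ separately the Markov chain $w(\sigma,\pi)$ admits \emph{some} good choice set of large measure; the choice set supplied by Theorem~\ref{theorem:simple value coincide MC} depends on $\pi$, because goodness is a measure condition evaluated in $w(\sigma,\pi)$. Your claim that ``$\sigma$ being an honest commitment lets us produce a single schematic choice set'' is precisely the quantifier exchange $\sup_\sigma\inf_\pi\sup_C\leq\sup_\sigma\sup_C\inf_\pi$, which is the nontrivial content of the theorem and goes the wrong way a priori; asserting it is circular. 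Without one $C$ fixed in advance, \pzero cannot announce the functions $f$ in $\turn{G}$, since she must commit to values before seeing \pone's future moves. The paper does not attempt this direct construction: it proves $\oval(G,w)\leq\zval_0(G,w)$ by contradiction, using determinacy (Theorem~\ref{theorem:well defined values}) to turn the assumption $\zval_0(G,w)<\oval(G,w)$ into a winning strategy for \pone in $\turn{\dual{G}}$ from an intermediate value, extracting from it a \pone choice set $T'$ whose infinitely-recurring paths violate $\alpha$, and then arguing by a measure count (the two claimed values sum to more than $1$, while $\cN^\omega\cap\alpha$ and its complement are disjoint) that any \pzero choice set witnessing $\oval(G,w)$ must meet $T'$ again and again, yielding a path that would have to be both in and out of $\alpha$. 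You need this duality step, or an equivalent uniformization argument, to close the first inequality.
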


\begin{figure}[bt]
\begin{center}
\input{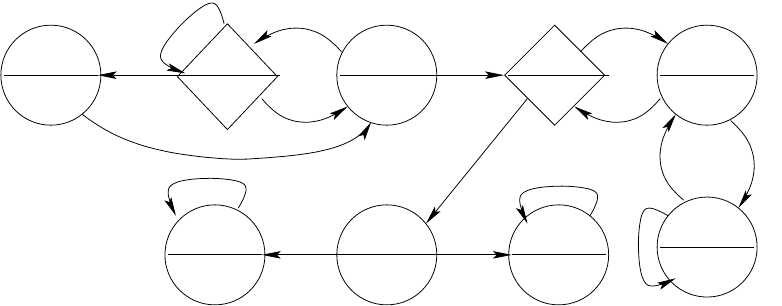_t}
\end{center}
\figurespace{-5mm}
\caption{\label{figure:memoryfull}
  Finite turn-based stochastic parity game with obligations requiring
  memory. 
  Diamonds are \pzero configurations and circles are
  stochastic configurations. 
  Priorities in the range $[0..1]$ next to state define a
  parity acceptance condition. Only configurations $v_6$ and $v_8$
  have priority $0$.
}
\figurespace{-5mm}
\end{figure}

As another illustration of the notion of a choice set consider
the game in Figure~\ref{figure:memoryfull}.
In order to use $v_1$, \pzero has to win more than $\frac{1}{2}$ from
that configuration. 
Choosing the self loop from $v_3$ to itself or the edge from $v_3$ to
$v_1$ only makes things worse (though, each can be chosen a finite
number of times). 
So the only option from $v_3$ is to go
back to $v_2$ so that the probability of getting from $v_2$ to $v_4$
is $1$.
If from $v_4$, \pzero chooses to go to $v_7$ the value is
$\frac{1}{2}$ which does not satisfy the obligation of $v_1$.
Going from $v_4$ to $v_5$, on the other hand, and upon returning from
$v_5$ to $v_4$ proceeding to $v_7$ fulfills all obligations.
Indeed, the value for $v_1$ is $1$ as all paths eventually reach
$v_5$, and the value for $v_5$ is $\frac{3}{4}$ as the loop to itself
through $v_6$ is winning and the paths from $v_5$ to $v_4$ and then on
to $v_7$ have value $\frac{1}{4}$.
It follows that a possible choice set for this game is 
$
C=\{v_1,v_1v_2(v_3v_2)^*v_4v_5(v_6^+v_5)^*\}.
$
Indeed, \pzero has a strategy reaching from $v_1$ to
$v_1v_2(v_3v_2)^*v_4v_5$ with probability $1$.
She has a strategy from $v_1v_2(v_3v_2)^iv_4v_5(v_6^jv_5)^k$ to either
reach $v_1v_2(v_3v_2)^iv_4v_5(v_6^jv_5)^{k}v_6^+v_5$ or win the parity
objectives with probability $\frac{3}{4}$.
We note that \pzero uses its first visit to $v_4$ to go to $v_5$, in
order to boost the probability needed for the obligation of $v_1$, and
in subsequent visits goes to $v_7$.

We show that for a finite turn-based stochastic parity game $G$, from
a winning strategy in the uncountable game $\turn{G}$, we can
construct a regular witness choice set $C'$.
Using the regular witness and the fact that for finite turn-based
stochastic $\omega$-regular games finite-memory optimal
strategies exist we obtain 
Theorem~\ref{theorem:simple value coincide SG} .

\newcommand{\proofoftheoremsimplevaluecoincideSG}{
\begin{proof}
\noindent
As before, let $\cO=\{v \in V ~|~ O(v)\neq \bot\}$, $\cN=\{v\in V ~|~
O(v)=\bot\}$. 
Let $\cO_i= \{v \in \cO ~|~ c(v)=i\}$ and $\cN_i \{v\in \cN ~|~
c(v)=i\}$ be the obligation and non-obligation configurations with
priority $i$.
Similarly, let $\cO_{\geq i}=\{v\in \cO ~|~ c(v)\geq i\}$ and
$\cN_{\geq i}=\{v\in \cO ~|~ c(v)\geq i\}$. 

We show that $\oval(G,w)\geq \zval_0(G,w)$.

\begin{compactitem}
\item
Consider a winning strategy for \pzero in $\turn{G}$ that starts in
$(w,r)$.
We can extract from it a set $C$ of obligations that are used.
This is done just like in the proof of 
Theorem~\ref{theorem:simple value coincide MC}.
Clearly, every path that visits infinitely many prefixes in $C$ is in
$\alpha$ as it appears also in $\turn{G}$ as before.
We modify $C$ to a set $C'$ that we can show is good.
%{\bf Question: Is $C'$ regular?}
We construct $C'$ by induction on the number of obligations passed on
the way from $(w,r)$.

Consider a prefix $w'\in C \cup \{w\}$.
Let $ob(w')$ be the set of obligations $w''\in C$ such that
$w''=w'\cdot w'''$ and $w'''\in \cN^*\cdot \cO$.
That is, $ob(w')$ is the set of obligations directly reachable from
$w'$ without passing through other obligations.
Furthermore, annotate every prefix $w''$ in $ob(w')$ by the minimal
priority occurring in $w'''$, where $w''=w'\cdot w'''$ and $w''' \in
\cN^*\cdot \cO$.
For every prefix $w'\in C\cup \{w\}$ we define a set
$obs(w') \subseteq V \times [0..k]$, where $[0..k]$ are the priorities
of the parity condition.
Formally, $obs(w')$ is the set of pairs $(v',i')$ such that some $w''
\in ob(w')$ is annotated by $i'$ and the last configuration in $w''$
is $v'$.

Let $\cN_{i}$ and $\cO_i$ denote the configurations in $G$ whose priority
is $i$.
Let $\cN_{\geq i}$ and $\cO_{\geq i}$ denote the configurations in $G$
whose priority is at least $i$.

We now construct $C'$ by induction. We label every prefix $p'\in C'$
by a prefix $p\in C$ that is the \emph{reason} for inclusion of $p'$
in $C'$.
Consider the configuration $w$.
By construction, for every $(v,i)\in obs(w)$ there is a prefix
$w_{(v,i)} \in C$ such that $w_{(v,i)}\in ob(w)$ and $w_{(v,i)}$ ends in
$v$.
We add to $C'$ all the prefixes $w \cdot p$, where $p$ is in the
following set (restricted to those reachable from $w$):
$$
\bigcup_{(v,i) \in obs(w)} (\cN_{\geq i}^* \cdot \cN_i \cdot \cN_{\geq
  i}^* \cdot \{v\} \cup \cN_{\geq i}^* \cdot (\cO_i \cap \{v\}))$$
Furthermore, every prefix $w\cdot p$ for $p\in (\cN_{\geq i}^* \cdot \cN_i
\cdot \cN_{\geq i}^* \cdot \{v\} \cup \cN_{\geq i}^* \cdot (\cO_i \cap
\{v\}))$ is labeled by $w\cdot w_{(v,i)}$.
We now continue by induction.
Consider a prefix $p'\in C'$ that is labeled by prefix $p\in C$.
By construction, for every $(v,i)\in obs(p)$ there is a prefix
$p_{(v,i)} \in C$ such that $p_{(v,i)} \in ob(p)$ and $p_{(v,i)}$ ends
in $v$.
We add to $C'$ all the prefixes $p'\cdot p''$, where $p''$ is in the
following set (restricted to those reachable from $p'$):
$$
\bigcup_{(v,i) \in obs(p)} (\cN_{\geq i}^* \cdot \cN_i \cdot 
\cN_{\geq i}^* \cdot \{v\} \cup \cN_{\geq i}^* \cdot (\cO_i \cap \{v\}))$$
Furthermore, every prefix $p'\cdot p''$ for $p''\in (\cN_{\geq i}^* \cdot \cN_i
\cdot \cN_{\geq i}^* \cdot \{v\} \cup \cN_{\geq i}^* \cdot (\cO_i \cap
\{v\}))$ is labeled by $p\cdot p_{(v,i)}$.

This completes the construction of $C'$.
We have to show that $C'$ is a good choice set.
That is, every infinite path in $G$ that visits infinitely many
configurations in $C'$ is fair and the strategy of \pzero in $G$
establishes all the obligations posed by $C'$.

The fact that every infinite path in $G$ that visits infinitely many
prefixes in $C'$ is fair can be deduced by following the labels
in $C$ of prefixes in $C'$.
Consider such an infinite sequence of prefixes $p_0,p_1,\ldots$ in
$C'$ and their respective labels $w_0,w_1,\ldots$ from $C$.
By construction, the minimal priority visited in the extension of 
$p_i$ to $p_{i+1}$ is the minimal priority visited in the extension of
$w_i$ to $w_{i+1}$.
Furthermore, the sequence $w_0,w_1,\ldots$ corresponds to a path in
$G$ that visits infinitely many configurations in $C$.
As $C$ is obtained from $\turn{G}$, it follows that the limit of
$w_0,w_1,\ldots$ is fair.
That is, the limit of $w_0,w_1,\ldots$ satisfies the parity
objective.
We conclude that the limit of $p_0,p_1,\ldots$ is fair as well.

We now have to show that all obligations in $C'$ are met.
Consider a prefix $p'\in C'$ labeled by prefix $p\in C$.
Both $p'$ and $p$ end in the same configuration $v\in V$ of $G$.
It follows that the obligation $O(v)$ is fulfilled in $\turn{G}$.
Assume that $O(v)={\geq}r$.
It follows that \pzero wins in $\turn{G}$ from $(p,r')$ for every
$r'<r$. 
Recall the sets $ob(p) \subseteq C$ and $obs(p) \subseteq V \times
[0..k]$. 
The winning in $\turn{G}$ from $(p,r')$ for every $r'<r$ can be
translated to a win in $\hat{G}$ for the goal $ob(p)\cdot V^\omega$
for every $r''<r$, where $\hat{G}$ is the game obtained from $G$ by Martin's
reduction. 
Thus, the value of $ob(p)\cdot V^\omega \cup (\alpha \cap \cN^\omega)$ in $G$
is $r$. 
We now consider the following goal $\gamma$ in $G$:
$$\gamma = \bigcup_{(v,i) \in obs(p)} ((\cN_{\geq i}^*\cdot \cN_i \cdot
\cN_{\geq i}^* \cdot \{v\}) \cup (\cN_{\geq i}^* \cdot (\cO_i \cap \{v\}))) \cup 
(\alpha \cap \cN^\omega)$$
In particular, $\gamma$ contains at least all the extensions $p'$ such
that $p\cdot p'\in ob(p)$ as well as $\alpha\cap \cN^\omega$.
Furthermore, $\gamma$ can be translated to a parity goal
in $G$ by including a simple monitor for the minimal parity
encountered along the path.
As $reach(ob(p))\cup (\alpha \cap \cN^\omega) \subseteq \gamma$ it
follows that the value of $\gamma$ in $G$ is at least $r$.
However, values in finite turn-based stochastic parity games are
attained.
That is, there is a strategy for \pzero such that the value of
$\gamma$ according to this strategy is at least $r$.
It follows that by using this strategy \pzero can ensure the
obligation of $p$ in $G$.
The case that $O(v)={>}r$ is simpler, as \pzero wins directly from
$(v,r)$ in $\turn{G}$.

It follows that $C'$ is a good choice set and that \pzero has a
strategy that ensures that all obligations in $C'$ are met.
\end{compactitem}

\noindent
In the other direction we show that $\oval(G,w) \leq \zval_0(G,w)$.

\begin{compactitem}
\item
Suppose by way of contradiction that $r=\oval(G,w) > \zval_0(G,w)$.
Let $t<r$ be such that $t>\zval_0(G,w)$.
By definition of $\zval_0(G,w)$, \pone wins
in $\turn{\dual{G}}$ from $(w,1-t)$.
This winning strategy induces a good choice set
$T$ just like in the previous proofs.
Note that this set is good for \pone. 
Thus, every path that visits infinitely many configurations in
$T$ \emph{does not} satisfy the acceptance condition.
As in the other direction of the proof, the set $T$ can be extended to
a good choice set $T'$ such that \pone has a strategy to enforce all
obligations in $T'$ (in the dual game).
We show that $T'$ proves that the value of $w$ in $G$ cannot
be $r$.
Fix a strategy $\sigma$ of \pzero in $G$.
We show that the strategies of \pone in $G$ that enforce the
obligations in $T'$ (in the dual game) induce a strategy $\pi\in\Pi$
such that every choice set in $G_{\sigma,\pi}$ that shows the value
$r$ cannot be good.
Notice, that the set $\cO$ is the same in $G$ and $\dual{G}$.
Hence, there is a strategy $\pi$ for \pone in $G$ that achieves the
value $1-t$ for \pone.
Consider the strategy $\sigma$ and assume that it achieves the value
$r$ for \pzero in $G$ for some choice set $C$.
Then, as $1-t+r>1$, and $\cN^\omega \cap \alpha$ and
$\cN^\omega \cap \overline{\alpha}$ are disjoint, it follows
that $C$ and $T'$ have a non-empty intersection such that the strategy
$\sigma$ reaches $C\cap T'$.

We now proceed by induction.
Consider a configuration $w' \in C\cap T'$. 
There are two cases, either $O(w')={\geq} r'$ or $O(w')={>}r'$.
\begin{compactitem}
\item
Suppose that $O(w')={\geq} r'$.
In this case, the obligation of $w'$ in $\dual{G}$ is ${>}1-r'$.
It follows that there is some value $r''<r'$ such that the
value of $(T'\cdot V^\omega) \cup (\cN^\omega \cap \overline{\alpha})$ in
$G$ for \pone is $1-r''$.
Then, as $1-r''+r'>1$, and $\cN^\omega\cap \alpha$ and
$\cN^\omega \cap \overline{\alpha}$ are disjoint, it follows
that $C$ and $T'$ have a non-empty intersection such that the strategy
$\sigma$ reaches $C\cap T'$.
\item
Suppose that $O(w')={>}r'$.
In this case, there is a value $r''>r$ such that $\sigma$ must attain
the goal $(C \cdot V^\omega) \cup (\cN^\omega \cap \alpha)$ with
probability $r''$.
At the same time \pone can force the goal $(T'\cdot V^\omega)\cup
(\cN^\omega\cap \overline{\alpha})$ with probability $1-r'$.
As $r''+1-r'>1$ the sets $C$ and $T'$ have a non-empty
intersection such that the strategy $\sigma$ reaches $C\cap T'$.
\end{compactitem}
Continuing by induction we create a path that visits infinitely many
configurations in $T'$ and in $C$.
It follows that $C$ cannot be a good choice set.
\end{compactitem}
\end{proof}
}
\shorten{
\proofoftheoremsimplevaluecoincideSG
}{}

\begin{corollary}
  For every finite turn-based stochastic parity game with obligations
  $G$ and prefix $w \in V^+$, there are strategies $\sigma\in\Sigma$ and
  $\pi\in\Pi$ such that 
  $\oval(G,w)=\oval(w(\sigma,\pi),\cG,w)$.
  Furthermore, for every strategy $\sigma'\in\Sigma$ and $\pi'\in\Pi$
  we have 
  $
  \oval(w(\sigma',\pi),\cG,w) \leq 
  \oval(G,w) \leq
  \oval(w(\sigma,\pi'),\cG,w). 
  $
  \label{cor:values are attained}
\end{corollary}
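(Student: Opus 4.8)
The plan is to assemble the statement from three already-available pieces: the equivalence $\oval(G,w)=\zval_0(G,w)$ of Theorem~\ref{theorem:simple value coincide SG}, the determinacy of obligation games (Theorem~\ref{theorem:well defined values}) applied to both $G$ and its dual $\dual G$, and the \emph{regular} witness choice set produced inside the proof of Theorem~\ref{theorem:simple value coincide SG}. The one genuinely new point is that that extraction can be made to \emph{attain} the value rather than merely approximate it, and this is exactly where finiteness of $G$ is used.

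First I would record a ``Markov-chain determinacy'' lemma: for every Markov chain $M$ with obligation $\cG$ and every prefix $w$, $\oval(M,\cG,w)+\oval(M,\dual\cG,w)=1$. This follows by regarding $M$ as the single-action game $G_M$ of Section~\ref{section:simple definition for MC}, noting that $\dual{G_M}$ is literally $M$ equipped with goal $\dual\cG$, applying Theorem~\ref{theorem:simple value coincide MC} to both to get $\oval(M,\cG,w)=\zval_0(G_M,w)$ and $\oval(M,\dual\cG,w)=\zval_0(\dual{G_M},w)=\zval_1(G_M,w)$, and invoking $\zval_0(G_M,w)+\zval_1(G_M,w)=1$ from Theorem~\ref{theorem:well defined values}.

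Now for the attainment. Assume $O(w)=\bot$, so $\oval(G,w)=\zval_0(G,w)=s(G,w)$. The proof of Theorem~\ref{theorem:simple value coincide SG}, direction $\oval(G,w)\ge\zval_0(G,w)$, builds from a winning \pzero strategy in $\turn G$ a regular good choice set together with a finite-memory \pzero strategy meeting all its obligations against every \pone strategy; at each ``phase'' between consecutive obligations the relevant objective is a parity objective (with a bounded minimal-priority monitor), and finite turn-based stochastic parity games have finite-memory optimal strategies (Theorem~\ref{theorem:memoryless determinacy of parity games}). Since $G$ is finite there are only finitely many combinatorially distinct such regular choice sets, so picking one whose top-level objective has maximal value yields a regular good choice set $C$ and a finite-memory $\sigma\in\Sigma$ such that for every $\pi'\in\Pi$ the set $C$ is good in $w(\sigma,\pi')$ and $\measure{w(\sigma,\pi')}{\cG}{C}{w}\ge\oval(G,w)$, hence $\oval(w(\sigma,\pi'),\cG,w)\ge\oval(G,w)$. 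Since $\inf_{\pi'}\oval(w(\sigma,\pi'),\cG,w)\le\sup_{\sigma''}\inf_{\pi''}\oval(w(\sigma'',\pi''),\cG,w)=\oval(G,w)$, this infimum equals $\oval(G,w)$, so \pzero's side of the value is attained. Applying the same extraction to $\dual G$ (which is again a finite turn-based stochastic parity game with obligations, with $\dual\cG$) yields a finite-memory \pzero-in-$\dual G$ strategy $\tau$, that is, a strategy $\pi\in\Pi$ of \pone in $G$, with $\inf_{\rho}\oval(w(\tau,\rho),\dual\cG,w)=\oval(\dual G,w)=1-\oval(G,w)$, where $\rho$ ranges over \pone-in-$\dual G$ strategies, i.e.\ over \pzero strategies $\sigma'$ in $G$. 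Because the Markov chains $w(\tau,\sigma')$ of $\dual G$ and $w(\sigma',\pi)$ of $G$ coincide, the Markov-chain determinacy lemma gives $\oval(w(\sigma',\pi),\cG,w)=1-\oval(w(\tau,\sigma'),\dual\cG,w)$ for every $\sigma'$, so $\sup_{\sigma'}\oval(w(\sigma',\pi),\cG,w)=1-\inf_{\sigma'}\oval(w(\tau,\sigma'),\dual\cG,w)=\oval(G,w)$. Combining the two bounds, $\oval(w(\sigma',\pi),\cG,w)\le\oval(G,w)\le\oval(w(\sigma,\pi'),\cG,w)$ for all $\sigma',\pi'$, and instantiating at $(\sigma,\pi)$ forces $\oval(w(\sigma,\pi),\cG,w)=\oval(G,w)$. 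When $O(w)\ne\bot$ the value lies in $\{0,1\}$ (Corollary~\ref{corollary:value of obligation configurations}); if it equals $1$ the extraction on $G$ produces $\sigma$ with $\oval(w(\sigma,\pi'),\cG,w)=1$ for all $\pi'$ and any $\pi$ completes the pair, and the case $0$ is symmetric on $\dual G$.

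The hard part will be the attainment step itself: one has to argue that the construction inside the proof of Theorem~\ref{theorem:simple value coincide SG} can be driven to a \emph{single} finite-memory \pzero strategy realizing one fixed regular good choice set and playing optimally in each of its finitely many phases, rather than to an $\varepsilon$-optimal family indexed by $r<s(G,w)$. This is precisely where finiteness of $G$ and finite-memory determinacy of parity games are indispensable -- in general Blackwell or infinite games optimal strategies need not exist -- and it is why the corollary is stated only for finite turn-based stochastic parity games; the Markov-chain analogue of this attainment is the already-established Theorem~\ref{theorem:attained choice set MC}.
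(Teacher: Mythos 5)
Your proposal is correct and reaches the statement by the same overall strategy as the paper (regular witness choice sets from Theorem~\ref{theorem:simple value coincide SG} plus attainment of values in finite turn-based stochastic parity games), but the two halves are executed differently. For \pzero's side, the paper takes a sequence of $\frac{1}{n}$-optimal strategies with good choice sets $C_n$ achieving $\oval(G,w)-\frac{1}{n}$, merges the countably many $C_n$ into a single choice set $C$ by the priority-union construction of Theorem~\ref{theorem:attained choice set MC}, observes that $C$ inherits the regular structure, and then uses finiteness of $G$ to conclude that the associated reachability-or-parity goal is enforced with probability exactly $\oval(G,w)$. You instead observe that the regular choice sets are determined by finitely many combinatorial structures (essentially the game dependencies of Section~\ref{subsection:finite games}) and pigeonhole on them to fix a single structure whose goal has value at least $\oval(G,w)$, then invoke attainment of that goal's value; this is closer in spirit to Lemmas~\ref{lemma:existence of dependency} and~\ref{lemma:value computed} and arguably makes the use of finiteness more transparent. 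For \pone's side the paper only says the argument is ``similar,'' whereas you spell it out via $\dual{G}$ together with an explicit Markov-chain determinacy lemma ($\oval(M,\cG,w)+\oval(M,\dual{\cG},w)=1$, obtained from Theorems~\ref{theorem:simple value coincide MC} and~\ref{theorem:well defined values}) to translate the dual optimal strategy back into the inequality $\oval(w(\sigma',\pi),\cG,w)\leq\oval(G,w)$; that lemma is a genuinely useful addition, since without it the passage from optimality in $\dual{G}$ to the stated inequality in $G$ is not immediate. The stitching of per-phase optimal strategies into one finite-memory strategy realizing the fixed choice set, which you flag as the hard part, is present at the same level of detail (i.e., implicitly) in the paper's own proof, so it does not constitute a gap relative to the paper.
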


\newcommand{\proofofcorollaryvaluesareattained}{
\begin{proof}
This follows from the proofs of 
Theorems~\ref{theorem:simple value coincide MC} and 
\ref{theorem:simple value coincide SG}.
Consider a configuration $w$.
Suppose that $\oval(G,w)=r$.
Then, for every $n$ there is a strategy $\sigma_n$ such that for every
$\pi\in\Pi$ the value $\oval(G_{\sigma_n,\pi}(w),\cG,w) \geq
r-\frac{1}{n}$.
Furthermore, there is a good choice set $C_n$ such that the goal
$(C_n \cdot V^\omega) \cup (\cN^\omega\cap \alpha)$ is enforced with
probability at least $r-\frac{1}{n}$.
As in the proof of Theorem~\ref{theorem:simple value coincide MC} the
different choice sets $\{C_n\}_{n >0}$ can be combined to a single
choice set $C$.
Furthermore, the choice set $C$ has a simple structure as in the proof
of Theorem~\ref{theorem:simple value coincide SG}.
It follows that \pzero can enforce the goal $(C\cdot V^\omega) \cup
(\cN^\omega \cap \alpha)$ with probability larger than
$r-\frac{1}{n}$ for every $n$.
As $G$ is finite it must be that $(C\cdot V^\omega)\cup
(\cN^\omega\cap \alpha)$ can be enforced with probability
$r$.

The proof that \pone also has an optimal strategy is similar.
\end{proof}
}
\shorten{
\proofofcorollaryvaluesareattained
}{}

%\mysection{Algorithms for solving obligation Blackwell games}
\mysection{Algorithmic Analysis of Obligation Games}
\label{section:algorithms}

We give algorithms for solving obligation Blackwell games in two
cases.
First, in case that in every path in the game, the number of
transitions between an obligation configuration and a non-obligation
configuration is bounded.
In this case, we show that obligation Blackwell games can be reduced
to a sequence of turn-based stochastic games.
Second, in case that the game is finite and the winning condition is a
parity condition.
In this case, we give an exponential time algorithm for computing the
value of the game.

\mysubsection{Reduction to Stochastic Games}
\label{subsection:reduction}
Essentially, this is the solution adopted in \cite{HPW10} for solving
acceptance of uniform p-automata.
We partition the game to regions where there are no
transitions between obligation configurations and non-obligation
configurations. 
A region that consists only of non-obligation configurations can be
thought of as a stochastic game.
A region that consists only of obligation configurations can be
thought of as a turn-based (non-stochastic) game.
More formally, we have the following.

Consider an obligation Blackwell game $G=(V, A_0, A_1, R$, ${\cal G})$,
where ${\cal G}=\pair{\alpha,O}$.
We say that a configuration $v$ is \emph{pure} if for every
$a_0\in A_0$ and $a_1\in A_1$ we have $R(v,a_0,a_1)$ is pure.
We say that the game is \emph{uniform} if all the following
holds.
\begin{compactitem}
\item
There is a partition $\{V_i\}_{i\in \mathbb{N}}$ of $V$ such that for
every $i$ we have, either (i) for every $v\in V_i$ we have $O(v)=\bot$ or
(ii) for every $v\in V_i$ we have $O(v)\neq \bot$ or $v$ is pure.
\item
We say that $V_i\leq V_{i'}$ if there are some $v\in V_i$, $v'\in V_{i'}$,
$a_0\in A_0$, and $a_1\in A_1$ such that $R(v,a_0,a_1)(v')>0$.
The partition must also satisfy that every chain according to $\leq$ is
finite. 
\end{compactitem}

\begin{theorem}
The computation of the value of a uniform obligation Blackwell game
$G$ can be reduced to the solution of multiple Blackwell games.
\label{theorem:uniform blackwell games}
\end{theorem}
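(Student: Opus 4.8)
The plan is to compute the value of $G$ region by region, processing the partition $\{V_i\}_{i\in\mathbb{N}}$ in an order compatible with $\le$: a region $V_i$ is handled only after every region $V_j$ with $V_i\le V_j$ and $V_j\neq V_i$ (the regions reachable from $V_i$ in at least one step, leaving $V_i$) has already been handled. Since every $\le$-chain is finite, the strict part of $\le$ is well founded, so this induction is legitimate; the base cases are the $\le$-maximal regions, all of whose successor configurations lie inside the region itself. While processing $V_i$ I may therefore assume that the value $\zval_0(G,v)$ is already known for every configuration $v$ in the set $X_i$ of \emph{exit configurations}, i.e.\ those $v\notin V_i$ with $R(v',a_0,a_1)(v)>0$ for some $v'\in V_i$ and actions $a_0,a_1$; write $p_v=\zval_0(G,v)$ for its known value.

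From $V_i$ I build one Blackwell game $G_i$. Its arena keeps the configurations of $V_i$ with the transitions of $R$ restricted to $V_i$, and redirects every transition leaving $V_i$ into $v\in X_i$ to an \emph{absorbing gadget} for $v$: a probabilistic configuration that moves to a sink $\top$ with probability $p_v$ and to a sink $\bot$ with probability $1-p_v$, both sinks looping on themselves (this is a legal transition since $\dist{\cdot}$ permits arbitrary real probabilities). If $V_i$ is of type (i), i.e.\ $O(v)=\bot$ throughout $V_i$, the winning set $\alpha_i$ of $G_i$ is the Borel set of plays that reach $\top$ or remain in $V_i$ forever with projection in $\alpha$; by Theorem~\ref{theorem:determinacy blackwell games} this game is determined, and I claim $\zval_0(G_i,v)=\zval_0(G,v)$ for $v\in V_i$. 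To prove the claim I would cut the Martin-style game $\turn G$ of Section~\ref{section:obligation blackwell games} along the same boundary: since no configuration of $V_i$ carries an obligation, the fragment of $\turn G$ reachable before $V_i$ is left coincides with that of $\turn{G_i}$ once exit configurations are replaced by their value, and Lemma~\ref{lemma:von neumann-ness of values} together with the induction hypothesis at $X_i$ lets me transfer winning strategies in either direction; the downward-closure Lemma~\ref{lemma:value closed downwards} then identifies the threshold $s(G,v)$ with $s(G_i,v)$.

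If $V_i$ is of type (ii), every configuration of $V_i$ is an obligation configuration or is pure, so the obligation semantics enters. By Corollary~\ref{corollary:value of obligation configurations} every obligation configuration of $V_i$ has value $0$ or $1$, so the work is to pin down the set $C$ of obligation configurations of $V_i$ whose obligation is actually met. I would characterise $C$ as the largest set of obligation configurations of $V_i$ such that, for each $v\in C$ with $O(v)={\bowtie}r$, the Blackwell game on the arena above, with the obligation configurations outside $C$ additionally collapsed to $\bot$ and those of $C$ collapsed to $\top$ and with winning set ``reach $\top$, or stay in $V_i$ forever with projection in $\alpha$'' (the last clause forcing every play that meets infinitely many members of $C$ to satisfy $\alpha$), has value $\bowtie r$ from $v$. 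Each such membership test is one Blackwell game, so $C$ is obtained as a greatest fixpoint through a well-ordered sequence of Blackwell games; once $C$ is fixed, each $v\in C$ is assigned value $1$, each obligation configuration of $V_i$ outside $C$ value $0$, and each pure non-obligation configuration of $V_i$ the value of the Blackwell game just described. Soundness again reduces to tracing $\turn G$ through $V_i$, using that a winning strategy of \pzero must keep the promise function $f$ below the $0/1$ successor values, which is exactly the constraint ``stay in $C$'', and that the concession moves at $\geq$-obligations are absorbed because matrix-game values are attained.

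The step I expect to be hardest is the soundness of the type (ii) construction. The interaction of obligations with measure-zero sets, illustrated in Figure~\ref{figure:measure zero}, makes it delicate to show that a play crossing the region boundary infinitely often while meeting infinitely many obligations is handled correctly: the Borel condition has to be enforced simultaneously with the probabilistic thresholds, and I expect this to require bookkeeping of which obligations are ``used'' along a play, in the spirit of the choice-set analyses of Sections~\ref{section:simple definition for MC} and~\ref{section:simple definition for TB games} but carried out inside the uncountable game $\turn G$ as in the proof of Theorem~\ref{theorem:well defined values}. Finiteness of $\le$-chains is precisely what guarantees that the region-level recursion is well founded, even though inside a single type (ii) region the fixpoint defining $C$ may genuinely be a greatest fixpoint rather than a bounded iteration.
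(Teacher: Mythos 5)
Your region decomposition and the treatment of type (i) regions match the paper: the proof of Theorem~\ref{theorem:uniform blackwell games} also inducts along $\le$ starting from the maximal regions and plugs precomputed values into exit configurations. The gap is in your type (ii) construction. You test membership of $v$ in $C$ by a Blackwell game in which the members of $C$ are collapsed to an absorbing $\top$; in that modified game no play ever visits more than one obligation, so your parenthetical claim that the clause ``stay in $V_i$ forever with projection in $\alpha$'' forces plays meeting infinitely many members of $C$ to satisfy $\alpha$ is false --- that clause only constrains plays that never meet any obligation. Consequently the global goodness condition on choice sets (every play passing through infinitely many met obligations must lie in $\alpha$) is not checked anywhere, and the greatest fixpoint overapproximates. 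Concretely, take an obligation configuration $v$ with $O(v)={>}\frac{2}{3}$, a self-loop of probability $\frac{1}{3}$, probability $\frac{2}{3}$ of escaping to pure winning sinks, and $v^\omega\notin\alpha$: your test gives probability $1>\frac{2}{3}$ of reaching $\top$ or winning, so $v\in C$, yet every good choice set witnessing the obligation must recursively include further visits to $v$, producing the rejecting play $v^\omega$ through infinitely many obligations --- so the true value of $v$ is $0$. This is exactly the $s_2$ phenomenon of Figure~\ref{figure:intuition}, and it also undercuts the ``largest set'' characterization, since adding a configuration to $C$ can create a bad infinite dependency chain that invalidates other members.

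The paper avoids this by not truncating the play at the first obligation. For a type (ii) region it builds a single \emph{turn-based} (non-stochastic) game in which, at each prefix ending in an obligation $u'\cdot v'$ with $O(v')={\bowtie}p$, \pzero chooses a set $S$ of successors that is ``possible'' (there is $d_0$ such that for all $d_1$ the transition mass into $S$ satisfies ${\bowtie}p$, Equation~\ref{equation:minimax}) and \pone challenges by picking an element of $S$; the winning set is the set of infinite plays whose projection lies in $\alpha$. Because the play continues through every obligation, the infinitary $\alpha$-constraint on chains of obligations is enforced by the winning condition itself, and no fixpoint over threshold tests is needed. If you want to keep your fixpoint formulation you would have to carry, alongside the local threshold tests, an explicit well-foundedness or parity condition on the dependency structure among members of $C$ --- essentially the ``good game dependency'' of Section~\ref{subsection:finite games} --- which your current construction omits.
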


\newcommand{\proofoftheoremuniformblackwellgames}{
\begin{proof}
Let $\{V_i\}_{i\in \mathbb{N}}$ be the partition of the game $G$.
By assumption, consider a set $V_i$ such that there is no other set
$V_{i'}$ such that $V_i<V_{i'}$.
Consider a prefix $w=w'\cdot v$ such that $v\in V_i$.
Clearly, the extension of this prefix to a play in $G$ remains forever
in $V_i$.

Suppose that for all $v\in V_i$ we have $O(v)=\bot$.
Let $G'=(V^*,A_0,A_1,R,\alpha)$ be the game obtained from $G$ by
restricting attention to configurations reachable from $w$.
The game $G'$ is a normal Blackwell game and hence the value of every
configuration in $\{w\}\cdot V_i^*$ is well defined.

Suppose that for all $v\in V_i$ we have $O(v)\neq \bot$ or $v$ is pure.
Consider the turn-based game $G'=((V^* \cup V^*\times
2^V,E),(V^*,V^* \times 2^V),\alpha')$, where we
restrict $V^*$ to configurations reachable from $w$ and $E$ and
$\alpha'$ are as follows.
Consider a prefix $u'\cdot v'$ and a set $S\subseteq V$.
If $O(v')=\bot$, we say that
$S$ is \emph{possible} from $u'\cdot v'$ if there is $d_0\in {\cal
  D}(A_0)$ such that for all $d_1\in {\cal D}(A_1)$ we have 
\begin{equation}
\ds\sum_{a_0\in A_0} \ds\sum_{a_1\in A_1}\ds\sum_{v''\in S} 
R(v',a_0,a_1)(v'') \bowtie p,
\label{equation:minimax}
\end{equation}
\noindent
where $O(v')={\bowtie p}$.
If $v'$ is pure, we say that $S$ is \emph{possible} from $u'\cdot v'$
if there is $a_0\in A_0$ such that for all $a_1\in A_1$ the unique
configuration $v''$ such that $R(v,a_0,a_1)(v'')=1$ is in $S$.
Notice, that this is like considering a pure configuration as having
the obligation ${\geq}1$.
\begin{compactitem}
\item
$E=
\{(w'\cdot v',(w'\cdot v',S)) ~|~ \mbox{ $S$ possible from
  $w'\cdot v'$}\} \cup 
\{((w',S),(w'\cdot v')) ~|~ v'\in V\}$.
\item
$\alpha'$ includes all infinite paths such that the limit of their
projection on $V^*$ is in $\alpha$.
\end{compactitem}
This is in effect equivalent to the reduction to $\turn{G}$ when
restricted to $\{w\}\cdot V_i^*$.

Consider now a set $V_i$ and a configuration $w=w'\cdot v$ such that
$v\in V_i$.
Suppose, by induction, that for all configurations $u\cdot v'$ such
that $v'\in V_{i'}$ for $V_i<V_{i'}$ a value has already been computed.

If for every $v\in V_i$ we have $O(v)=\bot$, then a similar reduction
to a normal Blackwell game by plugging in the value of precomputed
configurations gives the value of all configurations in $\{w\}\cdot
V_i^*$.

If for every $v\in V_i$ we have $O(v)\neq \bot$, then a similar
reduction to a turn-based game can be done.
This time, value of precomputed configurations has to be combined in
the small minimax games as in Equation~\ref{equation:minimax}.
\end{proof}
}
\shorten{
\proofoftheoremuniformblackwellgames
}{}

\shorten{
We note that this is a ``meta''-algorithm.
Consider a uniform obligation Blackwell game $G$ and the partition
$V_1,\ldots, V_n$ showing that it is uniform.
Suppose that every $V_i$ reduces to a Blackwell game that can be
analyzed algorithmically.
Then, from Theorem~\ref{theorem:uniform blackwell games}, the game $G$
can be analyzed algorithmically.
}{
The proof of Theorem~\ref{theorem:uniform blackwell games} shows that
if each $V_i$ in the partition of $G$ is can be
analyzed algorithmically (as a Blackwell game) then so can $G$ (with
obligations). 
%Consider a uniform obligation Blackwell game $G$ with partition
%$V_1,\ldots, V_n$ showing that it is uniform.
%If each $V_i$ is a Blackwell game that can be analyzed
%algorithmically, then $G$ (with the obligations) can be analyzed
%algorithmically.
}

\mysubsection{Finite Turn-based Stochastic Obligation Parity Games}
\label{subsection:finite games}

We show that values in finite turn-based stochastic parity games with
obligations (\pog, for short) can be computed in exponential time and
decision problems regarding values lie in %%can be solved in
$\mbox{NP}{\cap}\mbox{co{-}NP}$.
%We leave open the questions of optimality, effectiveness, and other
%acceptance conditions. 

We give a nondeterministic algorithm for finding a maximal (wrt to inclusion) 
choice set, which calls the computation of values in stochastic parity games
as a subroutine. 
Then, the value of a configuration in the game can be computed by
computing the value of reaching the choice set computed by the
algorithm or winning the parity condition without reaching other
obligations. 
By results of previous sections, dualization of the game gives the
value of the opponent. 
It follows that the decision regarding the value is also
in co-NP. 

We now give an algorithm that decides and computes values in $G$.
A \emph{dependency} for $v\in \cO$ is either $C_v=\bot$ or
$C_v\subseteq (\cO\times [0..k])$.
That is, $C_v$ is either \emph{undefined} or a (possibly empty) set of
pairs of 
obligation configurations annotated by priorities. 
A \emph{game dependency} is a set $\{C_v\}_{v\in \cO}$.
A game dependency is \emph{good} if the following conditions hold:
\begin{compactenum}
\item
If for some $v\in \cO$ we have $(v',i)\in C_v$ then 
$C_{v'}\neq \bot$. 
\item
For every infinite sequence $(v_0,i_0),(v_1,i_1),\ldots$ such that for
every $j$ we have $(v_{j+1},i_{j+1})\in C_{v_j}$ the minimal priority
occurring infinitely often in $i_0,i_1,\ldots$ is even.
\item
For every $v\in \cO$ such that $C_v\neq \bot$ we have
$\zval_0(G',v) \bowtie r$, where $O(v)={\bowtie}r$ and $G'$ is the
game $G$ considered as a turn-based stochastic game with the goal
$\gamma$:
$$
%\gamma = 
\bigcup_{(v',i) \in C_v} 
\left (
\begin{array}{l l}
(\cN_{\geq i}^*\cdot \cN_i \cdot \cN_{\geq i}^* \cdot (\cO_{\geq
  i}{\cap}\{v'\}) \cdot V^\omega) & \cup \\
(\cN_{\geq i}^* \cdot (\cO_i{\cap} \{v'\})\cdot V^\omega) & \cup \\
(\alpha \cap \cN^\omega)
\end{array}
\right )$$

\shorten{}{
Here $\cN_i=\cN\cap c^{-1}(i)$, $\cO_i=\cO\cap c^{-1}(i)$, 
$\cN_{\geq i}=\bigcup_{i'\geq i} \cN_{i'}$, and
$\cO_{\geq i}=\bigcup_{i'\geq i} \cO_{i'}$.
}
Informally, for an obligation $v$ with non-empty dependency, the dependency indeed shows that the obligation is met: \pzero can force 
(i)~winning the original winning condition while never reaching another 
obligation or
(ii)~reaching an obligation $v'$ that $v$ depends on, with $i$, the required 
parity, being the minimal visited along the way. 
\end{compactenum}

\begin{figure}[bt]
\begin{center}
\input{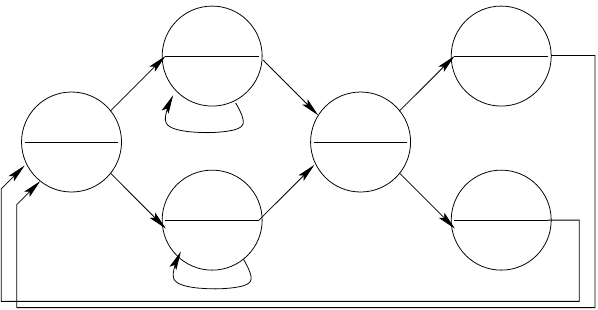_t}
\end{center}
\figurespace{-5mm}
\caption{\label{figure:dependency}
  Illustration of a
  dependency. 
  Priorities in the range $[0..4]$ next to
  state define a parity acceptance condition.
}
\figurespace{-5mm}
\end{figure}

We illustrate the notion of a dependency using
Figure~\ref{figure:dependency}.
The obligation of $s_1$ is $\frac{3}{4}$.
The probability to reach $s_1$ from itself is 1.
%State $s_1$ has an obligation of $\frac{3}{4}$.
%Paths leaving $s_1$ reach $s_1$ with probability 1.
However, the paths $(s_1s_2^+s_4s_6)^\omega$ have a minimal priority of
$1$ and are losing.
%$1$ occurring on them and are losing.
It follows that the only winning paths are
$(s_1s_2^+s_4s_5)^\omega$, $(s_1s_3^+s_4s_5)^\omega$, and
$(s_1s_3^+s_4s_6)^\omega$.
Thus, $s_1$ depends on reaching $s_1$ with minimal priority $0$
(through $s_3$) and on reaching $s_1$ with minimal priority $2$
(through $s_2$ and $s_5$).
This satisfies the three conditions as (1) $s_1$ has a defined
dependency, (2) every cycle visits either the minimal priority $0$ or
$2$, and (3)
the probability of reaching $s_1$ with minimal priority $0$ is
$\frac{1}{2}$, the probability of reaching $s_1$ with minimal priority
$2$ is $\frac{1}{4}$, and the probability of not reaching $s_1$ 
is $0$. So the total
probability is $\frac{3}{4}$, which fulfils the obligation of $s_1$.
Adding an obligation of ${\geq}\frac{1}{2}$ at $s_4$, changes the
dependency. 
Now, $s_1$ depends on reaching $s_4$ with priority $0$ or $2$ and
$s_4$ depends on reaching $s_1$ with priority $3$.
However, if the obligation of $s_4$ is set to ${>}\frac{1}{2}$, then
there is no good dependency.
Indeed, this would mean that \emph{whenever} $s_4$ is reached the path
through $s_6$ must be included.
Then, the path from $s_1$ through $s_2$ can not be part of the
dependency as this would create a cycle with minimum priority 1 and
the obligation of $s_1$ is no longer fulfilled.
The dependency for the game in Figure~\ref{figure:memoryfull} is $v_1$
depends on reaching $v_5$ with priority $1$ and $v_5$ depends on
reaching itself with priority $0$.
This is a good dependency as (a) the only cycle in it is $v_5$ reaching
itself with priority $0$ (b) from $v_1$ \pzero has a strategy that
ensures that $v_5$ is reached with probability $1$, and (c) from $v_5$
\pzero has a strategy that ensures that either $v_5$ is reached with
minimial priority $0$ encountered or getting to $v_8$ and staying
there (with no obligations on the way) with probability $\frac{3}{4}$.

The nondeterministic algorithm is as follows.
We guess a game dependency $\{C_v\}_{v\in \cO}$.
The size of $\{C_v\}_{v\in \cO}$ is polynomial in $|V|$. 
We check that $\{C_v\}_{v\in \cO}$ is good by doing the following.
First, checking that if $(v',i)\in C_v$ then $C_{v'}\neq \bot$
can be completed in polynomial time by scanning all the sets $C_v$.
Second, checking that all cycles induced by $C_v$ have a minimal even
parity in them can be completed in polynomial time by drawing the
graph of connections between the different configurations in $\cO$ for
which $C_v\neq\bot$ and searching for a cycle with minimal odd
priority. 
Third, ensuring that the values in the different turn-based
stochastic games fulfill the obligations can be achieved in
$\mbox{NP}{\cap}\mbox{co{-}NP}$ by
Theorem~\ref{theorem:complexity parity games}.  
Finally, consider the goal $\gamma'$:
$$
\gamma'=
(\cN^* \cdot \{v: V_c \neq \bot\}\cdot V^\omega) \cup 
(\alpha \cap \cN^\omega)
$$
We evaluate whether $\zval_0(G,w)\bowtie r$ by checking whether 
$\zval_0(G',w)\bowtie r$, where $G'$ is the turn-based stochastic game
obtained from $G$ by considering the goal $\gamma'$.
This can be checked in NP$\cap$co-NP.
To compute the value $\zval_0(G,w)$ we compute the value of $w$ in
$G'$. 
This can be computed in exponential time.
Notice that the values of $\gamma'$ in $G'$ correspond to the value
$s(G,w)$ and not $\zval_0(G,w)$.
For obligation configurations we must compare the result with the
required obligations.
If the obligation is met, the value $\zval_0(G,w)$ is $1$. 
Otherwise, it is $0$.
Overall, if \emph{all} the nondeterministic guesses are made up-front (i.e.,
the dependency \emph{and} the winning strategies in \emph{all} games)
then the global size of the witness is polynomial and all the checks
can be completed in polynomial time. 
Overall, the decision problem is in $\mbox{NP}{\cap}\mbox{co{-}NP}$,
and the values can be computed in exponential time.

We apply the algorithm on the example in
Figure~\ref{figure:dependency}.
As analyzed above, the dependency for $s_1$ is $(s_1,0)$ and
$(s_1,2)$. 
This proves that the value \emph{for all configurations} is $1$. 
Indeed, for every configuration in the game the probability of
reaching $s_1$ at least once is $1$.
Once $s_1$ is reached for the first time, the more complex reliance on
the choice set that is extracted from the dependency is required.
Applying the algorithm on the game in Figure~\ref{figure:memoryfull}
we see that the value of $v_1$, $v_2$, $v_3$, $v_4$, and $v_6$ is $1$
as from them \pzero can reach $v_5$ with probability $1$. The
pre-value of $v_5$ is $\frac{3}{4}$ as it reaches itself with
probability $\frac{1}{2}$ and wins the parity condition (reaching
$v_8$) with probability $\frac{1}{4}$. As this matches its obligation
its value is $1$.

Algorithm correctness follows from the following 
Lemmas.

\newcommand{\lemmanandproofofmemorylesswinninginturng}{
\begin{lemma}
There is a memoryless winning strategy in $\turn{G}$.
\label{lemma:memoryless winning in turnG}
\end{lemma}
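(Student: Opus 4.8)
The plan is to observe that $\turn{G}$ is, up to a trivial relabelling, a \emph{parity game}, and then to invoke memoryless determinacy of parity games. First I would make the winning condition explicit: give every configuration of $\turn{G}$ whose $V$-component is $w$ the priority $c(v)$, where $v$ is the last location of $w$ and $c\colon V\to[0..k]$ is the parity index defining $\alpha$. Since between two successive ``append'' moves $(w,r,f)\mapsto(w\cdot v',f(v'))$ only finitely many configurations occur, all of them with $V$-component $w$ and hence all with priority $c(v)$, the $\liminf$ of the priorities along a play of $\turn{G}$ coincides with the $\liminf$ of the priorities along its projection $p{\Downarrow}_V$. As $\alpha$ is the parity condition given by $c$ and $\hat\alpha=\{p\mid p{\Downarrow}_V\in\alpha\}$, it follows that $\hat\alpha$ is precisely the parity condition on $\turn{G}$ determined by this priority assignment.

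Having recognised $\turn{G}$ as a parity game, memoryless determinacy of parity games yields, for whichever player wins from a given configuration, a pure memoryless winning strategy from that configuration. This is exactly what is needed downstream: a memoryless strategy in $\turn{G}$ may still consult the whole current configuration, which records the prefix $w$ and hence the entire history of $G$-locations, so it induces a well-defined ``reason''-annotation (as in the proof of Theorem~\ref{theorem:simple value coincide SG}) regardless of how that configuration was reached.

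The delicate point — and the main obstacle — is that $\turn{G}$ is uncountable, whereas the memoryless-determinacy statement recorded in Theorem~\ref{theorem:memoryless determinacy of parity games} (\cite{Tho97}) is stated for countable arenas. I would close this gap by passing to a countable sub-arena that still contains a win: from any configuration \pzero never needs to announce a function $f$ with irrational values — it suffices to allow the attained values $s(G,w\cdot v')$ perturbed by rationals of the form $\tfrac1n$ — and from a ${\geq}$-obligation configuration it suffices to let \pone offer only the concessions $r''=r'-\tfrac1n$: restricting \pone's moves can only help \pzero, and by Lemma~\ref{lemma:value closed downwards} any finer concession is answered by pretending the granted value is the nearest such $r''$. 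On this countable sub-game \cite{Tho97} supplies a pure memoryless winning strategy, which then extends — arbitrarily on the remaining configurations, using downward closure of values — to one defined on all of $\turn{G}$. Alternatively, the whole subtlety can be bypassed by citing memoryless determinacy of parity games on \emph{arbitrary} game graphs (for instance via Zielonka's construction), in which case the lemma is immediate once the parity reformulation above is in place.
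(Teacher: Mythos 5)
Your reformulation of $\turn{G}$ as a parity game is correct: assigning each configuration the priority $c(v)$ of the last location $v$ of its $V$-component does turn $\hat\alpha$ into a parity condition, since only boundedly many configurations share a $V$-component between consecutive append moves. Appealing to positional determinacy of parity games over arbitrary arenas is also legitimate, and cleaner than your countable-subarena detour, whose claim that \pzero can restrict herself to countably many functions $f$ is itself a non-trivial assertion that you do not prove. This is a genuinely different route from the paper, which never views $\turn{G}$ as a parity game; instead it exploits finiteness of $G$ to observe that between consecutive obligations \pzero only ever needs finitely many finite-memory strategies, builds a \emph{finite} parity game $G'$ whose vertices are obligation configurations of $G$ paired with these strategies and whose priorities record the minimal priority seen between obligations, and uses memoryless determinacy of $G'$ to select one inter-obligation strategy per obligation configuration.

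The gap is in your second paragraph, and it is exactly the point where the paper's detour through $G'$ does real work. A positional strategy on the raw arena of $\turn{G}$ may still play differently at $(w\cdot v,r)$ and $(w'\cdot v,r')$ for the same obligation configuration $v$, because these are distinct vertices; ``memoryless in $\turn{G}$'' is a very weak property, as each vertex already encodes the full history of $G$-locations. You assert that this is ``exactly what is needed downstream'', but it is the opposite: the proof of Lemma~\ref{lemma:existence of dependency} needs $obs(w\cdot v)=obs(w'\cdot v)$ for all $w,w'$, i.e.\ a strategy that uses the \emph{same} continuation from every prefix ending in $v$, so that the dependency $C_v$ is well defined per configuration of $G$ rather than per prefix. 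Your positional strategy does not provide this, and the obvious repair---transplanting the continuation used below one $(w_v\cdot v,\cdot)$ to all other $(w'\cdot v,\cdot)$---is not sound without further argument: a play that threads infinitely many transplanted segments is no longer a play of the original positional strategy, so its liminf priority must be re-certified, and certifying it is precisely the role of the cycle condition enforced in the finite game $G'$. Your approach could be salvaged by applying positional determinacy not to $\turn{G}$ itself but to its quotient in which configurations are identified whenever their $V$-components end in the same location and their value/function components agree; the subgames below such configurations are isomorphic, the priority assignment descends, and a positional strategy on the quotient lifts to a prefix-uniform winning strategy on $\turn{G}$. As written, however, the proposal proves only the literal statement in a sense too weak for its intended use.
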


\begin{proof}
According to the proof of 
Theorem~\ref{theorem:simple value coincide SG} from every obligation
used as part of the winning strategy in $\turn{G}$, there is a simple
goal that leads to the next frontier of used obligations.
Namely, given the sets $ob(p)$ and $obs(p)$ the goal is:
$$\gamma = \bigcup_{(v,i) \in obs(p)} ((\cN_{\geq i}^*\cdot \cN_i \cdot
\cN_{\geq i}^* \cdot \{v\}) \cup (\cN_{\geq i}^* \cdot (\cO_i \cap \{v\}))) \cup 
(\alpha \cap \cN^\omega)$$
The structure of this goal implies that there is a strategy with
memory linear in the number of priorities in the game that achieves an
optimal value for this goal.
However, an obligation configuration $v\in \cO$ may appear infinitely
often in $\turn{G}$, each time using a different strategy.

Consider now all the possible strategies for \pzero in $G$ with a
goal $\gamma$ as above with memory bounded by the number of
priorities.
Clearly, the number of such strategies is finite.
In particular, for every obligation configuration $v\in \cO$ there is
a finite number of strategies that are used in $\turn{G}$.
We now construct a finite parity game $G'$ based on these
strategies.
For every prefix $p\cdot v$ such that $v\in \cO$ used in $\turn{G}$ 
add a \pzero configuration $v$ to $G'$.
For every strategy $\sigma$ that is used from the prefix $p\cdot v$
in $\turn{G}$ we add a \pone configuration $\sigma$ to $G'$.
For every pair $(v',i)$ such that $v'\in \cO$ and $i$ is a priority such
that application of $\sigma$ from $p\cdot v$ reaches a configuration $v'$
with priority $i$ being the minimal visited along the way we add the
\pzero configuration $(\sigma,v',i)$ to $G'$.
We add edges to $G'$ as follows.
From configuration $v$ we add edges to all strategies $\sigma$ used
from $p\cdot v$ for some $p$.
From strategy $\sigma$ we add edges to all triplets $(\sigma,v',i)$.
From configuration $(\sigma,v',i)$ we add an edge to $v'$.
We set the priority of $(\sigma,v',i)$ to be $i$ and priorities of all
other configuration to be the maximal possible priority.

The game $G'$ is a finite parity game and we know that \pzero wins
$G'$ based on the combination of the winning strategies in $\turn{G}$.
It follows from 
Theorem~\ref{theorem:memoryless determinacy of parity games} 
that there is a memoryless winning strategy for \pzero in
$G'$.
However, a memoryless winning strategy in $G'$ induces a unique choice
of a strategy from every obligation configuration in $\turn{G}$
leading to a memoryless winning strategy in $\turn{G}$.
\end{proof}
}

\shorten{
\lemmanandproofofmemorylesswinninginturng
}{}

\begin{lemma}
An obligation configuration $v$ fulfills
$\zval_0(G,v)=1$ iff there is a good game dependency
$\{C_{v'}\}_{{v'}\in \cO}$ such that $C_v\neq \emptyset$.
\label{lemma:existence of dependency}
\end{lemma}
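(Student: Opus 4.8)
The plan is to read the equivalence off the choice-set characterisation of the value. By Theorem~\ref{theorem:simple value coincide SG}, $\zval_0(G,v)=\oval(G,v)$, and for the obligation configuration $v$ with $O(v)={\bowtie}r$ the value $\oval(G,v)$ is $1$ precisely when \pzero has a strategy that, against every \pone strategy, induces a Markov chain carrying a \emph{good} choice set whose measure at $v$ satisfies ${\bowtie}r$ (this is the definition of $\oval$ at an obligation configuration unfolded through the choice-set definition of the pre-value). So both directions amount to translating between a good game dependency, a finite object, and a good choice set, where the annotation $i$ in $C_v$ records the minimal priority seen along the $\cN$-segment that leads from $v$ to the obligation in question.

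For the ``if'' direction, assume $\{C_{v'}\}_{v'\in\cO}$ is a good game dependency with $C_v\neq\bot$. For each $v'$ with $C_{v'}\neq\bot$ and $O(v')={\bowtie}r'$, condition~3 gives $\zval_0(G',v')\bowtie r'$ for the turn-based stochastic parity game $G'$ whose goal is the $\omega$-regular set $\gamma$ of that condition (turned into a parity condition by a monitor for the least priority seen); by the memoryless-determinacy clause of Theorem~\ref{theorem:complexity parity games} this value is attained by a memoryless strategy $\sigma_{v'}$. I would stitch the $\sigma_{v'}$ into one finite-memory \pzero strategy $\sigma$ in $G$ which, while ``homed'' at an obligation $v'$, plays $\sigma_{v'}$ until the next obligation is reached and then re-homes there. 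Against any \pone strategy $\pi$ the obligation-ending prefixes visited under $(\sigma,\pi)$ form a choice set $C$, and I claim it is good. For the winning clause: an infinite path hitting $C$ infinitely often induces a chain $(v_0,i_0),(v_1,i_1),\ldots$ with $(v_{j+1},i_{j+1})\in C_{v_j}$ and $i_j$ the least priority on the $j$-th $\cN$-segment; since every configuration of that segment has priority $\geq i_j$ while priority $i_j$ itself occurs there, the least priority seen infinitely often along the whole path equals $\liminf_j i_j$, which is even by condition~2, so the path lies in $\alpha$. For the obligation clause: under $\sigma$ the good paths from $v'$ are exactly those that reach some obligation (then necessarily an element of the frontier $C_{v'}$, reached with the matching least priority) together with those staying in $\cN$ forever and satisfying $\alpha$; so their measure is $\prob{v'(\sigma_{v'},\pi)}{\gamma}$, which for every $\pi$ is at least $\zval_0(G',v')$ and hence ${\bowtie}r'$. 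Applying this at $v$ itself and Theorem~\ref{theorem:simple value coincide SG} yields $\zval_0(G,v)=\oval(G,v)=1$.

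For the ``only if'' direction, assume $\zval_0(G,v)=1$. As $v$ is an obligation configuration, Corollary~\ref{corollary:value of obligation configurations} tells us \pzero wins in $\turn{G}$ from $(v,r')$ for every $r'$, and by Lemma~\ref{lemma:memoryless winning in turnG} she has a memoryless winning strategy there. Exactly as in the proof of Theorem~\ref{theorem:simple value coincide SG}, this memoryless strategy yields, for each obligation $v'$ it actually uses, one finite-memory sub-strategy $\tau_{v'}$ carrying $v'$ to the next frontier of used obligations, and memorylessness is what makes this frontier, together with the least priorities with which its members are reached, a function of $v'$ alone. I would put $C_{v'}$ equal to that annotated frontier for used $v'$ and $C_{v'}=\bot$ otherwise, and then verify the three conditions. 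Condition~1 is immediate, every member of any $C_{v'}$ being itself a used obligation. For condition~2, a chain through the dependency is realisable as a play consistent with the memoryless winning strategy in $\turn{G}$ (each step is a \pone choice inside some $\tau_{v_j}$), hence its $G$-projection is in $\alpha$, and as above the least priority seen infinitely often equals $\liminf_j i_j$, which is therefore even. For condition~3, winning in $\turn{G}$ from $(v',\rho)$ for all $\rho$ below $r'$ (resp.\ at $r'$ when ${\bowtie}={>}$) translates via Martin's reduction, again as in the proof of Theorem~\ref{theorem:simple value coincide SG}, into $\zval_0(G',v')\bowtie r'$ for the game $G'$ built from $C_{v'}$. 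Finally $v$ itself is used, so $C_v\neq\bot$ as required (in the degenerate case where $v$'s obligation is met without ever reaching a further obligation one has $C_v=\emptyset$, and the statement is to be read accordingly).

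The main obstacle is the well-definedness in the ``only if'' direction: a priori the frontier chosen from an obligation $v'$ by a winning strategy in $\turn{G}$ could depend on the whole history, and the infinitely many sub-strategies used at $v'$ in $\turn{G}$ (one per concession $r'-\frac1n$ when $O(v')={\geq}r'$) have to be reconciled into a single dependency. This is precisely what Lemma~\ref{lemma:memoryless winning in turnG}, together with the regularisation of the witness choice set in the proof of Theorem~\ref{theorem:simple value coincide SG}, provides, so the bulk of the work is to invoke these carefully and to check that the priority annotations line up across the dependency, the goal $\gamma$ of condition~3, and the parity requirement of condition~2.
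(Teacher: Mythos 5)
Your proposal is correct and follows essentially the same route as the paper: the ``only if'' direction matches the paper's argument almost step for step (memoryless winning strategy in $\turn{G}$ via Lemma~\ref{lemma:memoryless winning in turnG}, history-independence of the frontier sets $obs(w\cdot v)$, and verification of the three conditions by reference to the choice-set construction in the proof of Theorem~\ref{theorem:simple value coincide SG}). The ``if'' direction, which the paper dismisses in one sentence, is fleshed out by you in the natural way (stitching the memoryless optimal strategies for the goals $\gamma$ into a choice set and checking goodness), and your remark about reading $C_v\neq\emptyset$ as $C_v\neq\bot$ correctly flags a small imprecision in the statement.
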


\newcommand{\proofoflemmaexistenceofdependency}{
\begin{proof}
The existence of a good game dependency clearly shows that the obligation
of $v$ can be met.

In the other direction, if the obligation of $v$ can be met, this
means that \pzero wins in $\turn{G}$ from $(v,r')$ for every $r'\in
(0,1]$. 
Furthermore, a choice set of a very particular form can be extracted
as in the proof of Theorem~\ref{theorem:simple value coincide SG}.
%However, we note that the game $\turn{G}$ is a turn-based parity game.
%As $\turn{G}$ is derived from a finite parity game, it follows that
%the attractor sets to every priority in $\turn{G}$ converges in a
%countable number of steps.
%{\bf Question: Shouldn't countable be finite?}
According to Lemma~\ref{lemma:memoryless winning in turnG} \pzero has
a memoryless winning strategy in $\turn{G}$.
%According to 
%Theorem~\ref{theorem:memoryless determinacy of parity games}, 
%\pzero has a memoryless winning strategy in $\turn{G}$.
We note further, that in $\turn{G}$, if an obligation configuration
$(w,r')$ occurs, the game below $(w,r')$ does not depend on the value
$r'$.
Thus, if two obligations $w\cdot v$ and $w'\cdot v$ and the
configurations $(w\cdot v,r)$ and $(w'\cdot v,r')$ occur in $\turn{G}$ 
the extension of the game below both is identical.
It follows, that the memoryless strategy behaves exactly the same from
all obligations $w\cdot v$ and $w'\cdot v$ for the same obligation
configuration $v$.
Then $obs(w\cdot v)=obs(w'\cdot v)$ for all $w,w'\in V^*$.
So for an obligation $v$ appearing in $\turn{G}$ we can use the set
$obs(w\cdot v)$ for some prefix $w$ as the dependency $C_v$. 
For every obligation $v'$ not appearing in $\turn{G}$ we set
$C_{v'}=\bot$. 
We have to show that this induces a good game dependency.
First, if we have $(v',i)\in C_v$ then it follows that some prefix
$w'\cdot v'$ is reachable from a prefix $w\cdot v$.
Thus, $C_v'$ must be defined.
Second, every cycle in $\{C_v\}_{v\in \cO}$ with minimal odd priority
corresponds to an infinite path in the good choice set with a minimal
odd priority, which is impossible.
Third, it must be the case that $\zval_0(G',v) \bowtie r$, where $G'$
is obtained from $G$ by considering the goal $\gamma$:
$$\gamma = \bigcup_{(v',i) \in C_v} ((\cN_{\geq i}^*\cdot \cN_i \cdot
\cN_{\geq i}^* \cdot \{v'\}\cdot V^\omega) 
\cup 
(\cN_{\geq i}^* \cdot (\cO_i\cap \{v\}) \cdot V^\omega))
\cup (\alpha \cap \cN^\omega)$$
Indeed, this is the exact construction of the choice set from
$\turn{G}$ as in the proof of Theorem~\ref{theorem:simple value
  coincide SG}, where it is proven that it is also good.
\end{proof}
}
\shorten{
\proofoflemmaexistenceofdependency
}{}

\begin{lemma}
For every configuration $v$, $\zval_0(G,v)=r$ iff 
there is a good game dependency
$\{C_{v'}\}_{{v'}\in \cO}$ such that $\zval(G',v)=r$, where $G'$ is
obtained from $G$ by considering the goal 
\shorten{
$\gamma$.
$$
\gamma = 
(\cN^* \cdot \{v: V_c \neq \bot\} \cdot V^\omega) \cup 
(\alpha \cap \cN^\omega)
$$
}{
$\gamma=
(\cN^* \cdot \{v: V_c \neq \bot\} \cdot V^\omega) \cup 
(\alpha \cap \cN^\omega)
$.}
\label{lemma:value computed}
\end{lemma}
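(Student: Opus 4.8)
The plan is to set up a value-preserving correspondence between good game dependencies rooted at $v$ and the regular good choice sets that, by the earlier results, already suffice to attain $\zval_0(G,v)$. I freely use Theorem~\ref{theorem:simple value coincide SG} ($\oval(G,v)=\zval_0(G,v)$), Corollary~\ref{cor:values are attained} (the value is attained by a choice set), Lemma~\ref{lemma:memoryless winning in turnG} (a memoryless winning strategy exists in $\turn{G}$), and Lemma~\ref{lemma:existence of dependency} (when an obligation configuration has value $1$). Concretely, I will prove that the $\gamma$-value of $G'$ from $v$ is a lower bound on $\zval_0(G,v)$ for \emph{every} good game dependency, and that \emph{some} good game dependency attains it; the stated equivalence follows.

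\textbf{Unfolding a dependency ($G'$-value is a lower bound).} Given a good game dependency $\{C_{v'}\}_{v'\in\cO}$, unfold it into a choice set $C$ over play prefixes exactly as the choice set $C'$ is constructed in the proof of Theorem~\ref{theorem:simple value coincide SG}: a prefix ending in an obligation configuration $v'$ is placed in $C$ precisely when the obligations it has visited, together with the minimal priorities seen between consecutive obligation visits, trace a path through the directed graph on $\{v : C_v\neq\bot\}$ whose edges are the pairs in the $C_v$'s. Condition~(1) of goodness guarantees $C$ never dead-ends at an unused obligation, condition~(2) translates into ``every play with infinitely many prefixes in $C$ satisfies the parity condition'', and condition~(3) is verbatim the statement that from each $v'$ with $C_{v'}\neq\bot$ the localized $\gamma$-goal is won with a value satisfying $O(v')$; hence $C$ is a good choice set. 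Moreover the $\gamma$-value of $G'$ from $v$ equals the measure of $C$ from $v$, since both are the value of ``reach an obligation with defined dependency before reaching any obligation, or satisfy the parity condition without ever reaching an obligation''. By the choice-set definition of the value this common number is at most $s(G,v)=\zval_0(G,v)$ when $O(v)=\bot$; when $O(v)\neq\bot$ it is $1$ iff $v\in\{v:C_v\neq\bot\}$, which by Lemma~\ref{lemma:existence of dependency} is exactly when $\zval_0(G,v)=1$, and $0$ otherwise.

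\textbf{Reading off an optimal dependency (attainment).} Conversely, by Corollary~\ref{cor:values are attained} there is a good choice set attaining $s(G,v)$ from $v$, and by the regularization carried out in the proof of Theorem~\ref{theorem:simple value coincide SG} together with Lemma~\ref{lemma:memoryless winning in turnG} it can be taken in the regular form induced by the sets $obs(\cdot)$ of a memoryless winning strategy in $\turn{G}$. Since such a strategy behaves identically below all occurrences of a fixed obligation configuration, $obs(w\cdot v')$ depends only on $v'$, so putting $C_{v'}:=obs(w\cdot v')$ for obligation configurations reachable in $\turn{G}$ and $C_{v'}:=\bot$ otherwise yields a well-defined game dependency; its goodness is inherited from that of the choice set (an odd-dominated cycle in the dependency graph would produce a losing play hitting the choice set infinitely often, and condition~(3) is the $\gamma$-goal shown good there). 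With $\gamma$ encoded as a parity goal by tracking the reach component and the running minimal priority, $G'$ is a finite turn-based stochastic parity game, so its value from $v$ is attained, and by the previous paragraph it equals the measure of this regular choice set from $v$, namely $\zval_0(G,v)$ (with the $\{0,1\}$ adjustment for obligation configurations as above). Combining the two directions gives the lemma.

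\textbf{Main obstacle.} Neither direction is hard once the regular form is available; the delicate point is exactly the claim that \emph{one} good choice set in $obs(\cdot)$-induced regular form attains the value. This rests on finiteness of $G$, existence of finite-memory optimal strategies for $\omega$-regular turn-based stochastic games (Theorem~\ref{theorem:memoryless determinacy of parity games}), and memoryless determinacy of $\turn{G}$ (Lemma~\ref{lemma:memoryless winning in turnG}), and it requires checking that a single memoryless strategy simultaneously accommodates the a~priori infinitely many reappearances of each obligation configuration in $\turn{G}$, each carrying its own concession parameter $v-\frac{1}{n}$.
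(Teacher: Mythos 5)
Your proposal is correct and follows essentially the same route as the paper's (much terser) proof: the easy direction unfolds a good game dependency into a good choice set to get the lower bound, and the converse extracts a regular, $obs(\cdot)$-induced dependency from the attained optimal choice set via Lemma~\ref{lemma:existence of dependency} and the memoryless strategy of Lemma~\ref{lemma:memoryless winning in turnG}, identifying $\zval_0(G,v)$ with the value of reaching the dependency-defined obligations or winning the parity condition obligation-free. The extra detail you supply (the value-preserving correspondence and the attainment argument) is exactly what the paper delegates to the proofs of Theorem~\ref{theorem:simple value coincide SG} and Corollary~\ref{cor:values are attained}.
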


\newcommand{\proofoflemmavaluecomputed}{
\begin{proof}
As before, if $\zval(G',v)=r$ then clearly $\zval_0(G,v) \geq r$.
In the other direction, we consider all the obligations appearing in
the choice set showing that $\zval_0(G,v)=r$.
According to the previous lemma, these obligations require a good game
dependency.
Finally, the value $\zval_0(G,v)$ is exactly the reachability of the
good choice set or winning parity without reaching obligations.
\end{proof}
}
\shorten{
\proofoflemmavaluecomputed
}{}

\begin{theorem}
For a {\pog} $G$ and a prefix $w\in V^+$, the values $\zval_0(G,w)$
and $\zval_1(G,w)$ can be 
computed in exponential time and whether $\zval_0(G,w)\bowtie r$ can be
decided in $\mbox{NP}{\cap}\mbox{co{-}NP}$.
\label{theorem:finite parity obligation games}
\end{theorem}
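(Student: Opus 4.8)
The plan is to obtain Theorem~\ref{theorem:finite parity obligation games} by combining Lemmas~\ref{lemma:existence of dependency} and~\ref{lemma:value computed} (which in turn rest on the memoryless witness of Lemma~\ref{lemma:memoryless winning in turnG}) with the complexity bounds of Theorem~\ref{theorem:complexity parity games} for ordinary finite turn-based stochastic parity games and with determinacy, Theorem~\ref{theorem:well defined values}. The point of those two lemmas is precisely that a value question about the obligation game $G$ is equivalent to guessing a good game dependency $\{C_v\}_{v\in\cO}$ and then answering finitely many value questions about ordinary finite turn-based stochastic parity games, so all that remains is to bound the cost of the guess, of checking goodness, and of the residual stochastic-parity computations, and to transfer the resulting bounds to \pone\ by dualization.

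For the exponential-time computation I would enumerate all game dependencies: each $C_v$ is either $\bot$ or a subset of $\cO\times[0..k]$, so there are at most $\bigl(2^{|\cO|(k+1)}+1\bigr)^{|\cO|}$ of them, exponential in the size of $G$. For each candidate I would check goodness: conditions~1 and~2 in polynomial time (scan the sets; then search for a cycle of minimal odd priority in the dependency graph on $\{v:C_v\neq\bot\}$), and condition~3 by forming, for every $v$ with $C_v\neq\bot$, the turn-based stochastic parity game $G'$ with the goal $\gamma$ of the statement and computing $\zval_0(G',v)$ in exponential time by Theorem~\ref{theorem:complexity parity games}, then comparing with $v$'s obligation bound. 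Here $\gamma$ is turned into a genuine parity condition on a game of polynomial size by adjoining a monitor tracking the least priority visited since the last obligation, so $G'$ stays a finite stochastic parity game. Among the good dependencies I would take the maximum of $\zval_0(G',w)$ for the game $G'$ built from the goal $\gamma'=(\cN^*\cdot\{v:C_v\neq\bot\}\cdot V^\omega)\cup(\alpha\cap\cN^\omega)$ (again polynomial-size); by Lemma~\ref{lemma:value computed} this maximum equals $s(G,w)$, hence $\zval_0(G,w)$ when $O(w)=\bot$, while for $O(w)={\bowtie}r$ the value $\zval_0(G,w)$ is $1$ if $s(G,w)\bowtie r$ and $0$ otherwise. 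Finally $\zval_1(G,w)=1-\zval_0(G,w)$ by Theorem~\ref{theorem:well defined values}.

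For the decision problem in NP, the nondeterministic machine would guess the entire witness up front: the dependency $\{C_v\}_{v\in\cO}$ \emph{together with} memoryless optimal \pzero-strategies for every auxiliary stochastic parity game (the games of condition~3 and the game with goal $\gamma'$). These strategies exist and are of polynomial size by Theorem~\ref{theorem:complexity parity games}, so the certificate is polynomial. Conditions~1 and~2 are verified in polynomial time as above; fixing \pzero's guessed memoryless strategy in an auxiliary game leaves a parity Markov decision process (only \pone\ and the probabilistic vertices remain), whose optimal value is computed in polynomial time, so condition~3 and the comparison $\zval_0(G',w)\bowtie r$ are checked in polynomial time. Hence $\zval_0(G,w)\bowtie r\in\mbox{NP}$. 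For co-NP, observe that $\dual{G}$ is again a {\pog} (complementing a parity condition yields a parity condition, and dualization only swaps the two players' configurations) and that $\zval_1(G,w)$ is the value of $w$ in $\dual{G}$; by Theorem~\ref{theorem:well defined values}, $\zval_0(G,w)>r$ fails exactly when $\zval_1(G,w)\geq 1-r$, and $\zval_0(G,w)\geq r$ fails exactly when $\zval_1(G,w)>1-r$, both of which are in NP by the preceding paragraph applied to $\dual{G}$. Therefore $\zval_0(G,w)\bowtie r\in\mbox{NP}{\cap}\mbox{co{-}NP}$.

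The main obstacle is the NP bound. Condition~3 of a good dependency is, a priori, a stochastic-parity-game value comparison, which is only known to lie in $\mbox{NP}{\cap}\mbox{co{-}NP}$ rather than in P, so a naive nondeterministic verifier cannot discharge it within the machine's own computation. The resolution — and the reason the memoryless-strategy part of Theorem~\ref{theorem:complexity parity games} is indispensable — is to fold the optimal strategies of all these games into one polynomial-size guess, after which every remaining check collapses to a polynomial-time Markov-chain (or parity-MDP) computation. One must also be careful that the auxiliary objectives $\gamma$ and $\gamma'$, which are phrased through $\cN_{\geq i}$-segments and a ``least priority encountered along the way'', genuinely translate into parity conditions on games of polynomial size, so that the guessed strategies really are short; this is exactly the least-priority monitor construction indicated above.
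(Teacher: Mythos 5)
Your proposal is correct and follows essentially the same route as the paper: guess a good game dependency, verify conditions 1--2 syntactically and condition 3 via auxiliary finite turn-based stochastic parity games (with the least-priority monitor making $\gamma$ and $\gamma'$ genuine polynomial-size parity objectives), fold the memoryless optimal strategies of all auxiliary games into a single polynomial certificate so that verification reduces to polynomial-time MDP computations, and obtain co-NP and the value of \pone by dualization and determinacy. The details you spell out (enumeration of dependencies for the exponential-time bound, the up-front guessing of all strategies for the NP bound) are exactly the points the paper's own argument relies on, stated there more tersely.
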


%\begin{figure}[bt]
%\begin{center}
%%\scalebox{0.834}{\parbox{1.2\textwidth}{
%\begin{algorithmic}[1]
%\Require POG $G=((V,E),(V_0,V_1,V_p),\kappa,{\cal G})$, where ${\cal
%  G}=\pair{\alpha,O}$.
%\Ensure Set $C\subseteq \cO$ representing choice set.
%%  \State $T\gets \mbox{unwind(G)}$
%  \State $oldC\gets V$
%  \State $C\gets \emptyset$
%  \While{$C \neq oldC$} \Comment{find a fixpoint for $C$}
%    \State $oldC\gets C$
%    \State $T\gets \mbox{choose\_subset($\cO\setminus  C$)}$
%    \If{$T$ is good for $C$}
%       \State $C \gets C \cup T$
%    \EndIf
%  \EndWhile
%  \State\Return $C$
%\end{algorithmic}
%%}}
%\end{center}
%\caption{\label{Figure:alg}Algorithm for computing optimal choice set}
%\end{figure}
%
%\end{proof}

\mysection{p-Automata}
\label{section:paut}
In \cite{HPW10}, we defined uniform p-automata and showed that they
are a complete abstraction framework for pCTL.
Acceptance of Markov chains by uniform p-automata was defined through a
cumbersome and complicated reduction to a series of turn-based
stochastic parity games.
Here, using obligaton games, we give a clean definition of acceptance
by p-automata.
What's more, obligation games allow us to define acceptance by general
p-automata and remove the restriction of uniformity.
To simplify presentation we remove the notion of $*$-transitions (see
\cite{HPW10}). 
%In a longer version of this paper we will show how to generalize
%obligation Blackwell games to include separation restrictions as well.
%Using this separation, all concepts developed so far generalize to
%$*$-transitions. 
%In essence, disjoint value functions, like the ones used in
%\cite{HPW10}, can be composed in obligation Blackwell games without
%causing major difficulties.

We assume familiarity with basic notions of trees and (alternating)
tree automata.
For set $T$, let $B^+(T)$ be the set of positive Boolean
\emph{formulas} generated from elements $t\in T$,
constants $\T$ and $\F$, and disjunctions and conjunctions:
%
%{\small
%\vspace*{-2mm}
\begin{equation}\label{equ:formulas}
\varphi,\psi ::= t\ \mid\ \T\ \mid\ \F\ \mid\ \varphi\lor\psi\ \mid\ \varphi\land\psi
%\vspace*{-2mm}
\end{equation}
%}
%
Formulas in $B^+(T)$ are finite even if $T$ is not.

For set $Q$, the set of states of a p-automaton, we define \emph{term}
sets $\dbr{Q}_{>}$ as follows.
$$
\dbr{Q}_{>} = \set{\bstate qp  \mid  q\in Q, {\bowtie} \in \set{\geq,
  >}, p\in [0,1]} 
$$

Intuitively, a state $q\in Q$ of a p-automaton and its transition
structure model a probabilistic path set. 
So  $\dbr{q}_{\bowtie p}$ holds in location $s$ if the measure of
paths that begin in $s$ and satisfy $q$ is $\bowtie p$. 

An element of $Q \cup \dbr{Q}_{>}$ is therefore either a state of the
p-automaton, or a term of the form $\dbr{q}_{\bowtie p}$.
Given $\varphi \in B^+(Q \cup \dbr{Q}_{>})$, its \emph{closure}
$\cl\varphi$ is  
the set of all subformulas of $\varphi$.
% according
%to~(\ref{equ:formulas}).
For a set $\Phi$ of formulas, let
$\cl\Phi=\bigcup_{\varphi\in\Phi}\cl\varphi$.

\begin{definition}
A p-automaton $A$ is a tuple
$\pair{\Sigma, Q, \delta, \varphi^\init,\alpha}$, 
where
$\Sigma$ is a finite input alphabet, 
$Q$ a set of states (not necessarily finite),
$\delta\colon Q \times \Sigma \rightarrow B^+(Q \cup \dbr{Q}_{>})$ the
transition function,
$\varphi^\init \in B^+(\dbr{Q}_{>})$ the initial
condition, and $\alpha$ a parity acceptance condition.
\label{def:p-aut main}
\end{definition}

In general, p-automata have {\em states}, 
Markov chains have {\em locations}, and 
games {\em configurations}.

\excludecomment{qestexample}
\begin{qestexample}
\begin{example}
Let $A=\pair{\ps{\set{\atom{a},\atom{b}}},\set{q_1,q_2},\delta, \dbr{q_1}_{\geq
    0.5},\set{q_2}}$ 
be a p-automaton 
 where $\delta$ is defined by
\vspace{-2.0mm}
{\small
$$
\begin{array}{l}
\delta(q_1,\set{\atom{a},\atom{b}}) =\delta(q_1,\set{\atom{a}}) = q_1 \lor
\dbr{q_2}_{\geq 0.5}  \\
\delta(q_2,\set{\atom{b}}) = \delta(q_2,\set{\atom{a},\atom{b}}) =
\dbr{q_2}_{\geq 0.5}\\ 
\delta(q_1,\set{}) = \delta(q_1,\set{\atom{b}}) = 
\delta(q_2,\set{}) = \delta(q_2,\set{\atom{a}}) = \F
\end{array}
$$
} % end \small
Term $\dbr{q_2}_{\geq 0.5}$ represents the recursive property $\phi$, that atomic proposition $\atom{b}$ holds
at the location presently read by $q_2$, and that $\phi$
will hold with probability at least $0.5$ in the next locations.
State $q_1$ asserts that it is possible to get to a location that
satisfies $\dbr{q_2}_{\geq 0.5}$ along a path that satisfies atomic proposition $\atom{a}$.
The initial condition $\dbr{q_1}_{\geq 0.5}$ means the
set of paths satisfying $\atom{a}\U \phi$
has probability at least $0.5$. 
\label{example:a until b and b}
\end{example}
\end{qestexample}
%
%\vspace{-5.0mm}

For every $\ap$, p-automata 
$A=\pair{\ps\ap,Q,\delta,\varphi^\init,\alpha}$ have $\mc\ap$ as
set of inputs. 
For $M=(S,P,L,s^\init)\in \mc\ap$, we define whether $A$
accepts $M$ by a reduction to a turn-based stochastic parity game with
obligations.
The language of $A$ is
$\lang A = \set{M\in \mc\ap\mid A\hbox{
  accepts }M}$.

We construct a game
$G_{M,A} =
((V,E), (V_0,V_1,V_p), \kappa, {\cal G})$.
A configuration of $G_{M,A}$ corresponds to a subformula appearing in
the transition of $A$ and a location in $M$.
Configurations with a term of the form $\dbr{q}_{\bowtie p}$
correspond to obligations. 
All other configurations have no obligations.
The Markov chain is accepted if the configuration
$(\varphi^{\init},s^{\init})$ has value $1$ in
$G_{M,A}$. 

Formally, we define $G_{M,A}$ as follows.
Let $G_{M,A}=((V,E),(V_0,V_1,V_p),\kappa,{\cal G})$, where the
components of $G_{M,A}$ are as follows.
\begin{compactitem}
\item
$V=S \times \cl{\delta(Q,\Sigma)}$. 
\item
$V_0 = \{ (s,\psi_1\vee\psi_2) ~|~ s\in S \mbox{ and }\psi_1\vee\psi_2 \in
\cl{\delta(Q,\Sigma)}\}$.
\item
$V_1 = \{ (s,\psi_1\wedge\psi_2) ~|~ s\in S \mbox{ and
}\psi_2\wedge\psi_2\in \cl{\delta(Q,\Sigma)}\}$.
\item
$V_p = S \times (Q \cup \dbr{Q}_{>})$.
\item
The set of edges $E$ is defined as follows.
$$
\begin{array}{l@{=} l l}
E &
\set{((s,\varphi_1\wedge\varphi_2),(s,\varphi_i)) ~|~ i\in \set{1,2}}
& \cup \\
\multicolumn{2}{r}{
\set{((s,\varphi_1\vee\varphi_2),(s,\varphi_i)) ~|~ i\in\set{1,2}}}
&\cup \\
\multicolumn{2}{r}{\set{((s,q),(s',\delta(q,L(s)))) ~|~
s' \in \succ(s) }} &  \cup \\
\multicolumn{3}{r}{\set{((s,\dbr{q}_{\bowtie p}),(s',\delta(q,L(s)))) ~|~
s' \in \succ(s) }} \\
\end{array}
$$
\item
$\kappa((s,q),(s',\delta(q,L(s)))) =
\kappa((s,\dbr{q}_{\bowtie p}),(s',\delta(q,L(s)))) = P(s,s')$.
\item
${\cal G} = \pair{\tilde{\alpha},O}$, 
where 
\begin{compactitem}
\item
For $q\in Q$ and $p\in [0,1]$ we have $\tilde{\alpha}(s,q)=\alpha(q)$,
$\tilde{\alpha}(s,\dbr{q}_{\bowtie p})=\alpha(q)$.
For every other
configuration $c$ we set $\tilde{\alpha}(c)$ to the maximal possible
priority. 
\item
For $q\in Q$ and $p\in [0,1]$ we have $O(s,\dbr{q}_{\bowtie p}) =
{\bowtie}p$.
For every other configuration $c$, we have $O(c)=\bot$.
\end{compactitem}
\end{compactitem}

As obligation games are well defined it follows that it is well
defined whether a p-automaton accepts a Markov chain.
%\begin{theorem}
%Given a p-automaton $A = \pair{\ps\ap,\dots}$, its language $\lang A$
%is well defined. 
%\label{theorem:acceptance well def}
%\end{theorem}
%
%\ifproofs
%\begin{proof}
%This follows from Theorem~\ref{theorem:well defined values}.
%\end{proof}
%\fi

\begin{theorem}
Given a finite p-automaton $A$ and a finite Markov chain $M$, we can
decide whether $M\in \lang A$ in time exponential in the number of
states of $A$ and locations of $M$.
\label{theorem:complexity of acceptance}
\end{theorem}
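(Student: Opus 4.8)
The plan is to recognize that the game $G_{M,A}$ constructed above is a finite turn-based stochastic parity game with obligations --- a \pog\ in the sense of Section~\ref{section:simple definition for TB games} --- of size polynomial in $A$ and $M$, and then to invoke Theorem~\ref{theorem:finite parity obligation games}. First I would verify finiteness and the size bound. Since $M$ is finite, $S$ is finite; since $A$ is finite, $Q$ and $\Sigma$ are finite, and each $\delta(q,\sigma)\in B^+(Q\cup\dbr{Q}_{>})$ is a finite positive Boolean formula, so the closure $\cl{\delta(Q,\Sigma)}$, being the set of all subformulas occurring in transitions, is finite with $|\cl{\delta(Q,\Sigma)}|\le\sum_{q\in Q,\sigma\in\Sigma}|\delta(q,\sigma)|$, which is polynomial in $|A|$. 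Hence $V=S\times\cl{\delta(Q,\Sigma)}$ is finite and $|V|$ is polynomial in $|A|\cdot|M|$; the partition into $V_0$ (disjunctions), $V_1$ (conjunctions) and $V_p=S\times(Q\cup\dbr{Q}_{>})$ is read off the definition, $\kappa$ inherits the finitely many rational probabilities of $M$, the parity condition $\tilde\alpha$ uses at most one more priority than $A$'s parity condition $\alpha$, and $O$ is non-$\bot$ only on the configurations $(s,\dbr{q}_{\bowtie p})$. One should also add $(\varphi^\init,s^\init)$ together with the subformulas of $\varphi^\init$ to $V$, placed in $V_0$/$V_1$ by their top connective and with no obligation, enlarging $V$ by at most $|\varphi^\init|$ configurations. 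Thus $G_{M,A}$ is a \pog\ whose description has size polynomial in $|A|$ and $|M|$.

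Next I would record that, by the definition of acceptance, $M\in\lang A$ iff the value of $(\varphi^\init,s^\init)$ in $G_{M,A}$ is $1$, i.e.\ iff $\zval_0\big(G_{M,A},(\varphi^\init,s^\init)\big)=1$, taking the prefix to be the single configuration $(\varphi^\init,s^\init)$. Since $\varphi^\init\in B^+(\dbr{Q}_{>})$, every leaf of the tree of subformulas below $(\varphi^\init,s^\init)$ is an obligation configuration, whose value lies in $\{0,1\}$ by Corollary~\ref{corollary:value of obligation configurations}; so this value is well defined and the accept/reject decision is exactly the question whether it equals $1$.

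Finally I would apply Theorem~\ref{theorem:finite parity obligation games} to $G=G_{M,A}$ and $w=(\varphi^\init,s^\init)$: the value $\zval_0(G,w)$ can be computed in time exponential in $|G|$ and in the number of priorities. Because $|G|$ and the number of priorities are polynomial in the number of states of $A$ and locations of $M$, this gives a procedure running in time exponential in the number of states of $A$ and locations of $M$; comparing the computed value with $1$ then takes constant time, which decides $M\in\lang A$.

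The main obstacle is not the algorithmic step, which is immediate from Theorem~\ref{theorem:finite parity obligation games}, but the bookkeeping of the reduction: arguing carefully that the closure construction keeps $G_{M,A}$ polynomial in $|A|$ and $|M|$, that $G_{M,A}$ is literally an instance of the class of games to which Theorem~\ref{theorem:finite parity obligation games} applies (in particular that $\tilde\alpha$ is a genuine parity condition and the obligation function is as required), and that the $\{0,1\}$-valued acceptance notion for p-automata coincides with the value-$1$ question answered by the \pog\ algorithm. Everything beyond that is routine.
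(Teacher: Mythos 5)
Your proposal is correct and follows exactly the paper's own (very terse) argument: observe that $G_{M,A}$ is a finite turn-based stochastic obligation parity game of size polynomial in $|A|$ and $|M|$, and then apply Theorem~\ref{theorem:finite parity obligation games}. The extra bookkeeping you supply (finiteness of the closure, placement of $\varphi^\init$'s subformulas, and the reduction of acceptance to the value-$1$ question) is exactly the detail the paper leaves implicit, so there is nothing to add.
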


\newcommand{\proofoftheoremcomplexityofacceptance}{
\begin{proof}
This follows from the polynomial construction of the finite-state
turn-based stochastic obligation parity game $G_{M,A}$ 
for the Markov chain $M$ and p-automata $A$, and Theorem~\ref{theorem:finite parity obligation games}.
\end{proof}
}
\shorten{
\proofoftheoremcomplexityofacceptance
}{}

The definition in \cite{HPW10} restricts attention to uniform
$p$-automata.
Such automata restrict the cycles in the transition graph of
p-automata.
We recall the definition of uniform p-automata.
In doing so, we differentiate states $q'$ appearing within a term in
$\dbr{Q}_{>}$ (bounded transition)  from $q'$ appearing ``free'' in the
transition of a state $q$ (unbounded transition).
In this way, a p-automaton $A = \pair {\Sigma,Q,\delta,\dots}$ 
determines a labeled, directed graph $G_{A}=\pair{Q',E,E_b,E_u}$:
%\vspace{-2.0mm}
%{\small
\begin{equation*}
\begin{array}{lcl}
Q' &=& Q \cup \cl{\delta(Q,\Sigma)}\\
E &=&
\{(\varphi_1\land\varphi_2,\varphi_i),
  (\varphi_1\lor\varphi_2,\varphi_i) \mid \varphi_i \in Q'\setminus Q,\\
	&& \mbox{}\hfill  i\in\set{1,2}\} 
	\cup \set{ (q,\delta(q,\sigma)) \mid q\in Q, \sigma\in\Sigma } \\
E_u &=& \{ (\varphi\land q,q),(q\land\varphi,q),(\varphi\lor q,q),(q\lor\varphi,q) \mid \\
	&& \mbox{}\hfill \varphi\in Q', q \in Q \}\\
E_b &=& \set{ (\dbr{q}_{\bowtie p},q)  \mid  \dbr{q}_{\bowtie p} \in \dbr{Q}_{>} }
\end{array}
\end{equation*}
%} % end of \small
\noindent
Elements $(\varphi,q)\in E_u$ are \emph{unbounded} transitions;
elements $(\varphi,q)\in E_b$ are \emph{bounded} transitions;
and elements of $E$ are called \emph{simple} transitions. 
Note that $E$, $E_u$, and $E_b$ are pairwise disjoint.
\begin{comment}
%
\begin{figure}[bt]
{\small
\begin{center}
\subfigure{\framebox{
\input{purr.pdf_t}}}
\subfigure{\framebox{
\input{mc.pdf_t}}}
\end{center}
} % end \small
\vspace*{-5mm}
\caption{\label{figure:graph and MC}(a) Graph $G_A$ of automaton $A$ from
    Example~\ref{example:a until b and b} and (b) a Markov chain $M$}
\vspace*{-7mm}
\end{figure}
%
\end{comment}
Let $\varphi\preceq_A \tilde\varphi$ iff there is a finite path from
$\varphi$ to $\tilde\varphi$ in 
$E \cup E_b\cup E_u$.  Let $\equiv$ be $\preceq_A\cap \preceq_A^{-1}$
%For $\varphi\in Q \cup \cl{\delta(Q,\Sigma)}$, let $\eq \varphi$ denote 
and $\eq \varphi$ the equivalence class of $\varphi$ with respect to
$\equiv$.
%according to preorder $\preceq_A$.
Each $\eq \varphi$ is an SCC in the directed graph $G_A$.

\begin{definition}{\cite{HPW10}}
A p-automaton $A$ is called  {\em uniform} if:
\begin{inparaitem}
\item[(a)]
For each cycle in $G_A$, its set of transitions is either in $E \cup
E_b$ or in $E \cup E_u$.
\item[(b)] There are only
finitely many equivalence classes $\eq\varphi$ with $\varphi\in Q \cup \cl{\delta(Q,\Sigma)}$.
\end{inparaitem}
\label{def:p-aut uniform}
\end{definition}

That is, $A$ is uniform, if the full subgraph of every equivalence class
in $\preceq_A$ contains only one type of non-simple transitions.
Also, all states $q'\in Q$ or formulas $\varphi$ occurring in 
$\delta(q,\sigma)$ for some $q\in Q$ and $\sigma\in\Sigma$ can be
classified as unbounded, bounded, or
simple~--~according to  
SCC $\eq q$.
\begin{qestexample}
\begin{example}
Figure~\ref{figure:graph and MC}{\rm(a)} depicts the graph $G_{A}$
for $A$ of Example~\ref{example:a until b and b}. 
p-Automaton $A$ is uniform: $\eq {q_1} =
\set{q_1,q_1\lor\dbr{q_2}_{\geq 0.5}}$ and $\eq {q_2} =
\set{q_2,\dbr{q_2}_{\geq 0.5}}$; 
in $\eq {q_1}$ there are no bounded edges, in $\eq {q_2}$ there
are no  unbounded edges; and
$G_A$ has no markings for $*$ or $\dstar$.
The SCC $\eq {\dbr{q_1}_{\geq 0.5}} = \set{\dbr{q_1}_{\geq 0.5}}$ is trivial.
In addition, $A$ is weak as $\alpha=\set{q_2}$. 
\end{example}
\end{qestexample}
Intuitively, the cycles in the structure of a uniform p-automaton $A$
take either no bounded edges or no unbounded edges.
Uniformity allowed to define acceptance for p-automata
through the solution of a sequence of stochastic games.

\begin{theorem}
p-automata (Definition~\ref{def:p-aut main}) extend the definition of uniform
p-automata (Definition~\ref{def:p-aut uniform}). %%of \cite{HPW10}.
\label{theorem:p-aut-extension}
\end{theorem}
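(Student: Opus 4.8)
The plan is to show that, for a uniform p-automaton $A$ and any input $M\in\mc\ap$, the acceptance game $G_{M,A}$ of Section~\ref{section:paut} is (possibly after a mild refinement of its configuration set into regions) a \emph{uniform} obligation Blackwell game in the sense of Section~\ref{subsection:reduction}, and that the bottom-up reduction of Theorem~\ref{theorem:uniform blackwell games} applied to $G_{M,A}$ coincides, region by region, with the series of turn-based stochastic parity games by which \cite{HPW10} defines acceptance of uniform p-automata. Since acceptance is ``the value of $(\varphi^{\init},s^{\init})$ is $1$'' under both definitions, matching the two recursions yields $\lang A$ is the same under Definition~\ref{def:p-aut main} and under Definition~\ref{def:p-aut uniform}.

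First I would fix the partition $\{V_i\}$ of $V=S\times\cl{\delta(Q,\Sigma)}$ induced by the strongly connected components of the graph $G_A=\pair{Q',E,E_b,E_u}$, putting $(s,\varphi)$ and $(s',\psi)$ in the same region iff $\varphi$ and $\psi$ lie in the same SCC of $G_A$. By uniformity condition~(b) there are finitely many SCCs, and every edge of $G_{M,A}$ projects onto an edge of $G_A$, so the induced order $\leq$ on $\{V_i\}$ refines the SCC order and hence has only finite chains. By uniformity condition~(a), the edges inside each SCC lie either in $E\cup E_u$ or in $E\cup E_b$: in an $E\cup E_u$-region the terms $\dbr{q}_{\bowtie p}$ that occur are SCC-exits (their unique $E_b$-successor sits in a strictly lower region), so such a region contains no obligation configuration; in an $E\cup E_b$-region I would check that every configuration is either an obligation configuration $(s,\dbr{q}_{\bowtie p})$ or a pure configuration, which is immediate for the $\vee$/$\wedge$ configurations, and for a state configuration $(s,q)$ (whose successors are probabilistic) one separates $(s,q)$ from those successors by inserting the intermediate layer that is already present in $\turn{G}$, so that within a maximal obligation region the only non-pure behaviour is probabilistic choice feeding directly into obligations or into exits. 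With this ``uniform'' partition in hand, Theorem~\ref{theorem:uniform blackwell games} computes $\zval_0(G_{M,A},(\varphi^\init,s^\init))$ by a well-founded recursion over the regions, in which $E\cup E_u$-regions become ordinary obligation-free turn-based stochastic parity games and $E\cup E_b$-regions become turn-based games whose moves encode, for each obligation, whether \pzero can force the probability of reaching an already-established obligation, or winning the parity condition without meeting any obligation, to meet the prescribed bound.

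Next I would match this recursion with the one of \cite{HPW10}. Both are indexed by the same SCC order; the base cases (trivial SCCs) are identical, and in the inductive step the two per-SCC games have the same arena and winning condition once the previously computed values are substituted for exit configurations: for an unbounded SCC \cite{HPW10} evaluates an alternating/stochastic parity game, and for a bounded SCC it sets probability thresholds — which is exactly what the obligation-region game above does, with the $\geq$/$>$ distinction and the concession component of $\turn{G}$ matching the non-strict/strict thresholds of \cite{HPW10}. Here, for finite $M$ one may invoke the direct choice-set characterization of Theorem~\ref{theorem:simple value coincide SG} to identify the obligation-region value with the threshold \cite{HPW10} computes, and for general $M$ one argues directly on $\turn{G}$ as in the proof of Theorem~\ref{theorem:well defined values}. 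Hence $\zval_0(G_{M,A},(\varphi^\init,s^\init))=1$ under the new definition iff $M$ is accepted under the definition of \cite{HPW10}, proving the theorem.

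The step I expect to be the main obstacle is the faithful region-by-region identification of the obligation-game reduction with the (notoriously intricate) construction of \cite{HPW10}: concretely, showing that the turn-based game obtained from an $E\cup E_b$-region — with its concession moves and its $\mathrm{minimax}$-over-$\dist{A_0}\times\dist{A_1}$ edges — computes the very same probability threshold that \cite{HPW10} obtains from its sequence of stochastic games, and checking that the partition described genuinely satisfies the two conditions defining a uniform obligation Blackwell game. This is where the asymmetry between obligation configurations, pure configurations, and the probabilistic state configurations $(s,q)$ occurring inside a bounded SCC must be handled with care, using the auxiliary region layer borrowed from $\turn{G}$.
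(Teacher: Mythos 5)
Your proposal follows exactly the same route as the paper's own proof, which is only a three-sentence sketch asserting that the composition of a uniform p-automaton with a Markov chain yields a uniform obligation game, that Theorem~\ref{theorem:uniform blackwell games} then reduces its value to a sequence of turn-based (stochastic) games, and that this sequence is precisely the acceptance definition of \cite{HPW10}. Your elaboration of the SCC-based partition, the verification of the two uniformity conditions (including the treatment of probabilistic state configurations inside bounded SCCs), and the region-by-region matching with \cite{HPW10} is a faithful and considerably more detailed expansion of that same argument.
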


\newcommand{\proofoftheorempautextension}{
\begin{proof}
The obligation Blackwell game resulting from the composition of a
uniform p-automaton with a Markov chain is a uniform turn-based
obligation game.
From Theorem~\ref{theorem:uniform blackwell games} it follows that its
value is obtained from solving a sequence of turn-based stochastic
games and turn-based games.
This gives rise to exactly the definition of acceptance through a
sequence of turn-based stochastic games and turn-based games as in
\cite{HPW10}.
\end{proof}
}
\shorten{
\proofoftheorempautextension
}{}

Closure under union and intersection is easy due to alternation 
(see~\cite{HPW10} for details).
Closure under complementation follows from our determinacy result for obligation games. 
%%still hold 
%
%Based on dualization of Blackwell games with
%obligations, one can show that dualization of p-automata (see
%\cite{HPW10}) leads to complementation.
%Closure under intersection and union is simple.
%%Given a p-automaton $A$, its dual $\dual{A}$ is obtained from $A$ by
%%interchanging $\wedge$ and $\vee$, $>$ and $\geq$, and $p$ and $1-p$.
%%For example if $\delta(q_0,\atom{a})=q_1 \wedge \dbr{q_2}_{> 0.73}$ in
%%$A$, then $\dual{\delta(q_0,\atom{a})}=q_1 \vee \dbr{q_2}_{\geq 0.27}$
%%in $\dual{A}$. 
%%The full definition is given in \cite{HPW10}.
%%
%%\begin{theorem}
%%For every $p$-automaton $A$ with $\Sigma=\ps{\ap}$, $\lang{\dual A}$
%%is $\mc\ap \setminus\lang A$.
%%\label{theorem:complementation}
%%\end{theorem}
%%
%%\begin{proof}
%%This follows directly from the definition of the dual game for an
%%obligation Blackwell game.
%%\end{proof}
%%
%%\begin{corollary}
%%Languages of $p$-automata are closed under Boolean operations.
%%\end{corollary}
%%
%%\begin{proof}
%%Closure under intersection and union follow from transitions including
%%conjunction and disjunction.
%%Complementation is proven in Theorem~\ref{theorem:complementation}.
%%\end{proof}

\mysection{Conclusions and Future Work}
\label{section:conclusions}

We introduced obligations, a structural winning condition
that complements winning conditions on paths.
We show that Blackwell games with Borel objectives and obligations are
well defined.
We then present a simpler definition of value for Markov chains with
Borel objectives and obligations and for finite turn-based stochastic
parity games with obligations.
Based on the simpler definition we give algorithms for analyzing
finite turn-based stochastic parity games with obligations.
We then use games with obligations to define acceptance by
unrestricted p-automata, showing that the new definition generalizes a
previous definition for uniform p-automata.

This is one of the rare cases in games that arise in verification that
determinacy of games does not immediately follow from Martin's result
that Blackwell games with Borel objectives are determined.
The proof of determinacy uses elements from Martin's determinacy proof
but introduces new concepts that were not needed in that proof.
These new concepts are
required due to the more elaborate nature of games with obligations. 

Our work gives rise to many interesting questions.
For example, determining the complexity of other types of games such as 
Streett, Rabin, Muller, and quantitative games with obligations.

%%For example, algorithms for other types of games.
%optimality of algorithms for analysis of games with
%obligations is not clear.
%%We establish that solution of finite turn-based stochastic parity
%%games with obligations is in $\mbox{NP}{\cap}\mbox{co{-}NP}$.
%It is an interesting question whether they can be solved in
%NP$\cap$co-NP as for games without obligations. 
%%Our approach, if generalized to Streett, Rabin, and Muller
%%games, would result in algorithms in NEXPTIME.
%The huge difference in complexity between Streett and Rabin objectives
%suggests that other techniques for solving such games may be possible.
%We have not considered at all algorithms for infinite games,
%concurrent games, or partial information games.

%Our definition of obligation games opens the way for the definition of
%a probabilistic $\mu$-calculus.
%Previous definitions of probabilistic $\mu$-calculus could not nest
%probabilistic quantification with least- and greatest-fixpoints
%\cite{HK97b,CIN05} and defined a pCTL-like $\mu$-calculus.
%Such restrictions could be removed with obligation games.
%Studies of ``general'' probabilistic $\mu$-calculus could complement
%studies of p-automata. 
%In particular, issues such as nominals and counting connectives are
%usually studied in the context of logic.

Finally, many questions regarding the theory of p-automata remain
open.
For instance,
understanding the different transition modes of such automata (i.e.,
alternation vs. nondeterminism vs. determinism) and conversions
between the different modes.
A related question is that of feasibility of algorithmic questions such
as emptiness of p-automata, which generalizes the satisfiability
problem of pCTL.

%\inc{resetmc}
%\inc{reduction}

{
%\scriptsize
%\bibliographystyle{plain}
%\bibliography{/home/np183/bib/ok}

}

\shorten{}{
\clearpage
\onecolumn
\appendices

\input{appendix}

%\inc{intuition}

}

\end{document}